\definecolor{mycolor}{cmyk}{0.9, 0.5, 0, 0}
\titleformat*{\section}{\Large\bfseries\sffamily}
\titleformat*{\subsection}{\large\bfseries\sffamily}
\titleformat*{\subsubsection}{\large\bfseries\sffamily}
\DeclareMathOperator*{\argmin}{argmin}
\renewcommand{\hat}{\widehat}
\renewcommand{\tilde}{\widetilde}
\renewcommand{\check}{\widecheck}
\renewcommand{\bar}{\overline}
\newcommand{\bbE}{\mathbb{E}}
\newcommand{\bbR}{\mathbb{R}}
\newcommand{\calL}{\mathcal{L}}
\newcommand{\calR}{\mathcal{R}}
\numberwithin{equation}{section}
\theoremstyle{definition}
\newtheorem{theorem}{Theorem}[section]
\newtheorem{assumption}{Assumption}[section]
\newtheorem{definition}{Definition}[section]
\newtheorem{corollary}{Corollary}[section]
\newtheorem{example}{Example}[section]
\newtheorem{lemma}{Lemma}[section]
\newtheorem{proposition}{Proposition}[section]
\newtheorem{remark}{Remark}[section]
\numberwithin{figure}{section}
\numberwithin{table}{section}
\title{Dynamic Network Autoregressive Models for Functional Panel Data}
\author{
    Tomohiro Ando\textsuperscript{1} \and 
    Tadao Hoshino\textsuperscript{2}
}
\date{}
\begin{document}

\maketitle

\textsuperscript{1}Melbourne Business School, University of Melbourne, 200 Leicester Street, Carlton, Victoria 3053, Australia. Email: T.Ando@mbs.edu \\

\textsuperscript{2}Corresponding author. School of Political Science and Economics, Waseda University, 1-6-1 Nishi-Waseda, Shinjuku-ku, Tokyo 169-8050, Japan.
Email: thoshino@waseda.jp

\vspace{1em}

\begin{abstract}
	This study proposes a novel functional dynamic network autoregressive framework for analyzing network and dynamic interactions of functional outcomes in panel data settings.
	In this framework, an individual's outcome function is influenced by his/her previous outcome and the outcomes of others through a simultaneous equation system. 
	To estimate the functional parameters of interest, we need to cope with the endogeneity issue arising from these interactions among outcome functions. 
	We address this issue by developing a novel functional moment-based estimator. 
	We establish the consistency, convergence rate, and pointwise asymptotic normality of the proposed estimator. 
	Additionally, we discuss the estimation of marginal effects and functional network impulse responses.
	As an empirical illustration, we analyze the demand for a bike-sharing service in the U.S.
	The results reveal statistically significant spatial interactions in bike availability, with interaction patterns varying over the time of day.
\end{abstract}

\vspace{1em}

\noindent \textbf{Keywords}: functional data analysis, panel data, endogeneity, network autoregressive models, bike-sharing systems

\clearpage

\section{Introduction} \label{sec:introduction}

The availability of functional data has been rapidly expanding across all fields of research, leading to a growing need for statistical tools that appropriately account for the characteristics of each type of functional data.  
In the analysis of socioeconomic data, there are at least two key aspects that should be addressed.  
The first is that an individual's decision or behavioral pattern may be influenced by their own previous outcome and the outcomes of others through social networks -- \textit{outcome interactions} over time and across individuals. 
The second is that individuals are heterogeneous, even after controlling for observable characteristics -- \textit{unobserved heterogeneity} of individuals.  
Therefore, analyzing socioeconomic functional data requires functional models that jointly capture both of these aspects, which is the aim of this study.

More specifically, to account for outcome interactions over time and across units, we extend the dynamic network (or spatial) autoregressive (DNAR) modeling approach to a functional response model.
To address the unobserved individual heterogeneity, we introduce the functional fixed effects approach, given the availability of panel data.
When the response variable is a scalar rather than a function, there already exists a vast body of studies investigating fixed-effect DNAR models for panel data, such as
\begin{align}\label{eq:PNAR}
	Y_{it} = \alpha_0 \sum_{j=1}^n w_{i,j} Y_{jt} + \gamma_0 Y_{i,t-1} + X_{it}^\top\beta_0 + f_{0i} + \varepsilon_{it}, \;\; i = 1, \ldots, n, \; t = 1, \ldots, T 
\end{align}
and its variants (e.g., \citealp{yu2008quasi, lee2010spatial, lee2014efficient, qu2016instrumental, kuersteiner2020dynamic, zhang2025nonlinear}, among others).
Here, $Y_{it}$ is a scalar outcome, $w_{i,j}$ denotes a known weight term measuring the social or geographical proximity between units $i$ and $j$, $X_{it}$ is a vector of covariates, $f_{0i}$ represents a fixed effect specific to each $i$, and $\varepsilon_{it}$ denotes an error term. 
The term $\sum_{j=1}^n w_{i,j}Y_{jt}$ captures the local trend of the outcome variable in the neighborhood of $i$, and the term $Y_{i,t-1}$ captures temporal dependence in $i$'s own outcome.
Model \eqref{eq:PNAR} is typically applied in fields such as health, real estate, transportation, education, and municipal data.  
However, with the increasing availability of functional data in these fields, such as real-time activity recognition, real-time population mobility and congestion patterns, and regional wealth distributions, scalar models like \eqref{eq:PNAR} may fail to appropriately capture the complex nature of these interactions. 

The above discussion motivates us to extend \eqref{eq:PNAR} to the following model: for $s \in [0,1]$,
\begin{align}\label{eq:model}
    Y_{it}(s) =  \alpha_0(s) \sum_{j = 1}^n w_{i,j} A_1(Y_{jt}, s) + \gamma_0(s) A_2(Y_{i,t-1}, s) + X_{it}^\top \beta_0(s) + f_{0i}(s) + \varepsilon_{it}(s), 
\end{align}
where $Y_{it}$ represents the outcome function of interest, which may or may not be a smooth function of $s$, $\alpha_0$ is the network interaction effect function, $\gamma_0$ is the dynamic effect function, $\beta_0$ is a vector of functional coefficients, $ f_{0i}$ is the fixed effect function, and $\varepsilon_{it}$ is the functional error term with mean zero at each $s$. 
Here, $A_k(\cdot , s)$, $k \in \{1,2\}$, denotes a known functional, whose functional form may differ according to the research interest.

Since the response variable is a function, we can consider various types of interaction patterns.
The most typical form of interaction would be the "concurrent" interaction, where only the responses at the same evaluation point $s$ are influential.
In this case, $A_k(\cdot, s)$ is a point-evaluation functional at $s$: $A_k(Y_{it}, s) = Y_{it}(s)$.
When $s$ represents time, past outcomes should affect future outcomes (but the converse should not), which motivates us to employ $A_1(Y_{jt}, s) = \int_0^1 Y_{jt}(u)\nu_1(u,s)\text{d}u$, where $\nu_1(u,s)$ is a user-chosen kernel function that is non-negative, increasing in $u$ up to $s$, and $\nu_1(u,s) = 0$ for $u > s$.
By contrast, because the entire function $Y_{i,t-1}$ is realized before period $t$, the kernel $\nu_2(u,s)$ in $A_2(Y_{i,t-1},s)=\int_0^1Y_{i,t-1}(u)\nu_2(u,s)\text{d}u$ need not satisfy this restriction.
For other examples, if the responses at all evaluation points are equivalently influential, we may use $A_k(Y_{it}, s) = \int_0^1 Y_{it}(u) \text{d}u$.
In this case, $A_k(Y_{it}, s)$ is independent of $s$.
These examples can be represented as an integral operator $A_k(Y_{it}, s) = \int_0^1 Y_{it}(u) \nu_k(u, s)\text{d}u$ with some kernel weight function $\nu_k(u, s)$.
For example, in the case of the point-evaluation functional, we can set $\nu_k(u,s) = \delta(u - s)$, where $\delta$ denotes the Dirac delta function.

Here, we provide three empirical topics to which model \eqref{eq:model} can be effectively applied.

\begin{example}[Health data analysis] 
    In the health literature, researchers have increasingly focused on real-time activity data collected through wearable devices or smartphone apps (see, e.g., \cite{di2024utilizing} for a review).
    As a typical example, $Y_{it}(s)$ represents the activity level of individual $i$, measured by an accelerometer at time $s$ on day $t$.
    Now, suppose the sample consists of elderly individuals, some of whom frequently engage in fitness activities such as running or Tai Chi with others in the same neighborhood.
    We can apply our model to their activity-level data, with $w_{i,j}$ indicating whether individuals $i$ and $j$ live in the same neighborhood.
    In this setting, it would be natural to set $A_1(Y_{jt}, s) = Y_{jt}(s)$ and $A_2(Y_{i,t-1}, s) = Y_{i,t-1}(s)$.
\end{example}

\begin{example}[Demographic data analysis] 
    Demographic analysis is a major application of functional data analysis (FDA).
    For instance, functional analysis of regional age distributions (i.e., \textit{population pyramids}) has been studied extensively (e.g., \citealp{delicado2011dimensionality, hron2016simplicial, hoshino2024functional}).
    Among these studies, \cite{hoshino2024functional} considered a functional spatial autoregressive model in which $Y_i(s)$ represents the $s$-th age quantile of city $i$, allowing interactions with the age quantiles of neighboring cities.
    Another common application in FDA is the analysis of mortality and fertility rates (e.g., \citealp{hyndman2007robust, chen2012modeling}).
    In such a setting, the outcome function $Y_{it}(s)$ may represent the average number of births to women aged $s$ in city $i$ in year $t$.
    When one uses $A_1(Y_{jt}, s) = \int_0^1 Y_{jt}(u)\nu_1(u,s)\text{d}u$ such that $\nu_1(u,s) \neq 0$ for $u$ in the neighborhood of $s$, the presence of regional interactions in fertility within similar age groups would be detected.
\end{example}

\begin{example}[Transportation data analysis] 
    Functional data analysis of transportation data, such as traffic flows and demand for transportation services, has been gaining significant attention (see, e.g., \cite{ma2024network} for recent advancements).
    In the empirical application of this study, we apply our model to analyze the bike use data in a U.S. bike-sharing system.
    In our empirical analysis, $Y_{it}(s)$ represents the availability of bikes at station $i$ at time $s$ during week $t$.
    Further details are provided in Section \ref{sec:empiric}.
\end{example}

In model \eqref{eq:model}, the parameters of primary interest are $\alpha_0$, $\gamma_0$, and $\beta_0$.
With the total time periods $T$ possibly large or small, we apply a first-differencing transformation to eliminate the individual fixed effects from the model.
For the transformed model, rather than estimating $\alpha_0(s)$, $\gamma_0(s)$, and $\beta_0(s)$ pointwise at many different $s$-values separately, we approximate them using orthonormal basis expansions and estimate their entire functional forms jointly in a single estimation.
Our proposed estimator is based on the generalized method of moments (GMM). Specifically, we first derive a set of moment conditions at each $s$, in a similar manner to \cite{lin2010gmm}, and then integrate these conditions numerically over $s \in [0,1]$.
These integrated moment functions define our GMM objective function, and the resulting estimator is referred to as the integrated-GMM estimator.

Note that in model \eqref{eq:model}, the contemporaneous outcome functions appear on both the left- and right-hand sides, implying that it is formulated as a system of simultaneous functional equations. 
Depending on the true values of the functional parameters, the model may exhibit an explosive interaction process, leading to inconsistency of the proposed estimator.
Thus, we first derive conditions for the \textit{completeness} and stability of our model.
We consequently show that the magnitudes of dynamic and network interactions must lie within a certain range.
Then, under these conditions, along with some regularity conditions, we derive the convergence rates of the integrated-GMM estimators for $\alpha_0$, $\gamma_0$, and $\beta_0$.  
In addition, we prove that the estimators are asymptotically normal at each evaluation point $s$.  
These theoretical results are numerically corroborated through a series of Monte Carlo experiments.

As an empirical illustration, we apply our method to demand analysis for a bike-sharing system in the San Francisco Bay Area, U.S.
Using publicly available data from \textit{Bay Area Bike Share}, we study dynamic spatial interactions in bike availability across 70 stations from May 2014 to August 2015.
Our results reveal significantly positive spatial interactions in bike availability during the morning hours, while negative interactions emerge in the early evening.
Importantly, when we estimate a non-functional model \eqref{eq:PNAR} by averaging bike availability
over time, the estimated $\alpha_0$ is not statistically significant.
This highlights that ignoring the functional nature of the data may lead to misleading policy implications.
Furthermore, we conduct a network impulse response analysis to demonstrate how a reduction in bike availability at a given station propagates over time of day to nearby stations.
These findings underscore the importance of spatial interactions in shared mobility services and demonstrate the practical applicability of our method.

\bigskip

Our paper relates to a broad range of theoretical and empirical literature.  
From a theoretical perspective, our study contributes to both the FDA literature and the network/spatial interactions literature by proposing a new model that connects the two.  
In this sense, one of the most closely related studies to ours is \citet{zhu2022network}. 
They proposed a functional NAR model similar to a static version of \eqref{eq:model} for cross-sectional data. 
In contrast to \citet{zhu2022network}, our GMM estimator requires neither parametric assumptions nor IID conditions for the disturbance term.  
This weak requirement arises from the fact that we treat the individual effects as parameters, whereas \citet{zhu2022network} perform functional principal component analysis to control them based on some homogeneity condition.  
Moreover, they considered only a concurrent interaction case (i.e., $A_1(Y_{jt}, s) = Y_{jt}(s)$).  
As described earlier, the variable $s$ typically represents a time on some scale.\footnote{
    In this case, both $s$ and $t$ represent time, but they capture different time scales.
    Typically, $s$ is used to describe short-term cycles, such as intraday stock price movements, or daily variation in temperature.
    In contrast, $t$ captures more macro-level trends and thus corresponds to a longer time scale.
}
The concurrent interaction rules out interactions even with immediate past outcomes and allows only strictly simultaneous interactions, which should limit the interpretability of the model.  
Computationally, our estimator can recover the full functional forms of the functional parameters in a single step, while the estimator in \citet{zhu2022network} must be repeatedly applied at each evaluation point $s$.

On the empirical side, demand forecasting for bike-sharing systems has been an active topic in the data science literature (e.g., \citealp{faghih2016incorporating, lin2018predicting, eren2020review, torti2021modelling}, among others).
Among these studies, \cite{faghih2016incorporating} is most closely related to our study in that they employed a spatial panel model similar to \eqref{eq:PNAR} to analyze the spatial and temporal interaction structure for the bike-sharing system in New York City, CitiBike. 
In their approach, however, the data are not treated as functional, and thus the model parameters are not allowed to vary over the time of day.
By contrast, \cite{torti2021modelling} analyzed the flow of bikes in the bike-sharing system in Milan, BikeMi, through a functional linear model with functional coefficients; however, they did not account for the spatial interactions of mobility.
Thus, our empirical study can be viewed as combining the strengths of these two papers.

\paragraph{Paper organization}

The rest of the paper is organized as follows.
In Section \ref{sec:model}, we present our DNAR model and discuss its completeness and stability conditions.
Section \ref{sec:estimation} introduces our integrated-GMM estimator and investigates its asymptotic properties.
In Section \ref{sec:impulse}, we discuss additional topics related to our model, including the estimation of marginal effects and network impulse response analysis.
Section \ref{sec:MC} conducts a set of Monte Carlo simulations to numerically demonstrate the properties of our estimator.
Section \ref{sec:extension} discusses several extensions that should be useful in empirical applications.
Section \ref{sec:empiric} presents our empirical analysis on the U.S. bike-sharing data, and Section \ref{sec:conclusion} concludes.
Proofs of technical results, additional empirical material, and the programming code used in the numerical studies are provided in the online supplementary material.

\paragraph{Notation}

For a function $h$ defined on $[0,1]$ and $p \in [1,\infty)$, the $L^p$ norm of $h$ is written as $||h||_{L^p} \coloneqq (\int_0^1 |h(s)|^p \text{d}s)^{1/p}$, and $L^p(0,1)$ denotes the set of $h$'s such that $||h||_{L^p} < \infty$.
For a random variable $X$, the $L^p$ norm of $X$ is written as $||X||_p \coloneqq (\bbE|X|^p)^{1/p}$.
For a matrix $M$, $|| M ||$, $||M||_1$, and $||M||_\infty$ denote the Frobenius norm, the maximum absolute column sum, and the maximum absolute row sum of $M$, respectively.
If $M$ is a symmetric matrix, we use $\lambda_{\max} (M)$ and $\lambda_{\min} (M)$ to denote its largest and smallest eigenvalues, respectively.
For a positive integer $Z$, we denote $[Z] \coloneqq \{1, \ldots, Z\}$.
We use $I_Z$ to denote an identity matrix of dimension $Z$.
Finally, $X \lesssim Y$ if $X = O(Y)$ almost surely, and $X \lesssim_P Y$ if $X = O_P(Y)$.

\section{Functional Dynamic Network Autoregressive Model}\label{sec:model}
\subsection{The model}

Suppose that we have balanced panel data of size $(n, T)$: $\{(Y_{i0}, Y_{it}, X_{i0}, X_{it}, w_{i,1}, \ldots, w_{i,n}): i \in [n], \: t \in [T]\}$.
The number of time periods $T \ge 2$ can be either fixed or tending to infinity jointly with the sample size $n$.
Here, $Y_{it}: [0,1] \to \mathbb{R}$ denotes a random outcome function of interest with the common support $[0,1]$, $X_{it} = (X_{it}^1, \ldots, X_{it}^{d_x})^\top$ denotes a vector of covariates, and $w_{i,j} \in \mathbb{R}$ is the $(i,j)$-th element of an $n \times n$ time-invariant interaction matrix $W_n = (w_{i,j})$.
Throughout, we assume that the functional outcome $Y_{it}(s)$ is fully observed over $s \in [0,1]$.
In our empirical application, $Y_{it}(s)$ represents the number of available bikes at station $i$, where $s$ denotes the time of day.
The data record the timing of all pickups and returns at each station, which implies the complete observability of $Y_{it}$ as a function of $s$.
As in the case of bike availability, we do not rule out situations in which $Y_{it}(s)$ is discontinuous in $s$.
The covariates $X_{it}$ do not include a constant term and vary with both $i$ and $t$.
The value of each $w_{i,j}$ is pre-determined non-randomly.
In social network analysis, it is common to set $w_{i,j} = c_{i,j} \bm{1}\{\text{$i$ and $j$ are peers}\}$, where $c_{i,j}$ is some normalizing constant.
Similarly, if each $i$ represents a spatial unit, one may use $w_{i,j} = c_{i,j} \bm{1}\{\Delta(i,j) \le \bar \Delta\}$, where $\Delta(i,j)$ is the distance between $i$ and $j$, and $\bar \Delta$ is a given threshold. 
As is the convention, we set $w_{i,i} = 0$ for all $i$ for normalization.

As shown in \eqref{eq:model}, our working model is given as follows: for $s \in [0,1]$,
\begin{align}
    Y_{it}(s) = \alpha_0(s) A_1(\bar Y_{it}, s) + \gamma_0(s) A_2(Y_{i,t-1}, s) + X_{it}^\top \beta_0(s) + f_{0i}(s) + \varepsilon_{it}(s), \;\; i \in [n], \;\; t \in [T]
\end{align}
where $\bar Y_{it} = \sum_{j=1}^n w_{i,j} Y_{jt}$.
Considering analytical transparency and interpretational simplicity in empirical applications, for $k \in \{1,2\}$, we assume $A_k(\cdot, \cdot)$ is linear in its first argument so that we have $\sum_{j=1}^n w_{i,j} A_1(Y_{jt}, \cdot) = A_1(\bar Y_{it}, \cdot)$.
While this linear specification provides a reasonable starting point, more complicated forms are discussed in Subsection \ref{subsec:nonlinear}. 

The parameters of primary interest are the interaction effect functions $\alpha_0(s)$ and $\gamma_0(s)$, and the coefficient functions $\beta_0 (s)= (\beta_{01}(s), \ldots, \beta_{0d_x}(s))^\top$.
The functional individual effects $f_{01}(s), \ldots, f_{0n}(s)$ are treated as nuisance parameters.
Restricting the support of $s$ to be a unit interval is a normalization, which is harmless as long as the response functions have the same interval support.
For simplicity, we do not explicitly assume that $X_{it}$ is a function of $s$, which can be relaxed easily at the expense of more complicated notation and proofs.
In the rest of this section, we assume that $Y_{it} \in L^2(0,1)$, $\varepsilon_{it} \in L^2(0,1)$, and that $\alpha_0$, $\gamma_0$, and $\beta_0$ are continuous functions.
To derive the asymptotic properties of our estimator, we impose stronger regularity conditions later.

\subsection{Completeness and stability}

We discuss the \textit{completeness} of our model (in the sense of \citealp{tamer2003incomplete, lewbel2007coherency}) and the stability of the dynamic process.
Since the model is a system of simultaneous functional equations, it may not have a unique interior solution in general, depending on the parameter values.
Moreover, even when the simultaneous equation system has a unique solution at each $t$, the dynamic process may exhibit explosive behavior.
Thus, we need to impose conditions ensuring both that the outcome functions follow a unique data-generating process and that the dynamic process is stable.

Let $Y_t(s) = (Y_{1t}(s), \ldots, Y_{nt}(s))^\top$, $\bm{A}_k(Y_t, s) = (A_k(Y_{1t}, s), \ldots, A_k(Y_{nt}, s))^\top$, $X_t = (X_{1t}, \ldots, X_{nt})^\top$, $F_0(s) = (f_{01}(s), \ldots, f_{0n}(s))^\top$, and $\mathcal{E}_t(s) = (\varepsilon_{1t}(s), \ldots, \varepsilon_{nt}(s))^\top$.
Then, we can re-write \eqref{eq:model} in matrix form as
\begin{align}
    Y_t(s) = \alpha_0(s) W_n \bm{A}_1(Y_t, s) + \gamma_0(s) \bm{A}_2(Y_{t-1}, s) + X_t\beta_0(s) + F_0(s) + \mathcal{E}_t(s), \;\; t \in [T].
\end{align}
Now, denote $\mathcal{H}_{n,p} \coloneqq \{H = (h_1, \ldots, h_n) : h_i \in L^p(0,1), \; i \in [n]\}$, and define linear operators $\mathcal{A}_1$ and $\mathcal{A}_2$ as
\begin{align}
    (\mathcal{A}_1 H)(s) \coloneqq \alpha_0(s) W_n \bm{A}_1(H, s) \;\; \text{and} \;\; (\mathcal{A}_2 H)(s) \coloneqq \gamma_0(s) \bm{A}_2(H, s),
\end{align}
respectively, for $H \in \mathcal{H}_{n,p}$.
Then, we can write our model symbolically as follows: $Y_t = \mathcal{A}_1 Y_t + \mathcal{A}_2 Y_{t-1} + X_t\beta_0 + F_0 + \mathcal{E}_t$.
Further, let $\text{Id}$ denote the identity operator.
If the inverse operator $(\text{Id}-\mathcal{A}_1)^{-1}$ exists, define $\mathcal{A}_3\coloneqq(\text{Id}-\mathcal{A}_1)^{-1}\mathcal{A}_2$.
Then, the solution $Y_t$ of the system can be uniquely determined up to an equivalence class in $\mathcal{H}_{n,2}$ as
\begin{align}
    Y_t
    & = \mathcal{A}_3 Y_{t-1} + (\text{Id} - \mathcal{A}_1)^{-1}[X_t\beta_0 + F_0 + \mathcal{E}_t] \\
    & = \mathcal{A}_3^t Y_0 + \sum_{\ell = 0}^{t-1} \mathcal{A}_3^\ell (\text{Id} - \mathcal{A}_1)^{-1}[X_{t-\ell}\beta_0 + F_0 + \mathcal{E}_{t-\ell}].
\end{align}
From this expression, we can see that the existence of $(\text{Id}-\mathcal{A}_1)^{-1}$ ensures completeness, while stability is ensured if $\mathcal{A}_3$ is a contraction mapping.

\begin{assumption}[Completeness and stability]\label{as:inverse}
    (i) $\bar \alpha_0 \lesssim 1$, $\bar \gamma_0 \lesssim 1$, and $||W_n||_\infty \lesssim 1$ uniformly in $n \ge 1$ such that $\bar\alpha_0 \|W_n\|_\infty + \bar\gamma_0 < 1$, where $\bar \alpha_0 \coloneqq \max_{s \in [0,1]} |\alpha_0(s)|$, and $\bar \gamma_0 \coloneqq \max_{s \in [0,1]} |\gamma_0(s)|$.
    (ii) For any $h \in L^2(0,1)$ and $k \in \{1,2\}$, $||A_k(h,\cdot)||_{L^2} \le ||h||_{L^2}$.
\end{assumption}

Assumption \ref{as:inverse}(i) requires that the magnitudes of network and dynamic interactions are not too strong.
With Assumption \ref{as:inverse}(ii), we have for any $h,h' \in L^2(0,1)$ that $||A_k(h - h',\cdot)||_{L^2} \le ||h - h'||_{L^2}$, which indicates the non-expansive property of the operator $A_k$. 
This assumption still accommodates many empirically interesting interaction patterns.
For example, in the case of the point-evaluation functional $A_k(h,s) = h(s)$, it trivially satisfies $||A_k(h,\cdot)||_{L^2} = ||h||_{L^2}$.
For another example, suppose $A_k(h,s) = \int_0^1 h(u) \nu_k(u,s) \text{d}u$ for some continuous $\nu_k$.
Since
\begin{align}
    ||A_k(h,\cdot)||_{L^2}^2 \le  \int_0^1 \int_0^1 \left| h(u) \nu_k(u,s) \right|^2 \text{d}u \text{d}s \le \bar \nu_k^2 ||h||_{L^2}^2,
\end{align}
where $\bar \nu_k \coloneqq \max_{u,s \in [0,1]^2} |\nu_k(u,s)|$, Assumption \ref{as:inverse}(ii) is implied if $\bar \nu_k \le 1$ holds.

\begin{proposition}\label{prop:stationarity}
    Suppose that Assumption \ref{as:inverse} holds.
    Then, $(\text{Id} - \mathcal A_1)^{-1}$ exists, $\mathcal A_3$ is a contraction mapping on $\mathcal H_{n,2}$, and model \eqref{eq:model} admits a unique solution sequence $\{Y_t: t \in [T]\}$ in the Banach space $(\mathcal H_{n,2}, \|\cdot\|_{\infty,2})$, where $\|H\|_{\infty, p} \coloneqq \max_{1 \le i \le n}\|h_i\|_{L^p}$.
\end{proposition}

The proof of Proposition \ref{prop:stationarity} is given in Appendix \ref{app:proofs}. 
Note that the explicit form of the inverse operator $(\text{Id} - \mathcal{A}_1)^{-1}$ cannot be derived in general, except for some simple cases such as $(\mathcal{A}_1 Y_t)(s) = \alpha_0(s) W_n Y_t(s)$.
In this case, $(\text{Id} - \mathcal{A}_1)^{-1}$ is obtained as $(I_n - \alpha_0(\cdot) W_n)^{-1}$.
However, in practice, we can approximate it with arbitrary precision by truncating the Neumann series expansion $(\text{Id} - \mathcal{A}_1)^{-1} = \sum_{\ell = 0}^\infty \mathcal{A}_1^\ell$ at a sufficiently large order (see, e.g., \citealp{Kress2014linear}).

Finally, we note that Assumption \ref{as:inverse} is only one type of sufficient condition for the existence of $(\text{Id} - \mathcal{A}_1)^{-1}$.
Condition (i) can be relaxed by allowing $\alpha_0(s)$ to exhibit locally explosive network dependence for some $s \in [0,1]$, provided that the operator $A_1$ satisfies a condition more restrictive than condition (ii); see Subsection \ref{subsub:alt} in Appendix \ref{app:proofs} for details.
Accordingly, unlike standard non-functional NAR models, the completeness of our model can still be satisfied even when the network effect $\alpha_0(s)$ takes locally extreme values.
That said, we adopt the current form of Assumption \ref{as:inverse} for technical convenience in the subsequent analysis.

\section{Estimation and Asymptotic Theory}\label{sec:estimation}

\subsection{Integrated-GMM estimation}\label{subsec:gmm}

To estimate the unknown functional parameters, there are broadly two approaches.
The first is a "local" approach that estimates the values of these functions at specific $s$-values, repeating the estimation across different points to recover the full functional forms.
The second is a "global" approach that estimates the entire functional forms in a single step using a series approximation method.
Although both approaches are theoretically valid, the local approach typically requires more computation time and often leads to larger variance (but smaller bias) because it does not exploit information from nearby evaluation points.
This study adopts the global approach.

Let $\{\phi_k: k = 1,2, \ldots\}$ be a series of orthonormal basis functions.
Throughout, we assume that $\phi_k$'s are continuous on $[0,1]$.
Then, if the functions $\alpha_0$, $\gamma_0$ and $\beta_0$ are sufficiently smooth, we can approximate 
\begin{align}
    \alpha_0(s) \approx \sum_{k = 1}^K \phi_k(s) \theta_{0\alpha, k}, \;\; 
    \gamma_0(s) \approx \sum_{k = 1}^K \phi_k(s) \theta_{0\gamma, k}, \;\; \text{and} \;\;
    \beta_{0j}(s) \approx \sum_{k = 1}^K \phi_k(s) \theta_{0j, k}, \;\; j \in [d_x],
\end{align}
uniformly in $s \in [0,1]$, for some coefficient vectors $\theta_{0\alpha} = (\theta_{0\alpha, 1}, \ldots, \theta_{0\alpha, K})^\top$, $\theta_{0\gamma} = (\theta_{0\gamma, 1}, \ldots, \theta_{0\gamma, K})^\top$, and $\theta_{0j} = (\theta_{0j, 1}, \ldots, \theta_{0j, K})^\top$, $j \in [d_x]$.
Here, $K \equiv K_{nT}$ is a sequence of positive integers tending to infinity as $nT$ increases.
For simplicity of presentation, we use the same basis function $\phi_k$ and the same basis order $K$ to approximate $\alpha_0$, $\gamma_0$, and $\beta_0$.
Define $\theta_0 = (\theta_{0\alpha}^\top, \theta_{0\gamma}^\top, \theta_{01}^\top, \ldots, \theta_{0d_x}^\top)^\top$, $\phi^K(s) = (\phi_1(s), \ldots, \phi_K(s))^\top$,
\begin{align}
    R_{it}(s) \coloneqq (A_1(\bar Y_{it}, s), A_2(Y_{i,t-1}, s), X_{it}^\top)^\top,  \;\;
    H_{it}(s) \coloneqq R_{it}(s) \otimes \phi^K(s), \;\; \text{and} \;\; 
    H_t(s) = (H_{1t}(s), \ldots, H_{nt}(s))^\top.
\end{align}
Then, we can further re-write the model in \eqref{eq:model} as
\begin{align}
    Y_t(s) = H_t(s) \theta_0 + F_0(s) + V_t(s) + \mathcal{E}_t(s), \;\; t \in [T].
\end{align}
Here, $V_t(s) = (v_{1t}(s), \ldots, v_{nt}(s))^\top$ is an $n \times 1$ vector of series approximation errors:
\begin{align}\label{eq:approxer}
    v_{it}(s)
    & \coloneqq A_1(\bar Y_{it}, s)\{\alpha_0(s) - \phi^K(s)^\top \theta_{0\alpha} \} + A_2(Y_{i,t-1}, s)\{\gamma_0(s) - \phi^K(s)^\top \theta_{0\gamma} \} \\
    & \quad + \sum_{j = 1}^{d_x} X_{it}^j \{\beta_{0j}(s) - \phi^K(s)^\top \theta_{0j}\}. 
\end{align}
Under the assumptions we will introduce, this approximation error diminishes to zero at a certain rate as $K$ goes to infinity. 
How to choose an appropriate $K$ will be discussed in Remark \ref{rem:K}. 

Further, let
\begin{align}
    N \coloneqq n(T - 1),
\end{align}
\begin{align}
    \bm{Y}(s) = \left(\begin{array}{c}
        Y_1(s) \\
        \vdots \\
        Y_T(s)
    \end{array}\right), \quad
    \bm{H}(s) = \left(\begin{array}{c}
        H_1(s) \\
        \vdots \\
        H_T(s)
    \end{array}\right), \quad
        \bm{V}(s) = \left(\begin{array}{c}
        V_1(s)  \\
        \vdots \\
        V_T(s) 
    \end{array}\right), \quad
        \bm{\mathcal{E}}(s) = \left(\begin{array}{c}
        \mathcal{E}_1(s)  \\
        \vdots \\
        \mathcal{E}_T(s) 
    \end{array}\right),
\end{align}
and $\underset{N \times nT}{\bm{D}} = (d_{ij})$ be the one-period lag operator, whose $(i,j)$-th element is defined as
\begin{align}
    d_{ij} = \begin{cases}
    -1 & \text{if} \;\; i = j \\
    1 & \text{if} \;\;  n + i = j \\
    0 & \text{otherwise}
    \end{cases}
\end{align}
Then, we can remove the unknown fixed effects from the model in the following manner:
\begin{align}
    \bm{D} \bm{Y}(s)
    & = \bm{D} \bm{H}(s) \theta_0 + \bm{D} \bm{V}(s) + \bm{D} \bm{\mathcal{E}}(s).
\end{align}
We estimate $\theta_0$ based on this expression.
In order to consistently estimate $\theta_0$, we need to address the endogeneity issue caused by the network and dynamic interactions of the response functions.
Specifically, $A_1(\bar Y_{it},s)$ is endogenous due to simultaneity, while one-period differencing makes $A_2(Y_{i,t-1},s)$ endogenous through the correlation with $\varepsilon_{i,t-1}(s)$.
Hence, simply regressing $\bm{D} \bm{Y}(s)$ on $\bm{D} \bm{H}(s)$ does not yield a consistent estimate of $\theta_0$.
To tackle this issue, we employ an instrumental variable (IV) approach.

Suppose that we have a $d_q \times 1$ vector of IVs $Q_{it} = (Q_{it}^1, \ldots, Q_{it}^{d_q})^\top$ for $A_1(\bar Y_{it}, s)$ and $A_2(Y_{i,t-1}, s)$.
Define
\begin{align}
    B_{it} \coloneqq (Q_{it}^\top, X_{it}^\top)^\top,  \;\;
    Z_{it}(s) \coloneqq B_{it} \otimes \phi^K(s), \;\; 
    Z_t(s) = (Z_{1t}(s), \ldots, Z_{nt}(s))^\top, \;\; \text{and} \;\; 
    \bm{Z}(s) = \left(\begin{array}{c}
        Z_1(s)  \\
        \vdots \\
        Z_T(s) 
    \end{array}\right).
\end{align}
Then, we have the \textit{linear} moment conditions 
\begin{align}
    \bbE[ \bm{Z}(s)^\top \bm{D}^\top \bm{D}  \bm{\mathcal{E}}(s)] = \bm{0}_{(d_q + d_x)K}.
\end{align}
Noting that $W_n Y_t = W_n \mathcal{A}_1 Y_t + W_n \mathcal{A}_2 Y_{t-1} + W_n X_t \beta_0 + W_n F_0 + W_n \mathcal{E}_t$, we can find that the network lagged covariates $\bar X_{it} \coloneqq \sum_{j = 1}^n w_{i,j} X_{jt}$ (and also their lags) are valid IV candidates for $A_1(\bar Y_{it}, s)$. 
However, when $X_{it}$ is locally homogeneous, $\bar X_{it}$ and $X_{it}$ become strongly correlated, which leads to a weak IV problem.
Moreover, even when they are not strongly correlated, if the impact of $X_{it}$ on $Y_{it}$ is small (i.e., $\beta_0 \approx 0$), a weak IV problem may still occur.
As IV candidates for $A_2(Y_{i,t-1},s)$, we can employ the lagged covariates $X_{i,t-1}$.

Although one can estimate $\theta_0$ based on the linear moment conditions only, which results in a two-stage least squares (2SLS) type estimator, we can utilize additionally the \textit{quadratic} moment conditions to improve the efficiency of estimation (see, e.g., \citealp{lin2010gmm}).
That is, under the independence assumption on the error terms $\{\varepsilon_{it}(s)\}_{i \in [n], t \in [T]}$ (Assumption \ref{as:error}(i) below), for any $N \times N$ matrices $P_m \coloneqq I_{T-1} \otimes P_{m,1}$, where $P_{m,1}$ ($m = 1, \ldots , M$) is an $n \times n$ matrix whose diagonal elements are all zero, we have\footnote{
    Throughout the paper, we assume that $M$ is fixed.
    However, as pointed out by a referee, it would be interesting to consider an increasing number of quadratic moment conditions by setting, for example, $P_{m,1} = W_n^m - \text{diag}(W_n^m)$.
    The quadratic moments constructed in this way are strongly correlated with each other, and the contribution of higher-order quadratic moments may be marginal.
    This setting corresponds to GMM with many potentially weak moments (e.g., \citealp{han2006gmm}), for which a separate and careful discussion is required.
    Thus, we leave this issue for future research.
}
\begin{align}
   \bbE[ \bm{\mathcal{E}}(s)^\top \bm{D}^\top P_m \bm{D} \bm{\mathcal{E}}(s) ] = 0, \;\; m \in [M].
\end{align}
Some examples of $P_{m,1}$ include $P_{m,1} = W_n$ and $P_{m,1} = W_n^\top W_n - \text{diag}(W_n^\top W_n)$.
The use of quadratic moment conditions is particularly important when the weakness of the IV $\bm Z(s)$ is a concern.
As discussed in \cite{yang2025estimation} and also demonstrated in our numerical results, quadratic moment conditions can help mitigate weak identification when $\bm Z(s)$ are not strong IVs.

Combining the linear and quadratic moment conditions, we can construct our estimator based on the following $d_g \coloneqq (d_q + d_x)K + M$ moment conditions: for $s \in [0,1]$,
\begin{align}
    \frac{1}{N} \left(\begin{array}{c}
        \bbE[ \bm{Z}(s)^\top \bm{D}^\top \bm{D}  \bm{\mathcal{E}}(s)] \\
        \bbE[ \bm{\mathcal{E}}(s)^\top \bm{D}^\top P_1 \bm{D} \bm{\mathcal{E}}(s)] \\
        \vdots \\
        \bbE[ \bm{\mathcal{E}}(s)^\top \bm{D}^\top P_M \bm{D} \bm{\mathcal{E}}(s)]
    \end{array}
    \right) = \bm{0}_{d_g}.
\end{align}
As the empirical counterpart of these moment conditions, given a candidate value $\theta$ for $\theta_0$, we define
\begin{align}
    \underset{d_g \times 1}{g_{nT}(s; \theta)} \coloneqq
    \frac{1}{N}\left(\begin{array}{c}
        \bm{Z}(s)^\top \bm{D}^\top \bm{D} \bm{E}(s; \theta) \\
        \bm{E}(s; \theta)^\top \bm{D}^\top P_1 \bm{D} \bm{E}(s; \theta) \\
        \vdots \\
        \bm{E}(s; \theta)^\top \bm{D}^\top P_M \bm{D} \bm{E}(s; \theta)
    \end{array}
    \right),
\end{align}
where $\bm{E}(s; \theta) \coloneqq \bm{Y}(s) - \bm{H}(s)\theta$.
Since we have a set of moment conditions that holds continuously over $s \in [0,1]$, to construct our GMM estimator, we numerically integrate these conditions.
Specifically, for a set of pre-specified $L \equiv L_{nT} \ge K$ grid points in $[0,1]$, denoted by $0 \le s_1 \le \cdots \le s_L \le 1$, we define
\begin{align}
    \bar{g}_{nT}(\theta) \coloneqq \frac{1}{L} \sum_{l=1}^L g_{nT}(s_l; \theta).
\end{align}
Remark \ref{rem:grid} below provides practical guidance on how to choose the grid $\{s_l\}$.
Situations where the response functions are not fully observable are discussed in Subsection \ref{sub:incomplete}.

Now, we are ready to introduce our estimator:
\begin{align}\label{eq:gmm}
    \begin{split}
    \hat \theta_{nT} = (\hat \theta_{nT, \alpha}^\top, \hat \theta_{nT, \gamma}^\top, \hat \theta_{nT, 1}^\top, \ldots, \hat \theta_{nT, d_x}^\top)^\top 
    & \coloneqq \argmin_{\theta \in \Theta_K} \mathcal{Q}_{nT}(\theta),  \\
    \text{where} \;\; \mathcal{Q}_{nT}(\theta)
    & \coloneqq \bar g_{nT}(\theta)^\top \Omega_{nT} \bar g_{nT}(\theta),
    \end{split}
\end{align}
$\Omega_{nT}$ is a $d_g \times d_g$ positive definite symmetric weight matrix.
For each $K \ge  1$, $\Theta_K \subset \mathbb{R}^{(d_x + 2)K}$ is a compact parameter space containing $\theta_0$ in its interior.
Moreover, we assume that $\phi^K(s)^\top \theta_\alpha$, $\phi^K(s)^\top \theta_\gamma$, and $\phi^K(s)^\top \theta_j$ are uniformly bounded over $s \in [0,1]$, $\theta = (\theta_\alpha^\top, \theta_\gamma^\top, \theta_1^\top, \ldots, \theta_{d_x}^\top)^\top \in \Theta_K$, and $K$.
For one example of the weight matrix, we can use $\Omega_{nT} = I_{d_g}$. 
Another example is the 2SLS-type weight matrix:
\begin{align}\label{eq:2slsweight}
    \Omega_{nT} = \left( \begin{array}{cc}
        \left( \sum_{l = 1}^L \bm{Z}(s_l)^\top \bm{D}^\top \bm{D} \bm{Z}(s_l) / (NL)\right)^{-1} & \bm{0}_{(d_q + d_x)K \times M} \\
        \bm{0}_{M \times (d_q + d_x)K} & I_M
    \end{array} \right).
\end{align}
Once $\hat \theta_{nT}$ is obtained, the estimators of $\alpha_0(s)$, $\gamma_0(s)$, and $\beta_0(s)$ are given as
\begin{align}
    \hat \alpha_{nT}(s) \coloneqq \phi^K(s)^\top \hat \theta_{nT, \alpha}, \;\; 
    \hat \gamma_{nT}(s) \coloneqq \phi^K(s)^\top \hat \theta_{nT, \gamma}, \;\; \text{and} \;\;
    \hat \beta_{nT,j}(s)
    \coloneqq \phi^K(s)^\top \hat \theta_{nT, j}, \;\; j \in [d_x]
\end{align}
which we refer to as the integrated-GMM estimators.  

\subsection{Asymptotic theory}

To derive the asymptotic properties of our estimator, we first need to specify the structure of our sampling space.
Following \cite{jenish2012spatial}, let $\mathcal{D} \subset \bbR^d$ be a possibly uneven lattice, and $\mathcal{D}_n \subset \mathcal{D}$ be the set of observation locations. 
Once the observation locations are determined for a given sample of size $n$, we assume that they do not vary over time $t$.
For spatial data, $\mathcal{D}$ would be defined by a geographical space with $d = 2$.\footnote{
    Note that $\mathcal{D}$ does not necessarily have to be exactly observable to us.
    For example, $\mathcal{D}$ is possibly a complex space of general social and economic characteristics. In this case, we can consider it to be an embedding of individuals in a latent space, rather than their physical locations.
    }

We first derive the rates of convergence of our estimator under the following set of assumptions.

\begin{assumption}[Sampling space]\label{as:sample_space}
    (i) The maximum coordinate difference (i.e., the Chebyshev distance) between any two observations $i,j \in \mathcal{D}$, which we denote as $\Delta(i,j)$, is at least (without loss of generality) 1; and
    (ii) a threshold distance $\bar \Delta$ exists such that $w_{i,j} = 0$ if $\Delta(i,j) > \bar \Delta$.
\end{assumption}

\begin{assumption}[Observables]\label{as:observables}
    (i) $\{(X_{i0}, Y_{i0})\}_{i \in [n]}$ and $\{(X_{it}, Q_{it})\}_{i \in [n], t \in [T]}$ are treated as non-stochastic and uniformly bounded; and
    (ii) for all $s \in [0,1]$, $i \in [n]$, and $t \in [T]$, $||Y_{it}(s)||_p \lesssim 1$ for some $p > 4$.
\end{assumption}

\begin{assumption}[Error term]\label{as:error}
    (i) $\{\varepsilon_{it}\}_{i \in [n],  t \in [T]}$ are independent; 
    (ii) for all $s \in [0,1]$, $i \in [n]$, and $t \in [T]$, $\bbE [\varepsilon_{it}(s)] = 0$, $||\varepsilon_{it}(s)||_2 > 0$, and $||\varepsilon_{it}(s)||_4 \lesssim 1$; and
    (iii) for all $i \in [n]$ and $t \in [T]$, $\sum_{k=1}^K \left( L^{-2}\sum_{l = 1}^L \sum_{l' = 1}^L \Gamma_{it}(s_l, s_{l'}) \phi_k(s_l) \phi_k(s_{l'})\right) \lesssim 1$ uniformly in $K$, where $\Gamma_{it}(s_l, s_{l'}) \coloneqq \text{Cov}(\varepsilon_{it}(s_l), \varepsilon_{it}(s_{l'}))$.
\end{assumption}

\begin{assumption}[Interaction operators]\label{as:A}
    For each $k \in \{1,2\}$, there exists a function $\omega_p$ satisfying $|A_k(h, s)|^p \le \int_0^1 |h(u)|^p \: \omega_p(u, s)\text{d}u$ for any given $1 \le p < \infty$, such that $\int_0^1 \omega_p(u, s)\text{d}u \le 1$ for all $s \in [0,1]$. 
\end{assumption}

\begin{assumption}[Weight matrices]\label{as:weights}
    (i) For all $m \in [M]$, $P_{m,1}$ is symmetric, $\text{diag}(P_{m,1}) = \bm{0}_n$, and $||P_{m,1}||_1 \: (= ||P_{m,1}||_\infty) \lesssim 1$.
    In addition, writing $P_{m,1} = (p_{m,i,j})$, a threshold distance $\bar \Delta_m$ exists such that $p_{m,i,j} = 0$ if $\Delta(i,j) > \bar \Delta_m$; and
    (ii) the GMM weight matrix, including the special case of the 2SLS-type weight \eqref{eq:2slsweight}, satisfies $0 < \liminf_{nT \to \infty} \lambda_{\min}(\Omega_{nT}) \le \limsup_{nT \to \infty} \lambda_{\max}(\Omega_{nT}) < \infty$.
\end{assumption}

\begin{assumption}[Identification]\label{as:matrix1}
    $0 < \liminf_{nT \to \infty} \lambda_{\min}\left(\Pi_{nT}^\top \Pi_{nT}\right) \le \limsup_{nT \to \infty} \lambda_{\max}\left(\Pi_{nT}^\top \Pi_{nT}\right) < \infty$ uniformly in $K$, where $\Pi_{nT} \coloneqq (NL)^{-1} \sum_{l = 1}^L \bm{Z}(s_l)^\top \bm{D}^\top \bm{D} \bbE[\bm{H}(s_l)]$.
\end{assumption}

\begin{assumption}[Series approximation]\label{as:basis}
    $\{\phi_k: k = 1,2, \ldots\}$ is a series of continuous orthonormal basis functions satisfying $\sup_{s \in [0,1]} \|\phi^K(s)\| \lesssim \sqrt{K}$, $\sup_{s \in [0,1]} |\alpha_0(s) - \phi^K(s)^\top \theta_{0\alpha}| \lesssim K^{-\varsigma}$, $\sup_{s \in [0,1]} |\gamma_0(s) - \phi^K(s)^\top \theta_{0\gamma}| \lesssim K^{-\varsigma}$, and $\max_{j \in [d_x]} \sup_{s \in [0,1]}|\beta_{0j}(s) - \phi^K(s)^\top \theta_{0j}| \lesssim K^{-\varsigma}$.
\end{assumption}

Assumptions \ref{as:sample_space}(i) and (ii) together imply that the number of interacting partners for each unit is bounded (i.e., the network must be sparse).
These assumptions play a crucial role in characterizing the stochastic process of the outcome functions.
In Assumption \ref{as:observables}, part (i) assumes that the initial outcomes and covariates are non-stochastic and bounded.
This type of assumption is frequently utilized in the spatial and network literature and can be interpreted as viewing the analysis conditional on the realized values of these variables.
Meanwhile, part (ii) is introduced to ensure some convergence results for the quadratic moments.

Assumption \ref{as:error}(i) allows the error terms to be fully heteroskedastic.
Part (ii) should be standard.
Part (iii) is a high-level condition, which plays an important role in obtaining the parametric convergence rate for the GMM estimator. 
This assumption essentially requires the covariance operator $\Gamma_{it}$ to be \textit{trace class} (see, e.g., \citealp{hsing2015theoretical}).
More specifically, if $\Gamma_{it}$ belongs to $L^2([0,1]^2)$, it admits the following series expansion:
\begin{align}
    \Gamma_{it}(s, s') = \sum_{k_1, k_2 = 1}^\infty \kappa_{it,k_1,k_2} \phi_{k_1}(s) \phi_{k_2}(s')
\end{align}
for some sequence of constants $\{ \kappa_{it,k_1,k_2}\}$.
By the orthonormality of $\phi_k$,
\begin{align}
    \int_0^1 \int_0^1 \Gamma_{it}(s, s') \phi_k(s) \phi_k(s') \text{d}s \text{d}s'
    &= \sum_{k_1, k_2 = 1}^\infty \kappa_{it,k_1,k_2} \left( \int_0^1 \phi_k(s) \phi_{k_1}(s) \text{d}s \right) \left( \int_0^1 \phi_k(s') \phi_{k_2}(s') \text{d}s' \right) \\
    &= \kappa_{it,k,k}.
\end{align}
Since $L^{-2} \sum_{l=1}^L \sum_{l'=1}^L \Gamma_{it}(s_l, s_{l'}) \phi_k(s_l) \phi_k(s_{l'})$ can be seen as a numerical approximation of the left-hand side of the above expression, Assumption \ref{as:error}(iii) essentially requires that $\sum_{k=1}^K \kappa_{it,k,k} \lesssim 1$ uniformly in $K$.
This type of assumption is commonly made in the FDA literature.

Assumption \ref{as:A} is not restrictive in most empirically relevant situations. 
For example, in the case where $A_k(h,s) = h(s)$, we can set $\omega_p(u,s) = \delta(u - s)$ for any $p$.
For another example, when $A_k(h,s) = \int_0^1 h(u) \nu_k(u, s) \text{d}u$ for some kernel $\nu_k(u, s)$, since $|A_k(h,s)|^p \le \int_0^1 |h(u)|^p |\nu_k(u, s)|^p \text{d}u$, we can set $|\nu_k(u, s)|^p = \omega_p(u,s)$ in this case.

In Assumption \ref{as:weights}, we assume that the weight matrices in the quadratic moments are symmetric. 
Note that this assumption does not lose any generality because $A^\top P_{m,1} A = A^\top P_{m,1}^\top A$ for any $n \times 1$ vector $A$.
If $P_{m,1}$ is not symmetric in practice, we can always symmetrize it as $(P_{m,1} + P_{m,1}^\top)/2$.
The assumption for the existence of a threshold distance $\bar \Delta_m$ may be non-standard, but it simplifies the proof.
Since $P_{m,1}$'s are usually created from the interaction matrix $W_n$ and its powers, this assumption is consistent with Assumption \ref{as:sample_space}(ii).

Assumption \ref{as:matrix1} is a regularity condition to ensure the identifiability of $\theta_0$.
Assumption \ref{as:basis} is standard in the series estimation literature.
As shown in Theorem \ref{thm:normality}, we require $\varsigma > 3/2$ in order to establish the asymptotic normality of the estimator.
This condition is satisfied, for example, when spline basis functions are used and the functions $\alpha_0$, $\gamma_0$, and $\beta_{0j}$ are twice continuously differentiable (see, e.g., \citealp{CHEN20075549, belloni2015some}).

\begin{remark}[Choice of the locations and the number of grid points]\label{rem:grid}
    The choice of the grid $\{s_l\}$ determines the locations of $s$ at which the continuum of moment conditions is evaluated, which affects both the identifiability of parameters and the performance of our estimator.
    As long as all of the above conditions are satisfied, any set of grid points can in principle be employed.
    For example, when the covariance operator $\Gamma_{it}$ is indeed a trace-class operator, Assumption \ref{as:error}(iii) is more likely to be satisfied if the grid points are placed quasi-uniformly over $[0,1]$ with a sufficiently large $L$.
    Further, to satisfy Assumption \ref{as:matrix1}, it is generally necessary that the collection of basis vectors $\{\phi^K(s_1), \ldots, \phi^K(s_L)\}$ contains at least $K$ linearly independent columns.
    In particular, when the 2SLS-type weight matrix $\Omega_{nT}$ in \eqref{eq:2slsweight} is used, the nonsingularity of $L^{-1} \sum_{l=1}^L \phi^K(s_l)\phi^K(s_l)^\top$ is required to satisfy Assumption \ref{as:weights}(ii).
    Thus, in applications, it is advisable to choose grid points with equal spacing and to take $L$ sufficiently large.
    Our numerical simulation results suggest that setting $L = 2K$ is sufficient to produce accurate estimates (see Appendix \ref{app:MC}).
\end{remark}

\begin{theorem}[Rates of convergence]\label{thm:roc}
Suppose that Assumptions \ref{as:inverse}, \ref{as:sample_space} -- \ref{as:basis} hold.
In addition, assume that $K/\sqrt{nT} \to 0$ and $K^{1/2 - \varsigma} \to 0$ as $nT \to \infty$.
Then,
\begin{itemize}
    \item[(i)] $|| \hat \theta_{nT} - \theta_0|| \lesssim_p 1/\sqrt{nT} + K^{-\varsigma}$
    \item[(ii)] $||\hat \alpha_{nT} - \alpha_0 ||_{L^2} \lesssim_p 1/\sqrt{nT} + K^{-\varsigma}$, and $\sup_{s \in [0,1]}|\hat \alpha_{nT}(s) - \alpha_0(s) | \lesssim_p \sqrt{K}/\sqrt{nT} + K^{1/2 - \varsigma}$ 
    \item[(iii)] $||\hat \gamma_{nT} - \gamma_0 ||_{L^2} \lesssim_p 1/\sqrt{nT} + K^{-\varsigma}$, and $\sup_{s \in [0,1]}|\hat \gamma_{nT}(s) - \gamma_0(s) | \lesssim_p \sqrt{K}/\sqrt{nT} + K^{1/2 - \varsigma}$ 
    \item[(iv)] $||\hat \beta_{nT, j} - \beta_{0j} ||_{L^2} \lesssim_p 1/\sqrt{nT} + K^{-\varsigma}$, and $\sup_{s \in [0,1]}|\hat \beta_{nT, j}(s) - \beta_{0j}(s) | \lesssim_p \sqrt{K}/\sqrt{nT} + K^{1/2 - \varsigma}$ for all $j \in [d_x]$.
\end{itemize}
\end{theorem}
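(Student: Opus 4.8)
The plan is to follow the standard growing‑dimension GMM route: exploit the minimizing inequality $\mathcal{Q}_{nT}(\hat\theta_{nT}) \le \mathcal{Q}_{nT}(\theta_0)$, bound the right‑hand side by controlling the integrated moment vector at the pseudo‑true value $\theta_0$, and bound the left‑hand side from below by a quadratic in $\hat\theta_{nT}-\theta_0$ via the identification condition. The first observation is that, because $\bm{D}$ differences out the fixed effects, at $\theta_0$ one has $\bm{D}\bm{E}(s;\theta_0)=\bm{D}\bm{V}(s)+\bm{D}\bm{\mathcal{E}}(s)$, so $\bar g_{nT}(\theta_0)$ splits into a stochastic part driven by $\bm{\mathcal{E}}$ and an approximation‑bias part driven by $\bm{V}$. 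Throughout, Proposition \ref{prop:stationarity} together with the sparsity in Assumption \ref{as:sample_space} is what lets me represent each $Y_{it}$ through the Neumann series $(\text{Id}-\mathcal{A})^{-1}=\sum_{\ell=0}^\infty\mathcal{A}^\ell$ as a weakly (network‑)dependent functional of the independent errors, so that the spatial LLN/CLT machinery of \citet{jenish2012spatial} applies; Assumptions \ref{as:observables}, \ref{as:error}, and \ref{as:A} supply the requisite moment bounds on the outcomes and on $A(\bar Y_{it},s)$.

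First I would bound $\|\bar g_{nT}(\theta_0)\|$. For the linear block this is clean: the instruments $\bm{Z}(s)$ are non‑stochastic (Assumption \ref{as:observables}(i)), so $\bm{Z}^\top\bm{D}^\top\bm{D}\bm{\mathcal{E}}$ is a fixed linear combination of independent mean‑zero errors, and its $K$‑dimensional variance, summed over coordinates, collapses to $O(1)$ rather than $O(K)$ precisely because the trace‑class condition Assumption \ref{as:error}(iii) bounds the $K$‑sum of the projected error covariance, yielding an $O_P(1/\sqrt{nT})$ Euclidean bound. The bias block contributes $\lesssim \sqrt{K}\,K^{-\pi}=K^{(1-2\pi)/2}$, since each of the $K$ coordinates carries the uniform approximation error $K^{-\pi}$ of Assumption \ref{as:basis} weighted by an $O(1)$ basis factor. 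For the quadratic block, at $\theta_0$ the leading term $\bm{\mathcal{E}}^\top\bm{D}^\top P_m\bm{D}\bm{\mathcal{E}}$ is mean zero (cross‑sectional independence plus $\text{diag}(P_{m,1})=\bm{0}_n$, Assumption \ref{as:weights}(i)) and, using the fourth‑moment bounds and the network weak dependence, concentrates at rate $1/\sqrt{nT}$; the remaining $\bm{V}$‑cross terms are of the same or smaller order. Hence $\|\bar g_{nT}(\theta_0)\|\lesssim_P \eta_{nT}\coloneqq 1/\sqrt{nT}+K^{(1-2\pi)/2}$.

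Next I would establish the lower bound and combine. Writing $\delta=\theta-\theta_0$ and using $\bm{E}(s;\theta)=\bm{E}(s;\theta_0)-\bm{H}(s)\delta$, the moment vector is exactly quadratic in $\delta$: $\bar g_{nT}(\theta)=\bar g_{nT}(\theta_0)+J_{nT}\delta+r_{nT}(\delta)$, where the linear block of $J_{nT}$ concentrates on $-\Pi_{nT}$ (via the LLN for the $\bm{H}$‑averages, noting $A(\bar Y_{it},s)$ is stochastic), the quadratic block of $J_{nT}$ at $\theta_0$ is asymptotically negligible, and $r_{nT}(\delta)$ is a pure quadratic remainder with $\|r_{nT}(\delta)\|\lesssim_P\|\delta\|^2$ (orthonormality of $\phi^K$, bounded outcome moments, and $\|\bm{D}^\top P_m\bm{D}\|\lesssim1$). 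Assumption \ref{as:matrix1} makes $\Pi_{nT}^\top\Pi_{nT}$ well‑conditioned and Assumption \ref{as:weights}(ii) keeps $\Omega_{nT}$ bounded away from singularity, so the basic inequality yields $c\|\hat\delta\|\lesssim_P\eta_{nT}+C\|\hat\delta\|^2$, where $\hat\delta\coloneqq\hat\theta_{nT}-\theta_0$. After a preliminary consistency step — a uniform‑over‑$\Theta_K$ argument on the compact parameter space using $K/\sqrt{nT}\to0$ and $K^{1-\pi}\to0$ to drive $\sup_\theta|\mathcal{Q}_{nT}(\theta)-\mathcal{Q}(\theta)|\to0$ — one obtains $\|\hat\delta\|=o_P(1)$, so the quadratic term is absorbed into the left‑hand side and part (i) follows. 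Parts (ii)--(iii) are then immediate: $\|\hat\alpha_{nT}-\alpha_0\|_{L^2}\le\|\hat\theta_{nT,\alpha}-\theta_{0\alpha}\|+O(K^{-\pi})$ by orthonormality, while the sup‑norm bounds use $\sup_{s}\|\phi^K(s)\|\lesssim\sqrt{K}$, turning the coefficient rate into $\sqrt{K}/\sqrt{nT}+K^{1-\pi}$.

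I expect the main obstacle to be the stochastic control of the quadratic moments and of the empirical Jacobian under the confluence of three complications: the MA(1)‑type temporal dependence induced by first differencing, the cross‑sectional dependence induced by the simultaneous network (which I can handle only after passing through the Neumann‑series reduced form and invoking the sparsity of Assumption \ref{as:sample_space} to obtain near‑epoch dependence on the independent error field), and the growing basis dimension interacting with the numerical integration over the grid $\{s_l\}$. In particular, obtaining the parametric $1/\sqrt{nT}$ rate rather than $\sqrt{K/(nT)}$ hinges delicately on the trace‑class Assumption \ref{as:error}(iii), and verifying that every approximation and remainder term is uniformly absorbed under exactly the stated rate conditions $K/\sqrt{nT}\to0$ and $K^{1-\pi}\to0$ is where the bookkeeping will be most delicate.
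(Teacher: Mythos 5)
Your proposal reproduces the paper's architecture almost lemma for lemma: the decomposition of $\bar g_{nT}(\theta_0)$ into a stochastic part of order $1/\sqrt{nT}$ (parametric precisely because of Assumption \ref{as:error}(iii)) and a bias part of order $K^{(1-2\pi)/2}$, the NED-on-the-error-field treatment of the simultaneous network dependence via the Neumann series, the preliminary consistency step via uniform convergence over $\Theta_K$, and the passage to $L^2$ and sup-norm rates via orthonormality and $\sup_s\|\phi^K(s)\|\lesssim\sqrt K$ all match the paper's Lemmas \ref{lem:NED1}--\ref{lem:consistency} and display \eqref{eq:g0}. The gap is in the step that extracts the rate. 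You expand $\bar g_{nT}(\theta)=\bar g_{nT}(\theta_0)+J_{nT}\delta+r_{nT}(\delta)$ and assert $\|r_{nT}(\delta)\|\lesssim_P\|\delta\|^2$ with a $K$-free constant. The nonzero components of the remainder are $N^{-1}\sum_{l}\delta^\top\bm{H}(s_l)^\top\bm{D}^\top P_m\bm{D}\bm{H}(s_l)\delta$, and bounding these by a constant times $\|\delta\|^2$ uniformly in $\delta$ requires $\lambda_{\max}\bigl(N^{-1}\sum_l\bm{H}(s_l)^\top\bm{H}(s_l)\bigr)\lesssim_P 1$, i.e., an operator-norm bound on the random design Gram matrix. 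Theorem \ref{thm:roc} does not grant this: the condition that delivers it, Assumption \ref{as:matrix2}(i), is introduced only for Theorem \ref{thm:normality}. Under the assumptions actually available (moment bounds on $A(\bar Y_{it},s)$ and $\sup_s\|\phi^K(s)\|\lesssim\sqrt K$), only the trace bound holds, giving $\|r_{nT}(\delta)\|\lesssim_P K\|\delta\|^2$. Your self-bounding inequality then reads $c\|\hat\delta\|\lesssim_P\eta_{nT}+CK\|\hat\delta\|^2$ with $\hat\delta\coloneqq\hat\theta_{nT}-\theta_0$, and absorbing the quadratic term needs $K\|\hat\delta\|=o_P(1)$; plain consistency gives only $\|\hat\delta\|=o_P(1)$, which cannot rule out the spurious solution branch $\|\hat\delta\|\asymp 1/K\to 0$. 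So, as written, the argument does not close under the theorem's hypotheses.

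The fix --- and this is exactly how the paper proceeds --- is to never upper-bound the quadratic remainder at all. Because the instrument block of the moment function is exactly linear in $\theta$, the first $(d_q+d_x)K$ coordinates of $\bar g_{nT}(\hat\theta_{nT})-\bar g_{nT}(\theta_0)$ equal $\hat\Pi_{nT}(\theta_0-\hat\theta_{nT})$ with no remainder, where $\hat\Pi_{nT}\coloneqq N^{-1}\sum_l\bm{Z}(s_l)^\top\bm{D}^\top\bm{D}\bm{H}(s_l)$. In your basic-inequality format: $\mathcal{Q}_{nT}(\hat\theta_{nT})\le\mathcal{Q}_{nT}(\theta_0)$ yields $\|\Omega_{nT}^{1/2}(\bar g_{nT}(\hat\theta_{nT})-\bar g_{nT}(\theta_0))\|\le 2\|\Omega_{nT}^{1/2}\bar g_{nT}(\theta_0)\|$, and keeping only the first block on the left gives $c\|\hat\delta\|\le\|\hat\Pi_{nT}\hat\delta\|\lesssim_P\|\bar g_{nT}(\theta_0)\|\lesssim_P 1/\sqrt{nT}+K^{(1-2\pi)/2}$, using Assumption \ref{as:weights}(ii), Assumption \ref{as:matrix1}, and Lemma \ref{lem:matLLN}(ii) (with $K/\sqrt{nT}\to 0$) to ensure $\lambda_{\min}(\hat\Pi_{nT}^\top\hat\Pi_{nT})$ stays bounded away from zero with probability approaching one. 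The paper's own version is the sphere argument combined with the generalized Cauchy--Schwarz inequality $|a^\top\Omega b|\le(a^\top\Omega a)^{1/2}(b^\top\Omega b)^{1/2}$: the quadratic-in-$\delta$ terms sit inside the positive quantity $\tilde A_{nT}$, where a larger value only helps, so no control of them is needed. With that replacement, the rest of your steps, and parts (ii)--(iii), go through as you describe.
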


The proof of Theorem \ref{thm:roc} is given in Appendix \ref{app:proofs}. 
Result (i) shows that, if the functional parameters are sufficiently smooth so that $K^{-\varsigma} \lesssim (nT)^{-1/2}$, the series coefficient estimator is consistent and converges at the parametric rate.
The same convergence rate applies to the $L^2$-convergence rate of the functional estimators, as shown in (ii) -- (iv).
Although these results may appear somewhat unexpected since the dimension of $\theta_0$ diverges as $K \to \infty$, parametric convergence rates are commonly observed in dense functional data settings (e.g., \citealp{huang2004polynomial, li2010uniform, hu2021robust}), our setup corresponding to an extreme case in which the entire functions are observed.
The key condition underlying this rate is Assumption \ref{as:error}(iii), which ensures that, once $K$ is sufficiently large, adding further basis terms does not inflate the variance of the series coefficient estimator.
The uniform convergence rate for these estimators is $K^{1/2}$ slower than the $L^2$-convergence rate.\footnote{
    Note that the uniform convergence results obtained here are not necessarily the sharpest, and the theoretically optimal convergence rates under our setup are also unknown. These points are left for future research. 
    }

\bigskip

We next present the limiting distribution of our estimators.
To this end, we introduce the following notations and additional assumption:
\begin{align}
    & \underset{d_g \times (d_x + 2)K}{J_{nT}(s; \theta)}
    \coloneqq \frac{\partial g_{nT}(s; \theta)}{\partial \theta^\top}, \;\;
    \bar J_{nT}(\theta)
    \coloneqq \frac{1}{L} \sum_{l = 1}^L J_{nT}(s_l; \theta), \;\;
    \bar J_{nT}
    \coloneqq \bbE\left[\bar J_{nT}(\theta_0)\right] \\
    & \underset{d_g \times 1}{g_{1,nT}(s)}
    \coloneqq \frac{1}{N} \left(\begin{array}{c}
     \bm{Z}(s)^\top \bm{D}^\top \bm{D} \bm{\mathcal{E}}(s) \\
     \bm{\mathcal{E}}(s)^\top \bm{D}^\top P_1 \bm{D} \bm{\mathcal{E}}(s) \\
     \vdots \\
     \bm{\mathcal{E}}(s)^\top \bm{D}^\top P_M \bm{D} \bm{\mathcal{E}}(s)
    \end{array}\right), \;\; \bar g_{1,nT}
    \coloneqq \frac{1}{L} \sum_{l = 1}^L g_{1,nT}(s_l)\\
    & \mathcal{V}_{nT} \coloneqq N \bbE\left[\bar g_{1,nT} \bar g_{1,nT}^\top \right] \\
    & \underset{(d_x + 2)K \times (d_x + 2)K}{\Sigma_{nT}} \coloneqq \left( \bar J^\top_{nT} \Omega_{nT} \bar J_{nT} \right)^{-1} \bar J^\top_{nT} \Omega_{nT} \mathcal{V}_{nT} \Omega_{nT} \bar J_{nT} \left( \bar J^\top_{nT} \Omega_{nT} \bar J_{nT} \right)^{-1}.
\end{align} 
More explicit forms of the matrices $J_{nT}(s; \theta)$ and $\mathcal{V}_{nT}$ can be found in \eqref{eq:Jmat} and in \eqref{eq:Vmat} in Appendix \ref{app:prep}, respectively.
Further, let $\mathbb{S}_\alpha$, $\mathbb{S}_\gamma$, and $\mathbb{S}_j$ be the $K \times (d_x + 2)K$ selection matrices such that $\theta_{0\alpha} = \mathbb{S}_\alpha \theta_0$, $\theta_{0\gamma} = \mathbb{S}_\gamma \theta_0$, and $\theta_{0j} = \mathbb{S}_j \theta_0$ hold.

\begin{assumption}[Misc.]\label{as:matrix2}
    (i) $\limsup_{nT \to \infty} \lambda_{\max} \left( (NL)^{-1} \allowbreak \sum_{l = 1}^L \bbE[\bm{H}(s_l)^\top \bm{H}(s_l)] \right) < \infty$;
    (ii) $0 < \liminf_{nT \to \infty} \lambda_{\min}\left( \bar J^\top_{nT} \bar J_{nT} \right) \le \limsup_{nT \to \infty} \lambda_{\max}\left( \bar J^\top_{nT} \bar J_{nT} \right) < \infty$; and 
    (iii) $0 < \liminf_{nT \to \infty} \lambda_{\min}\left( \mathcal{V}_{nT} \right) \le \limsup_{nT \to \infty} \lambda_{\max}\left( \mathcal{V}_{nT} \right) < \infty$.
\end{assumption}

\begin{theorem}[Asymptotic normality]\label{thm:normality}
Suppose that Assumptions \ref{as:inverse}, \ref{as:sample_space} -- \ref{as:matrix2} hold.
In addition, assume that $K/\sqrt{nT} \to 0$, $K^2/(\sqrt{nT} \left\| \phi^K(s) \right\|^2) \to 0$, and $\sqrt{nT} K^{-\varsigma} \to 0$ as $nT \to \infty$.
Then, for any given $s \in [0,1]$,
\begin{align}
    \text{(i)} \;\; & \frac{\sqrt{N}\left( \hat \alpha_{nT}(s) - \alpha_0(s) \right)}{\sigma_{nT,\alpha}(s)} \overset{d}{\to} \mathcal{N}(0,1), \;\; 
    \text{(ii)} \;\; \frac{\sqrt{N}\left( \hat \gamma_{nT}(s) - \gamma_0(s) \right)}{\sigma_{nT,\gamma}(s)} \overset{d}{\to} \mathcal{N}(0,1) \\
    \text{(iii)} \;\; & \frac{\sqrt{N}\left( \hat \beta_{nT, j}(s) - \beta_{0j}(s) \right)}{\sigma_{nT,j}(s)}  \overset{d}{\to} \mathcal{N}(0,1),
\end{align}
where $[\sigma_{nT,\alpha}(s)]^2 \coloneqq \phi^K(s)^\top \mathbb{S}_\alpha \Sigma_{nT} \mathbb{S}_\alpha^\top \phi^K(s)$, $[\sigma_{nT,\gamma}(s)]^2 \coloneqq \phi^K(s)^\top \mathbb{S}_\gamma \Sigma_{nT} \mathbb{S}_\gamma^\top \phi^K(s)$, $[\sigma_{nT,j}(s)]^2 \coloneqq \phi^K(s)^\top \mathbb{S}_j \Sigma_{nT} \mathbb{S}_j^\top \phi^K(s)$, and recall that $N = n(T-1)$.
\end{theorem}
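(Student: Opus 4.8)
The plan is to carry out the standard first-order expansion for a GMM estimator, while paying attention to two nonstandard features: the parameter dimension $(d_x+1)K$ grows with the sample, and normality is sought only for the fixed scalar functionals $\phi^K(s)^\top\mathbb{S}_\alpha\hat\theta_{nT}$ and $\phi^K(s)^\top\mathbb{S}_j\hat\theta_{nT}$, not for the whole vector $\hat\theta_{nT}$. Since Theorem \ref{thm:roc} already delivers consistency at the required rate, the work is to (a) linearize, (b) kill the series-approximation bias, and (c) prove a central limit theorem for the relevant scalar.

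\textbf{Step 1 (linearization).} I would begin from the first-order condition $\bar J_{nT}(\hat\theta_{nT})^\top\Omega_{nT}\bar g_{nT}(\hat\theta_{nT})=\bm{0}_{(d_x+1)K}$. Because every component of $g_{nT}(s;\theta)$ is at most quadratic in $\theta$ (the linear moments are affine and the quadratic moments are quadratic in $\bm{E}(s;\theta)=\bm{Y}(s)-\bm{H}(s)\theta$), the Jacobian $J_{nT}(s;\theta)$ is affine in $\theta$, so a mean-value expansion of $\bar g_{nT}(\hat\theta_{nT})$ around $\theta_0$ is exact and gives
\[
\sqrt{n(T-1)}(\hat\theta_{nT}-\theta_0)=-\left[\bar J_{nT}(\hat\theta_{nT})^\top\Omega_{nT}\tilde J_{nT}\right]^{-1}\bar J_{nT}(\hat\theta_{nT})^\top\Omega_{nT}\sqrt{n(T-1)}\,\bar g_{nT}(\theta_0),
\]
with $\tilde J_{nT}$ the Jacobian evaluated at (row-dependent) intermediate points. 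Using the rate in Theorem \ref{thm:roc} together with Assumption \ref{as:matrix2}(i)--(ii), I would show that both $\bar J_{nT}(\hat\theta_{nT})$ and $\tilde J_{nT}$ equal $\bar J_{nT}+o_P(1)$ in the sense needed for the fixed projection, reducing the sandwich to the population form that produces $\Sigma_{nT}$.

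\textbf{Step 2 (bias removal).} Substituting the model into $\bm{E}(s;\theta_0)$ and applying the first-difference operator $\bm{D}$ (which annihilates the fixed effects) gives $\bm{D}\bm{E}(s;\theta_0)=\bm{D}\bm{V}(s)+\bm{D}\bm{\mathcal{E}}(s)$, so $\bar g_{nT}(\theta_0)=\bar g_{1,nT}+(\text{terms in }\bm{V})$. By Assumption \ref{as:basis}, $\sup_s|v_{it}(s)|\lesssim K^{-\pi}$; since the linear block accumulates $(d_q+d_x)K$ coordinates, the $\bm{V}$-contribution to the projected, $\sqrt{n(T-1)}$-scaled quantity is of order $\sqrt{nT}\,K^{(1-2\pi)/2}=o(1)$ under the stated rate, and the same bound controls the pointwise bias $\alpha_0(s)-\phi^K(s)^\top\theta_{0\alpha}$. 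This leaves
\[
\frac{\sqrt{n(T-1)}(\hat\alpha_{nT}(s)-\alpha_0(s))}{\sigma_{nT,\alpha}(s)}=\frac{c_{nT}(s)^\top\sqrt{n(T-1)}\,\bar g_{1,nT}}{\sigma_{nT,\alpha}(s)}+o_P(1),
\]
where $c_{nT}(s)^\top\coloneqq-\phi^K(s)^\top\mathbb{S}_\alpha(\bar J_{nT}^\top\Omega_{nT}\bar J_{nT})^{-1}\bar J_{nT}^\top\Omega_{nT}$; a direct computation using the definition of $\Sigma_{nT}$ gives $c_{nT}(s)^\top\mathcal{V}_{nT}c_{nT}(s)=[\sigma_{nT,\alpha}(s)]^2$, which Assumptions \ref{as:matrix2}(iii) and \ref{as:weights}(ii) bound away from zero relative to $\|\phi^K(s)\|^2$.

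\textbf{Step 3 (CLT, the crux).} The remaining task is the scalar CLT for $c_{nT}(s)^\top\sqrt{n(T-1)}\bar g_{1,nT}/\sigma_{nT,\alpha}(s)$. The linear block of $\bar g_{1,nT}$ is a weighted sum of the independent errors $\{\varepsilon_{it}(s_l)\}$, while each quadratic block is a form $\bm{\mathcal{E}}(s_l)^\top\bm{D}^\top P_m\bm{D}\bm{\mathcal{E}}(s_l)$ whose kernel has zero diagonal (hence mean zero by Assumption \ref{as:error}). The scalar is therefore a linear-quadratic form in independent, heteroskedastic errors over the sparse network of Assumptions \ref{as:sample_space} and \ref{as:weights}, and I would invoke a central limit theorem for such forms (in the spirit of the spatial CLT of \citet{jenish2012spatial}). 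The hard part will be verifying the Lyapunov/Lindeberg condition as the dimension grows: both $c_{nT}(s)$ and $\sigma_{nT,\alpha}(s)$ depend on $K$ through $\phi^K(s)$, and controlling the ratio of fourth to squared-second moments of the summands is exactly where the condition $K^2/(\sqrt{nT}\|\phi^K(s)\|^2)\to 0$ enters, used together with the bounded moments of Assumptions \ref{as:observables}(ii) and \ref{as:error}(ii), the eigenvalue bounds of Assumption \ref{as:matrix2}, and the covariance-summability Assumption \ref{as:error}(iii) that tames the cross-$s_l$ dependence induced by the numerical integration. The statement for $\hat\beta_{nT,j}(s)$ follows verbatim upon replacing $\mathbb{S}_\alpha$ and $\sigma_{nT,\alpha}$ by $\mathbb{S}_j$ and $\sigma_{nT,j}$.
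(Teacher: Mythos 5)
Your proposal is correct and follows essentially the same route as the paper's proof: linearization via the first-order condition with a mean-value expansion, disposal of the Jacobian-estimation and series-bias contributions (the paper's $G_{2,nT}$, $G_{3,nT}$, $G_{4,nT}$ terms) using $\sqrt{nT}K^{(1-2\pi)/2}\to 0$ and the rate results, and then a CLT for the projected linear--quadratic form $c_{nT}(s)^\top \sqrt{n(T-1)}\,\bar g_{1,nT}/\sigma_{nT,\alpha}(s)$ with $c_{nT}(s)^\top\mathcal{V}_{nT}c_{nT}(s)=[\sigma_{nT,\alpha}(s)]^2$. The only cosmetic difference is in your Step 3: rather than invoking an off-the-shelf spatial/linear--quadratic CLT, the paper proves the CLT directly by relabeling the $(i,t)$ pairs into a single index, writing the scalar as a sum of martingale differences (linear part plus lower-triangular quadratic part) with respect to the filtration generated by the ordered errors, and applying Scott's (1973) martingale CLT, verifying the conditional-variance and conditional Lindeberg conditions exactly where you indicated the condition $K^2/(\sqrt{nT}\|\phi^K(s)\|^2)\to 0$ and Assumption \ref{as:error}(iii) must enter.
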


\begin{corollary}\label{cor:normality}
Suppose that the assumptions in Theorem \ref{thm:normality} hold.
If $K^3/(nT) \to 0$ and $K^{3/2 - \varsigma} \to 0$ as $nT \to \infty$ additionally hold, we have
\begin{align}
    \text{(i)} \;\; & \frac{\sqrt{N}\left( \hat \alpha_{nT}(s) - \alpha_0(s) \right)}{\hat \sigma_{nT,\alpha}(s)} \overset{d}{\to} \mathcal{N}(0,1), \;\; 
    \text{(ii)} \;\; \frac{\sqrt{N}\left( \hat \gamma_{nT}(s) - \gamma_0(s) \right)}{\hat \sigma_{nT,\gamma}(s)} \overset{d}{\to} \mathcal{N}(0,1) \\
    \text{(iii)} \;\; & \frac{\sqrt{N}\left( \hat \beta_{nT, j}(s) - \beta_{0j}(s) \right)}{\hat \sigma_{nT,j}(s)}  \overset{d}{\to} \mathcal{N}(0,1),
\end{align}
where $\hat \sigma_{nT,\alpha}(s)$, $\hat \sigma_{nT,\gamma}(s)$, and $\hat \sigma_{nT,j}(s)$ are standard error estimators, whose definitions are provided in \eqref{eq:se} in Appendix \ref{app:covmat}.
\end{corollary}

Theorem \ref{thm:normality} and Corollary \ref{cor:normality} establish the pointwise asymptotic normality of the integrated-GMM estimators.
We impose additional undersmoothing conditions to ensure that the bias terms vanish sufficiently quickly. 
See also Proposition \ref{prop:var} in Appendix \ref{app:covmat}, where additional asymptotic results for consistent variance estimation are established.

For the interpretation of these results, there are several points that should be noted.
First, the standard deviations $\sigma_{nT,\alpha}(s)$, $\sigma_{nT,\gamma}(s)$, and $\sigma_{nT,j}(s)$ depend on the choice of the grid $\{s_l\}$ and its size $L$, as shown in their definitions.
In this sense, one could consider a data-driven procedure for selecting the grid so as to improve the efficiency.
We leave the investigation of such procedures for future research.
Second, a straightforward calculation shows that the order of the standard deviations and their estimators is governed by $\|\phi^{K}(s)\|$, which is typically of order $\sqrt{K}$ for $s \in (0,1)$.
Therefore, the theorem does not necessarily assert the $\sqrt{N}$-asymptotic normality for our functional estimator (similar to Theorem 2 of \cite{newey1997convergence}, for example).
Third, the asymptotic normality results hold for any evaluation point $s \in [0,1]$, including the boundary points $s = 0$ and $s = 1$.
Nevertheless, both the required conditions on $K$ and the structure of the standard deviations depend on $s$ through the behavior of $\phi^{K}(s)$.
For example, consider $\phi^{K}(0) = (1,0,\ldots,0)^\top$, as is typical in practice.
In this case, $\|\phi^{K}(0)\|^{2} = 1$.
Since Theorem \ref{thm:normality} requires $K^{2}/(\sqrt{nT}\|\phi^{K}(s)\|^{2}) \to 0$, this implies that achieving the normality at the boundary needs a more restrictive condition on $K$ than at interior points.

\begin{remark}[Choice of $K$]\label{rem:K}
Suppose that $K$ is proportional to $(nT)^c$ for some $0 < c < \infty$, and that $c_1 K \le ||\phi^K(s)||^2 \le c_2 K$ for some $0 < c_1 \le c_2 < \infty$.
Then, to obtain the asymptotic normality results in Corollary \ref{cor:normality}, we require $K^3/(nT) \to 0$ and $\sqrt{nT} K^{-\varsigma} \to 0$ simultaneously.
These conditions can be reduced to the following restriction on $c$: $1/(2\varsigma) < c < 1/3$.
Thus, when the functional coefficients are believed to be sufficiently smooth, setting, for example, $K = \lfloor c_K (nT)^{1/5} \rfloor$ for some constant $c_K > 0$ would be a reasonable choice.

As a more data-driven approach, we also consider a cross-validation procedure in our numerical studies.
Specifically, we split the data into a training sample consisting of periods $t = 1,\ldots,T_{\text{train}}$ and a validation sample consisting of $t = T_{\text{train}}+1,\ldots,T$.
We then estimate $\alpha_0$, $\gamma_0$, and $\beta_0$ using the training sample to predict the outcome difference $Y_{i,t+1}-Y_{it}$ for $t = T_{\text{train}}+1,\ldots,T-1$, and the value of $K$ is chosen to minimize the average mean squared prediction error (AMSPE):
\begin{align}
    \text{AMSPE}
    \coloneqq
    \frac{1}{|\mathcal S|} \sum_{s \in \mathcal S} \left( \frac{1}{n (T - T_{\text{train}} - 1)} \sum_{i = 1}^n \sum_{t = T_{\text{train}} + 1}^{T - 1} \left\{ [\hat Y_{i,t+1}(s) - \hat Y_{it}(s)] - [Y_{i,t+1}(s) - Y_{it}(s)] \right\}^2 \right),
\end{align}
where $\mathcal S$ is a given set of evaluation points, and $|\mathcal S|$ denotes its cardinality.
In our simulation analysis, this procedure performs reasonably well.
For further details, see Appendix \ref{app:MC}.
\end{remark}

\section{Network Multiplier Effects: Marginal Effects and Impulse Responses}\label{sec:impulse}

Once the model is estimated, as a next step, one might be interested in computing the marginal effects of covariates on the outcome.
In a standard linear regression model without network interaction, the estimated coefficients directly represent the marginal effects of their corresponding covariates.
However, in the presence of intricate functional interaction, this is no longer the case.

As shown in Section \ref{sec:model}, under Assumption \ref{as:inverse}, we have the following moving-average type representation:
\begin{align}
    Y_t = \mathcal{A}_3^t Y_0 + \sum_{\ell = 0}^{t-1} \mathcal{A}_3^\ell (\text{Id} - \mathcal{A}_1)^{-1}[X_{t-\ell}\beta_0 + F_0 + \mathcal{E}_{t-\ell}], \;\; t \in [T].
\end{align}
This expression indicates that the marginal effect of increasing $X_{it}^j$ by one unit on $Y_t$ is given by $\partial Y_t/(\partial X_{it}^j) = (\text{Id} - \mathcal{A}_1)^{-1}\bm{e}_i\beta_{0j} = \sum_{\ell = 0}^\infty \mathcal{A}_1^\ell \bm{e}_i\beta_{0j}$, where $X_t^j$ is the $j$-th column of $X_t$, and $\bm{e}_i$ denotes the $i$-th column of $I_n$.
Alternatively, a little more informative expression can be obtained as follows: letting $\tau(h,s) \coloneqq \alpha_0(s) A_1(h, s)$,
\begin{align}
    M(i,j,s)
    & \coloneqq \partial Y_t(s)/(\partial X_{it}^j) = \bm{e}_i \beta_{0j}(s) + W_n \bm{e}_i \tau(\beta_{0j}, s) + W_n^2 \bm{e}_i \tau^2(\beta_{0j}, s) + \cdots \\
    & = \sum_{\ell = 0}^\infty W_n^\ell \bm{e}_i \tau^\ell(\beta_{0j}, s),
\end{align}
where $\tau^0(\beta_{0j}, s) = \beta_{0j}(s)$, and $\tau^\ell(\beta_{0j}, s) = \tau(\tau^{\ell - 1}(\beta_{0j}, \cdot), s)$ for $\ell \geq 1$.
From this, we can clearly see that the marginal effects $M(i,j,s)$ of increasing $X_{it}^j$ consist of the direct effect on unit $i$, the indirect effect on $i$'s immediate neighbors, the second-order indirect effect on $i$'s neighbors' neighbors, and so forth, highlighting the presence of the \textit{network multiplier} effect.
More specifically, recall that when $W_n$ represents a (weighted) adjacency matrix, the $(i,j)$-th element of $W_n^\ell$ corresponds to the number of (weighted) \textit{walks} between $i$ and $j$ of length $\ell$.
Thus, the $k$-th element of $M(i,j,s)$ is interpreted as the weighted sum of the number of walks from $i$ to $k$, where the contribution of each length-$\ell$ walk is weighted by $\tau^\ell(\beta_{0j}, s)$.

To estimate the marginal effects, in addition to replacing the unknown parameters with their estimators, the infinite sum generally needs to be approximated by a truncated sum: for some positive integer $S$,
\begin{align}
    \hat M^S_{nT}(i,j,s) \coloneqq \sum_{\ell = 0}^S W_n^\ell \bm{e}_i \hat \tau_{nT}^\ell(\hat \beta_{nT,j}, s),
\end{align}
where $\hat \tau_{nT}(h,s) \coloneqq \hat \alpha_{nT}(s) A_1(h,s)$.
Meanwhile, in the special case of concurrent interaction such that $A_1(h,s) = h(s)$, we have $\tau(h,s) = \alpha_0(s)h(s)$, $\tau^2(h,s) = (\alpha_0(s))^2 h(s)$, and so forth.
Thus, $M(i,j,s) = \sum_{\ell = 0}^\infty (\alpha_0(s) W_n)^\ell \bm{e}_i \beta_{0j}(s) = (I_n - \alpha_0(s) W_n)^{-1}\bm{e}_i\beta_{0j}(s)$.
In this case, we can estimate $M(i,j,s)$ directly as $(I_n - \hat \alpha_{nT}(s) W_n)^{-1}\bm{e}_i\hat \beta_{nT,j}(s)$, without computing the infinite sum.

\bigskip

In the above discussion, we have demonstrated how the impacts of shifting one's covariate propagate to others.
The dynamic interaction structure of our model also allows us to consider conventional impulse responses over future periods, as in time series vector autoregression.
However, because our framework accommodates panels with a short or fixed $T$, we focus instead on functional network impulse responses, which characterize how a functional shock propagates across units and evaluation points.
In particular, in a similar spirit to \cite{koop1996impulse}, we define
\begin{align}
    I(i, \eta, s) \coloneqq \mathbb{E}[Y_t(s) \mid \varepsilon_{it} = \eta] - \mathbb{E}[Y_t(s)],
\end{align}
where $\eta$ is a given bounded external shock "function".
By a similar calculation as above, we obtain
\begin{align}
    I(i, \eta, s) = \sum_{\ell = 0}^\infty W_n^\ell \bm{e}_i \tau^\ell(\eta, s).
\end{align}
Plotting each element of $W_n^\ell \bm{e}_i \tau^\ell(\eta, s)$ against $\ell = 0, 1, 2, \ldots$ shows how the shock is transmitted through successive orders of network connections and can be interpreted as a functional network version of the impulse response function, similarly to \cite{denbee2021network}.

Under a concurrent interaction model, the impulse responses at $s$ take the following form: $I(i,\eta,s) = (I_n - \alpha_0(s) W_n)^{-1}\bm{e}_i\eta(s)$.
Thus, if there is no exogenous shock at $s$, i.e., if $\eta(s) = 0$, the expected outcome at $s$ remains unaffected.
This implies, for instance, that a travel demand shock that occurred five minutes ago has no impact on current mobility availability, which is unrealistic.
On the other hand, if the interaction structure is given by $A_1(h,s) = \int_0^1 h(u) \nu_1(u,s) \text{d}u$ with $\nu_1(s',s) \neq 0$ for $s' < s$, then a shock occurring at $s'$ can transmit to the outcome at $s$, leading to nonzero impulse responses at $s$ even when $\eta(s) = 0$.

The estimation of $I(i,\eta,s)$ can be performed in the same manner as above.
For some positive integer $S$, we estimate $I(i,\eta,s)$ by $\hat I^S_{nT}(i,\eta,s) \coloneqq \sum_{\ell = 0}^S W_n^\ell \bm{e}_i \hat \tau_{nT}^\ell(\eta, s)$.
The next proposition provides the convergence rate of $\hat M^S_{nT}(i,j,s)$ and that of $\hat I^S_{nT}(i,\eta,s)$.

\begin{proposition}\label{prop:impulse}
    Suppose that the assumptions in Theorem \ref{thm:roc} hold.
    In addition, assume that $\bar \alpha_0 < 1$ and $\|W_n\|_\infty \le 1$ uniformly in $n \ge 1$.
    Then, uniformly in $s \in [0,1]$,
    \begin{itemize}
        \item[(i)] $\max_{i \in [n]} \left\| \hat M^S_{nT}(i,j,s) - M(i,j,s) \right\|_\infty \lesssim_p \sqrt{K}/\sqrt{nT} + K^{1/2 - \varsigma} + \bar \alpha_0^{S+1}$,
        \item[(ii)] $\max_{i \in [n]} \left\| \hat I^S_{nT}(i,\eta,s) - I(i,\eta,s) \right\|_\infty \lesssim_p \sqrt{K}/\sqrt{nT} + K^{1/2 - \varsigma} + \bar \alpha_0^{S+1}$.
    \end{itemize}
\end{proposition}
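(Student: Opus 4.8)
The plan is to split the target into a series-truncation component and a parameter-estimation component and bound each separately. Writing $M^S(i,j,s) \coloneqq \sum_{\ell=0}^S W_n^\ell \bm{e}_i \gamma^\ell(\beta_{0j}, s)$ for the $S$-truncation of the true marginal effect, I would use
\begin{align}
\left\| \hat M^S_{nT}(i,j,s) - M(i,j,s) \right\|_\infty \le \underbrace{\left\| \hat M^S_{nT}(i,j,s) - M^S(i,j,s) \right\|_\infty}_{\text{estimation}} + \underbrace{\left\| M^S(i,j,s) - M(i,j,s) \right\|_\infty}_{\text{truncation}},
\end{align}
so that the $\bar\alpha_0^{S+1}$ term comes from truncation and the $\sqrt{K}/\sqrt{nT}+K^{1-\pi}$ terms from estimation. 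A preliminary fact I would record first is the sup-norm non-expansiveness of $A$: taking $p=1$ in Assumption \ref{as:A} gives $\sup_{s}|A(h,s)| \le \sup_{u}|h(u)|\int_0^1 \omega_1(u,s)\,\text{d}u \le \sup_u |h(u)|$, whence $\sup_s|\gamma(h,s)| \le \bar\alpha_0 \sup_u|h(u)|$ and, by iterating, $\sup_s|\gamma^\ell(\beta_{0j},s)| \le \bar\alpha_0^\ell \sup_s|\beta_{0j}(s)|$.

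For the truncation error I would bound the vector sup-norm entrywise using $\left\| W_n^\ell \bm{e}_i \right\|_\infty \le \|W_n\|_\infty^\ell$, uniformly in $i$, so that
\begin{align}
\left\| M^S(i,j,s) - M(i,j,s) \right\|_\infty \le \sup_s|\beta_{0j}(s)| \sum_{\ell = S+1}^\infty \left( \|W_n\|_\infty \bar\alpha_0 \right)^\ell \lesssim \bar\alpha_0^{S+1},
\end{align}
where the geometric series converges because $\bar\alpha_0 \|W_n\|_\infty < 1$ by Assumption \ref{as:inverse}(i); combined with the extra hypothesis $\bar\alpha_0 < 1$ and $\|W_n\|_\infty \lesssim 1$ this collapses to the stated rate, uniformly in $s$ and $i$.

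The estimation error is the substantive part. Setting $e_\ell(s) \coloneqq |\hat\gamma_{nT}^\ell(\hat\beta_{nT,j},s) - \gamma^\ell(\beta_{0j},s)|$ and using linearity of $A$ to write
\begin{align}
\hat\gamma_{nT}^\ell(\hat\beta_{nT,j},s) - \gamma^\ell(\beta_{0j},s) = \hat\alpha_{nT}(s)\, A\!\left(\hat\gamma_{nT}^{\ell-1}(\hat\beta_{nT,j},\cdot) - \gamma^{\ell-1}(\beta_{0j},\cdot),\, s\right) + \left(\hat\alpha_{nT}(s)-\alpha_0(s)\right) A\!\left(\gamma^{\ell-1}(\beta_{0j},\cdot),\, s\right),
\end{align}
the non-expansiveness of $A$ yields the linear recursion $\sup_s e_\ell(s) \le \bar{\hat\alpha}_{nT}\sup_s e_{\ell-1}(s) + \big(\sup_s|\hat\alpha_{nT}(s)-\alpha_0(s)|\big)\bar\alpha_0^{\ell-1}\sup_s|\beta_{0j}(s)|$, with $\bar{\hat\alpha}_{nT}\coloneqq\sup_s|\hat\alpha_{nT}(s)|$ and initial term $\sup_s e_0(s)=\sup_s|\hat\beta_{nT,j}(s)-\beta_{0j}(s)|$. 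Unrolling this recursion, multiplying by $\|W_n\|_\infty^\ell$, and summing over $\ell\le S$ gives
\begin{align}
\left\| \hat M^S_{nT}(i,j,s) - M^S(i,j,s) \right\|_\infty \lesssim \sup_s|\hat\beta_{nT,j}(s)-\beta_{0j}(s)| + \sup_s|\hat\alpha_{nT}(s)-\alpha_0(s)|,
\end{align}
provided the factors $\|W_n\|_\infty \bar{\hat\alpha}_{nT}$ and $\|W_n\|_\infty\bar\alpha_0$ stay bounded away from $1$; the uniform rates of Theorem \ref{thm:roc}(ii)--(iii) then deliver the estimation-error rate uniformly in $s$ and $i$.

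The main obstacle is controlling the random multiplier $\bar{\hat\alpha}_{nT}$ so that the truncated geometric sums remain bounded uniformly in $nT$: since the recursion carries $\bar{\hat\alpha}_{nT}$ at every step, I must ensure $\|W_n\|_\infty\bar{\hat\alpha}_{nT}$ is bounded away from $1$ on an event of probability tending to one. This is exactly where the hypothesis $\bar\alpha_0<1$ enters: together with the uniform consistency $\sup_s|\hat\alpha_{nT}(s)-\alpha_0(s)|\overset{p}{\to}0$ from Theorem \ref{thm:roc}(ii) and $\|W_n\|_\infty\lesssim 1$, it forces $\bar{\hat\alpha}_{nT}\le\bar\alpha_0+o_P(1)<1$ eventually, so every geometric series is summable with constants independent of $nT$ and $S$. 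Part (ii) follows by the identical argument applied to $\eta$ in place of $\beta_{0j}$; the only difference is that the shock $\eta$ is known rather than estimated, so $\sup_s e_0(s)=0$ and only the $\hat\alpha_{nT}$-error propagates, which is of the same order and leaves the stated rate unchanged.
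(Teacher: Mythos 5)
Your proposal is correct and takes essentially the same route as the paper's own proof: the same decomposition into an estimation term and a truncation term via $M^S(i,j,s)$, the same recursion for $\hat\gamma_{nT}^\ell(\hat\beta_{nT,j},s)-\gamma^\ell(\beta_{0j},s)$ based on linearity of $A$ and the $p=1$ sup-norm non-expansiveness from Assumption \ref{as:A}, the same geometric bound $|\gamma^\ell(\beta_{0j},s)|\le \bar\alpha_0^\ell \sup_s|\beta_{0j}(s)|$ for the tail, and the same appeal to the uniform rates in Theorem \ref{thm:roc}(ii)--(iii). If anything, your bookkeeping is slightly more careful than the paper's (explicit $\|W_n\|_\infty^\ell$ factors, the high-probability event on which $\|W_n\|_\infty\,\sup_s|\hat\alpha_{nT}(s)|$ stays bounded away from one, and the observation that the initial error vanishes for part (ii)), but these are refinements of the identical argument rather than a different method.
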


This proposition indicates that the uniform convergence rates for the marginal effect and the impulse response estimators depend on the uniform convergence rate of the integrated-GMM estimator and the summation order $S$.
Since the approximation error from truncating the infinite sum decreases geometrically as $S$ increases, if $\bar{\alpha}_0$ is close to zero, setting $S = 4$ or $5$ would be sufficient.
Although $\bar{\alpha}_0$ is unknown in practice, it can be naturally estimated by $\bar{\hat{\alpha}}_{nT} \coloneqq \max_{s \in [0,1]} |\hat{\alpha}_{nT}(s)|$.
Then, if the value of $\bar{\hat{\alpha}}_{nT}$ is relatively large, it would be preferable to use a larger truncation order, say $S = 9$ or $10$.

\section{Monte Carlo Simulation}\label{sec:MC}

In this section, we conduct a series of Monte Carlo experiments to evaluate the finite-sample performance of the integrated-GMM estimator.
To save space, we summarize only the main findings from the Monte Carlo simulations here; a complete description of the simulation design and the full numerical results are provided in Appendix \ref{app:MC}.
We compare three estimators: \textbf{GMM 1}, the integrated-GMM estimator with the weight matrix given in \eqref{eq:2slsweight}; \textbf{GMM 2}, the integrated-GMM estimator with the identity weight matrix; and (integrated) \textbf{2SLS}, the GMM 1 estimator without the quadratic moment conditions.

Overall, all estimators perform well in terms of bias; however, GMM 1 consistently achieves the lowest RMSE (root-mean-squared error) across almost all settings, highlighting the efficiency gains from incorporating quadratic moment conditions.
Interestingly, GMM 2 underperforms compared with 2SLS in most cases, indicating that the choice of the GMM weight matrix is at least as crucial as the inclusion of additional moment conditions.
Meanwhile, in terms of RMSE for estimating $\alpha_0$, 2SLS is more sensitive to the strength of IV than the others, with its RMSE decreasing more drastically as the IV strength increases.
For the estimation of $\gamma_0$, the IV strength matters significantly for all three estimators in the same way.
As expected, the strength of IV has only a minor impact on the estimation of $\beta_0$.
Overall, the choice of $L$ appears to have little effect on performance.
Lastly, when we increase the sample size from $nT = 250$ to $nT = 1000$, the RMSE values are roughly halved, which is consistent with the $\sqrt{nT}$-consistency of the estimators established in Theorem \ref{thm:roc}.

\section{Extensions}\label{sec:extension}

In this section, we present several extensions of the current framework that are useful in empirical applications and can be implemented with relatively minor modifications to our method.

\subsection{Functional DNAR models with two-way fixed effects}\label{sub:timeeffect}

So far, we have assumed that the error terms are independent over time, which may be restrictive in practice.
A practical approach to accommodating time effects is to extend the present model \eqref{eq:model} to the following two-way fixed effects model: for $s \in [0,1]$,
\begin{align}\label{eq:model_te}
    Y_{it}(s) = \alpha_0(s) \sum_{j = 1}^n w_{i,j} A_1(Y_{jt}, s) + \gamma_0(s) A_2(Y_{i,t-1}, s) + X_{it}^\top \beta_0(s) + f_{0i}(s) + c_{0t}(s) + \varepsilon_{it}(s).
\end{align}
To facilitate the analysis of this model, we introduce the following additional conditions.
\begin{assumption}\label{as:twoway}
(i) For all $i \in [n]$, $\sum_{j = 1}^n w_{i,j} = 1$; and (ii) $\text{diag}( \calR_n^\top P_{m,1} \calR_n) = \bm 0_n$.
\end{assumption}
Assumption \ref{as:twoway}(i) rules out isolated units.
In condition (ii), we assume that $P_{m,1}$ satisfies Assumption \ref{as:weights}(i), while the restriction $\text{diag}(P_{m,1}) = \bm 0_n$ is now no longer necessary.

Under Assumption \ref{as:twoway}(i), to eliminate the time effects $c_{0t}(s)$ from the model, we can consider de-meaning by the neighborhood average rather than the conventional de-meaning by the global average.
Specifically, letting $\calR_n \coloneqq I_n - W_n$ and $\underset{N \times N}{\bm{\calR}} \coloneqq I_{T-1} \otimes \calR_n$, we obtain
\begin{align}
    \bm{\calR} \bm D \bm Y(s) = \bm{\calR} \bm D \bm H(s) \theta_0 + \bm{\calR} \bm D \bm V(s) + \bm{\calR} \bm D \bm{\mathcal{E}}(s).
\end{align}

It is straightforward to verify that the linear moment conditions
\begin{align}
    \mathbb{E}\left[ \bm{Z}(s)^\top \bm{D}^\top \bm{\calR}^\top \bm{\calR} \bm{D} \bm{\mathcal{E}}(s) \right] 
    = \bm{0}_{(d_q + d_x)K}
\end{align}
hold.
In addition, by Assumption \ref{as:twoway}(ii), the quadratic moment condition
\begin{align}
    \mathbb{E}\left[ \bm{\mathcal{E}}(s)^\top \bm{D}^\top \bm{\calR}^\top P_m \bm{\calR} \bm{D} \bm{\mathcal{E}}(s) \right] 
    = 0
\end{align}
also holds.
Based on these moment conditions, we can construct a GMM estimator as in Subsection \ref{subsec:gmm}.
With an abuse of notation, let $\hat \alpha_{nT}(s)$, $\hat \gamma_{nT}(s)$, and $\hat \beta_{nT, j}(s)$ be the resulting estimators of $\alpha_0(s)$, $\gamma_0(s)$, and $\beta_{0j}(s)$, respectively.

\begin{theorem}[Asymptotic normality of the two-way fixed effects estimator]\label{thm:twoway}
Suppose that the assumptions in Corollary \ref{cor:normality} and Assumption \ref{as:twoway} are satisfied.
Then, we have
\begin{align}
    \text{(i)} \;\; & \frac{\sqrt{N}\left( \hat \alpha_{nT}(s) - \alpha_0(s) \right)}{\hat \sigma_{nT,\alpha}(s)} \overset{d}{\to} \mathcal{N}(0,1), \;\; 
    \text{(ii)} \;\; \frac{\sqrt{N}\left( \hat \gamma_{nT}(s) - \gamma_0(s) \right)}{\hat \sigma_{nT,\gamma}(s)} \overset{d}{\to} \mathcal{N}(0,1) \\
    \text{(iii)} \;\; & \frac{\sqrt{N}\left( \hat \beta_{nT, j}(s) - \beta_{0j}(s) \right)}{\hat \sigma_{nT,j}(s)}  \overset{d}{\to} \mathcal{N}(0,1),
\end{align}
where $\hat \sigma_{nT,\alpha}(s)$, $\hat \sigma_{nT,\gamma}(s)$, and $\hat \sigma_{nT,j}(s)$ are standard error estimators in this context, whose definitions are given in Appendix \ref{app:twoway}.
\end{theorem}

The proof of Theorem \ref{thm:twoway} is outlined in Appendix \ref{app:twoway}.
It is worth noting that constructing a weight matrix $P_{m,1}$ satisfying $\text{diag}(\calR_n^\top P_{m,1} \calR_n)=\bm 0_n$ is relatively straightforward.
For an arbitrarily chosen $\tilde P_{m,1} = (\tilde p_{m,i,j})$, let $\calL=\text{diag}(\ell_{1,1},\ldots,\ell_{n,n})$ be a diagonal matrix and set $P_{m,1} = \tilde P_{m,1} + \calL$.
Then, for each $i \in [n]$,
\begin{align}
    (\calR_n^\top P_{m,1} \calR_n)_{ii}
    &= \sum_{j=1}^n \sum_{k=1}^n r_{j,i} \tilde p_{m,j,k} r_{k,i} + \sum_{j=1}^n \ell_{j,j} r_{j,i}^2,
\end{align}
where $r_{j,i}$ denotes the $(j,i)$-th element of $\calR_n$.
Hence, the constraint $\text{diag}(\calR_n^\top P_{m,1} \calR_n)=\bm 0_n$ is equivalent to the linear system
\begin{align}
    \bm y = \bm C \bm \ell,
\end{align}
where $\bm y = - \text{diag}(\calR_n^\top \tilde P_{m,1} \calR_n)$, $\bm C = (c_{i,j})$, $c_{i,j}= r_{j,i}^2$, and $\bm \ell = (\ell_{1,1},\ldots,\ell_{n,n})^\top$.
Therefore, one may choose $\bm \ell = \bm C^{-} \bm y$, where $\bm C^{-}$ is a generalized inverse of $\bm C$.


\subsection{More general interaction structures}\label{subsec:nonlinear}

As a straightforward extension of our DNAR model, one may consider a nonlinear DNAR model:
\begin{align}\label{eq:nonlinearmodel}
    Y_{it}(s)
    = \alpha_0(s) \sum_{j = 1}^n w_{i,j} A_1(m_{1s}(Y_{jt}), s) + \gamma_0(s) A_2(m_{2s}(Y_{i,t-1}), s) + \mu_{it}(s),
\end{align}
where $\mu_{it}(s) \coloneqq X_{it}^\top \beta_0(s) + f_{0i}(s) + \varepsilon_{it}(s)$, and $m_{ks}$ is a known transformation (up to parameters).
For example, consider $m_{ks}(Y_{it}) = b_{k1}(s)Y_{it} + b_{k2}(s) Y_{it}^2 + \cdots + b_{kP}(s) Y_{it}^P$.
Then, setting $\alpha_{0p}(s) \coloneqq \alpha_0(s) b_{1p}(s)$ and $\gamma_{0p}(s) \coloneqq \gamma_0(s) b_{2p}(s)$, we can write the model as
\begin{align}\label{eq:exmodel}
    Y_{it}(s) = \sum_{p = 1}^P \alpha_{0p}(s) A_1(\bar Y_{it}^{(p)}, s) + \sum_{p = 1}^P \gamma_{0p}(s) A_2( Y_{i,t-1}^p, s) + \mu_{it}(s),
\end{align}
where $\bar Y_{it}^{(p)} \coloneqq \sum_{j = 1}^n w_{i,j} Y_{jt}^p$.

Another potentially useful extension is a multiple-network model:
\begin{align}
    Y_{it}(s)
    = \sum_{p = 1}^P \alpha_{0p}(s) \sum_{j = 1}^n w^{(p)}_{i,j} A_1(Y_{jt}, s) + \gamma_0(s) A_2(Y_{i,t-1}, s) + \mu_{it}(s),
\end{align}
where $W_n^{(p)} = (w^{(p)}_{i,j})$, $p \in [P]$, are different interaction matrices.
In this case, if we define $\bar Y_{it}^{(p)} \coloneqq \sum_{j = 1}^n w^{(p)}_{i,j} Y_{jt}$, the model can  be expressed in a similar form to \eqref{eq:exmodel}.

For both cases, as long as valid IVs $Q_{it}$ for the multiple endogenous components are available, essentially the same estimation and inference procedure as in Section \ref{sec:estimation} can be applied, while we need to impose more restrictive conditions to ensure completeness and stability.

\subsection{Incompletely observed response function}\label{sub:incomplete}
The integrated-GMM estimator is often infeasible because the response functions are typically observed only at a finite set of points in $[0,1]$.
Even in such cases, we can approximate the entire functional form of $Y_{it}$ using a linear interpolation method.

Suppose that for each $(i,t)$, $Y_{it}$ is observed at $L_{it}$ distinct points $0 \le s_{it,1} < s_{it,2} < \dots < s_{it,L_{it}} \le 1$.
Then, for each given $s \in [s_{it,l}, s_{it,l + 1}]$, define
\begin{align}\label{eq:interp}
        Y^{\text{int}}_{it}(s) \coloneqq Y_{it}(s_{it,l}) + \frac{Y_{it}(s_{it,l + 1}) - Y_{it}(s_{it,l})}{s_{it,l + 1} - s_{it,l}} (s - s_{it,l}).
\end{align}
When $s < s_{it,1}$ (resp. $s > s_{it,L_{it}}$), we set $Y^{\text{int}}_{it}(s) \coloneqq Y_{it}(s_{it,1})$ (resp. $Y^{\text{int}}_{it}(s) \coloneqq Y_{it}(s_{it,L_{it}})$).

Other than linear interpolation, one may also use a kernel method, as in \cite{zhu2022network}, to obtain $Y^{\text{int}}_{it}(s)$.
Then, using $Y^{\text{int}}_{it}(s)$ in place of $Y_{it}(s)$, we can write
\begin{align}
    Y^{\text{int}}_{it}(s) = \alpha_0(s) A_1(\bar Y^{\text{int}}_{it}, s) + \gamma_0(s) A_2(Y^{\text{int}}_{i,t-1}, s) + X_{it}^\top \beta_0(s) + f_{0i}(s) + \varepsilon_{it}(s) + u_{it}(s),
\end{align}
where $u_{it}(s)$ is the interpolation error: $u_{it}(s) \coloneqq Y^{\text{int}}_{it}(s) - Y_{it}(s) + \alpha_0(s) A_1(\bar Y_{it} - \bar Y^{\text{int}}_{it}, s) + \gamma_0(s) A_2(Y_{i,t-1} - Y^{\text{int}}_{i,t-1}, s)$. 
Thus, if $u_{it}(s)$ converges to zero sufficiently quickly for all $s \in [0,1]$, $i \in [n]$, and $t \in [T]$, we can apply the same estimation and inference strategy as above.

\section{Analyzing the Demand of Bike-Sharing System}\label{sec:empiric}

As an empirical application, we analyze spatial interactions in the demand for a bike-sharing system in the U.S.
Demand analysis of shared mobility has been a highly active research topic in recent years across various areas, including transportation research, marketing, economics, and environmental studies.
In particular, bike-sharing systems have attracted increasing attention.
For a comprehensive review of this literature, see \cite{eren2020review}.

\subsection{Data}
The dataset used in this analysis comes from the \textit{Bay Area Bike Share} in San Francisco, which was established in August 2013 and is now known as \textit{Bay Wheels}.
The dataset is publicly available on the \textit{Kaggle} website.\footnote{\url{https://www.kaggle.com/datasets/benhamner/sf-bay-area-bike-share}}
It contains detailed information about the system from August 2013 to August 2015, including station locations, the number of available bicycles at each station over time, and all trip-level data during this time period.
The trip data include details such as start and end times and stations, as well as the user type (subscriber or casual user).
In this dataset, there are 70 bike stations in total; for a map of all 70 station locations, see Figure \ref{fig:map} in Appendix \ref{app:emp}.

Since the initial installation of stations in August 2013, the 70th station (\textit{Ryland Park} station) was added in April 2014.
Accordingly, we use data from May 2014 to August 2015 for this analysis, which represents the largest balanced panel dataset that can be extracted from the raw data.

One concern in the analysis is that shared mobility services often relocate bikes across stations in order to rebalance bike availability.
To mitigate the effects of rebalancing operations in demand analysis, several approaches have been considered in the literature.
For example, using the number of trips, pickups, or returns at each station as the outcome variable is a common strategy (e.g., \citealp{el2017effects, sun2018promoting}).
However, this approach is not straightforwardly applicable in the present functional data framework.
Another approach is to detect potential rebalancing operations using heuristic thresholding rules (e.g., \citealp{faghih2014land, faghih2017empirical}).
Although it is generally impossible to access formal records of relocation operations, as noted by \cite{gammelli2022predictive}, the majority of bike-sharing systems are rebalanced during nighttime (this is called \textit{static rebalancing}).
Then, if we observe sudden jumps or drops in bike counts during nighttime hours, they are likely due to relocation operations conducted by the service provider.
We adopt this approach in our analysis.
Specifically, we first identify instances where the number of available bicycles jumps up/down by more than or equal to 10 all at once.
We then examine the distribution of such events across hours and days, as shown in Figure \ref{fig:reloc} in Appendix \ref{app:emp}.
This figure indicates that sudden drops or increases in bike availability tend to occur between midnight and early morning, particularly on Sundays.
Based on this, we exclude these time periods from our analysis.

Another concern is the enormous size of the dataset.
Because the original data are recorded in minutes every day, using the raw data directly can lead to a memory problem.
Moreover, daily data tend to fluctuate and to be noisy due to random events.

To address the aforementioned issues, we first rounded the trip data to 15-minute intervals and then averaged over Monday through Friday at each interval, discarding data from Saturdays and Sundays.
Furthermore, to avoid potential bike relocation events in weekdays, we restrict the analysis to the time period from 6 AM to 9 PM. 
Consequently, our final dataset is a weekly panel consisting of $n = 70$ stations and $T = 69$ weeks; the first week is not used for estimation but serves as the observation at $t = 0$.
The outcome of interest is the number of bicycles at each station for $s \in [0,1]$, where $s=0$ corresponds to 6 AM and $s=1$ corresponds to 9 PM. 
Figure \ref{fig:avgbike} presents the trajectories of average bike availability for all 70 stations during the first week in our panel.
It clearly shows that most of the variation in bike availability occurs between 6 AM and 9 PM.

\begin{figure}[!ht]
    \centering
    \includegraphics[width = 13cm]{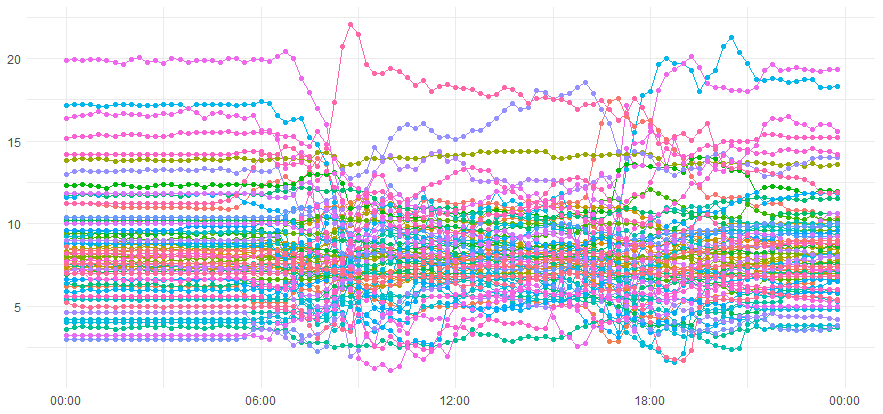}
    \caption{Availability of bikes at each station (averaged over 5-9 May, 2014)}
    \label{fig:avgbike}
\end{figure}

\subsection{Empirical results}

Based on the dataset constructed as described above, we estimate model \eqref{eq:model}, where
\begin{align}
    Y_{it}(s) & = \text{number of available bikes at each station} \\
    A_1(Y_{jt},s) & = \text{average of $Y_{jt}$ in the past one hour} \\
    A_2(Y_{i,t-1},s) & = \text{daily average of $Y_{i,t-1}$} \\
    X_{it} & = \text{[\;ratio of round trips, ratio of subscribers (departing from station $i$), ratio of} \\
    & \phantom{=} \quad \text{subscribers (arriving at station $i$), rainy day, holiday, month dummies\;]$^\top$} \\
    w_{ij} & = \frac{\tilde w_{ij}}{\sum_{j \neq i} \tilde w_{ij}}, \;\; \text{where} \;\; \tilde w_{ij} = \frac{\bm{1}\{\text{dist}(i,j) \le 1\text{km}\}}{\text{dist}(i,j)} 
\end{align}
Here, $\text{dist}(i,j)$ denotes the Euclidean distance between stations $i$ and $j$.
The coefficients on the month dummies are assumed to be constant over $s$.
The rainy day, holiday, and month dummy variables are not used as IVs.
The number of basis terms is selected via the cross-validation procedure described in Appendix \ref{app:MC}, using the last 11 weeks as the validation sample.
As a result, we set $K = 6$ and $L = 3K$.
The estimation procedure is the same as the GMM 1 estimator in Section \ref{sec:MC}.

The estimation result for the interaction effect function $\alpha_0$ is presented in Figure \ref{fig:alpha}.
In the figure, the shaded area depicts the (pointwise) 95\% confidence interval.
From the figure, we observe that positive spatial interaction in bicycle availability exists during the morning hours.
It is plausible that as bike-sharing becomes more popular particularly among commuters, it encourages further use of the service, thereby reinforcing demand during the morning.
Meanwhile, interestingly, negative interaction appears around 4--8 PM.
In the evening, main users may include not only returning commuters but also individuals going out for dining, shopping, concerts, etc.
As a result, bicycles might accumulate at certain popular stations while nearby less-popular stations experience lower availability, leading to the negative interaction.
Alternatively, as kindly suggested by a referee, the negative correlation may also be driven by rebalancing operations conducted by the service provider, which artificially accumulate bikes at popular origin stations.
However, it should be noted that autoregressive models of the current type are generally able to identify the presence of autocorrelation but are not designed to distinguish its underlying mechanisms.
Investigating the detailed mechanisms generating such autocorrelation would require a different empirical framework.
    
\begin{figure}[ht!]
    \centering
    \includegraphics[width = 10cm]{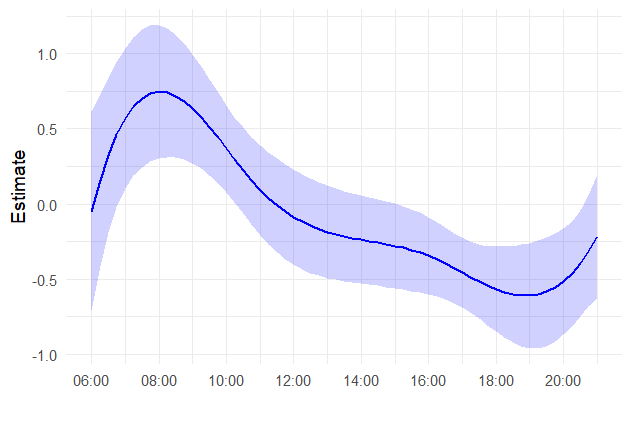}
    \caption{Estimated $\alpha_0(s)$}
    \label{fig:alpha}
\end{figure}

To save space, the estimation results for $\gamma_0(s)$ and $\beta_0(s)$ are presented in Figure \ref{fig:beta} in Appendix \ref{app:emp}, excluding the coefficients on the month dummies.
We find that the dynamic interaction effect across weeks is overall statistically insignificant.
Considering the nature of demand for bike-sharing, the dynamics of within-day bike availability should be much more important than what happened in the previous week.
Therefore, this may not be surprising.
For the other covariates, we observe that only the ratio of arriving subscribers has a statistically significant positive impact on bike availability.
This result is intuitive, as stations with a higher ratio of arriving subscribers are expected to have a larger stock of bikes.

For comparison, we also estimate a non-functional dynamic spatial autoregressive model with individual fixed effects using a 2SLS estimator.
The dependent variable is the average weekly bike availability at each station, obtained by averaging $Y_{it}(s)$ over $s \in [\text{6 AM}, \text{9 PM}]$.
To save space, the results are reported in Table \ref{tab:panelNAR} in Appendix \ref{app:emp}.
Notably, the estimated spatial autoregressive parameter is statistically insignificant, possibly because the positive interaction in the morning and the negative interaction at night offset each other, as observed in Figure \ref{fig:alpha}.
This illustrates that ignoring the functional nature of time-varying bike availability may lead to a misinterpretation of the magnitude and direction of spatial interactions that vary over time.

Lastly, we conduct an impulse response analysis. 
The figures summarizing the results are presented in Figure \ref{fig:imp}. 
For illustration, we arbitrarily select the \textit{Embarcadero at Folsom} station as the target station receiving an external shock.
Specifically, we consider a hypothetical scenario in which the bike stock at this station is reduced by 2 at the peak of 8 AM (panel (a)). 
Panels (b) and (c) illustrate how the shock propagates to its two nearest stations, \textit{Spear at Folsom} and \textit{Temporary Transbay Terminal}.
These figures indicate that the external shock spills over to these stations with a slight time delay.
Since the magnitude of both the external shock and spatial interaction is moderate in this analysis, the impulse responses for both stations are relatively mild.

\begin{figure}[ht!]
    \centering
    \begin{subfigure}{0.5\textwidth}
        \includegraphics[width=\textwidth]{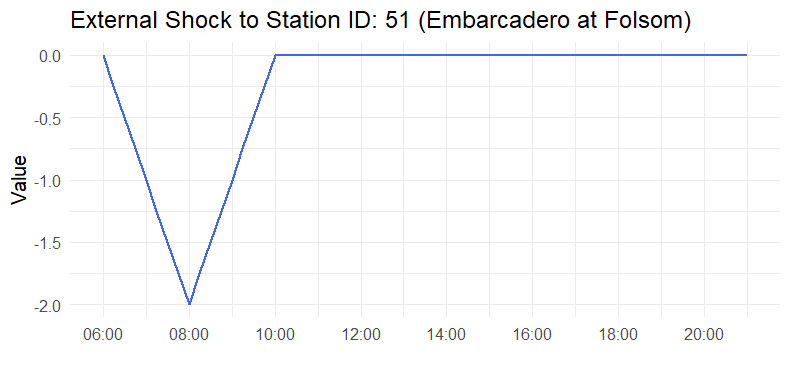}
        \caption{External shock function $\eta$}
    \end{subfigure}
    
    \begin{subfigure}{0.45\textwidth}
        \includegraphics[width=\textwidth]{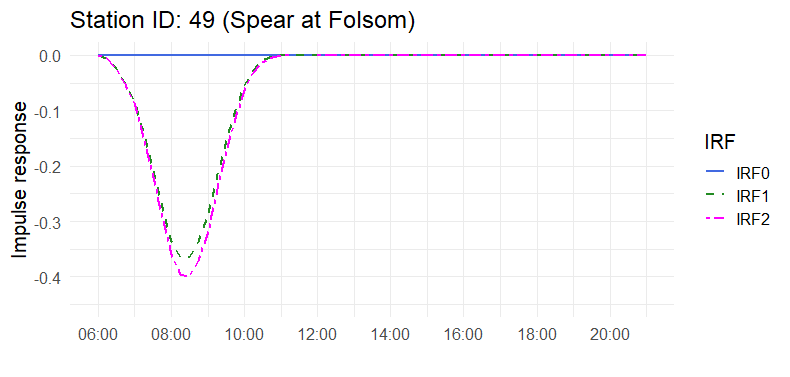}
        \caption{\textit{Spear at Folsom} station}
    \end{subfigure}
    \hfill
    \begin{subfigure}{0.45\textwidth}
        \includegraphics[width=\textwidth]{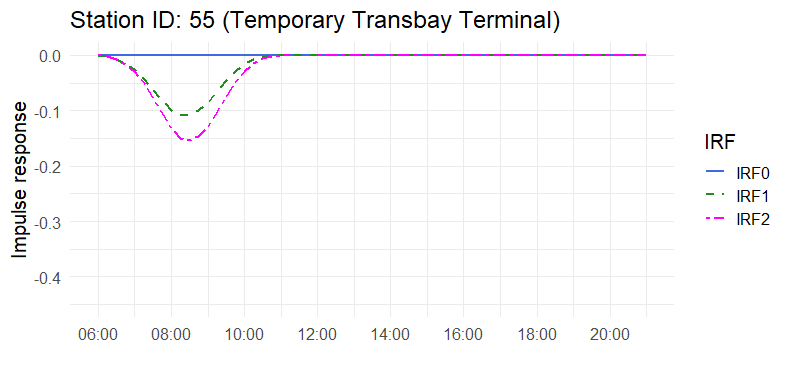}
        \caption{\textit{Temporary Transbay Terminal} station}
    \end{subfigure}

    \caption{Impulse responses}
    \label{fig:imp}

    \bigskip

    \footnotesize

    IRF: $\text{IRF0} = W_n^0 \bm{e}_i \hat \tau_{nT}^0(\eta, s)$, $\text{IRF1} = \text{IRF0} + W_n^1 \bm{e}_i \hat \tau_{nT}^1(\eta, s)$, and $\text{IRF2} = \text{IRF1} + W_n^2 \bm{e}_i \hat \tau_{nT}^2(\eta, s)$. 
\end{figure}

\section{Conclusion} \label{sec:conclusion}

In this paper, we proposed a novel functional DNAR framework to analyze network and dynamic interactions in functional panel data settings.
By extending the standard DNAR model to accommodate functional outcomes and individual fixed effects, we developed an integrated-GMM estimator that can estimate the functional parameters potentially more efficiently than 2SLS-based estimators.
Under certain conditions, we established the theoretical properties of our estimator, including consistency, convergence rates, and pointwise asymptotic normality, and confirmed its finite-sample performance through Monte Carlo simulations.
We also discussed the estimation of marginal effects and functional network impulse responses.
As an empirical application, we analyzed bike availability in a bike-sharing system in the San Francisco Bay Area and discovered a significant spatial interaction that varies over the time of day.
This complex interaction pattern cannot be observed if the functional nature of the response variable is ignored, and thus our finding highlights the importance of accounting for functional dependencies in modeling the demand for shared mobility systems.

\section*{Acknowledgments}

The authors thank the participants at the seminars of National Cheng Kung University and National Taiwan University for their valuable feedback.
Hoshino's work was supported by JSPS KAKENHI Grant Number 23KK0226.
Most parts of this paper were written during Hoshino's research visit at the Melbourne Business School (MBS), University of Melbourne.
He is deeply grateful to MBS for their hospitality.

\clearpage
\begin{center}
\Large Supplementary Appendix for 

"Dynamic Network Autoregressive Models for Functional Panel Data"

\end{center}

\appendix

Appendix \ref{app:prep} collects preparatory materials used to establish our theoretical results.
Appendix \ref{app:proofs} contains the proofs of Theorems \ref{thm:roc} and \ref{thm:normality}, Corollary \ref{cor:normality}, and Proposition \ref{prop:impulse}.
Appendix \ref{app:covmat} develops additional asymptotic results for consistent variance estimation.
Appendices \ref{app:MC} and \ref{app:emp} contain additional materials for the Monte Carlo simulation results and the empirical analysis, respectively.
Appendix \ref{app:twoway} provides the proof of Theorem \ref{thm:twoway} on the asymptotic normality of the two-way fixed effects estimator.

\section{Preparation}\label{app:prep}
The following definition is from \cite{jenish2012spatial}.
\begin{definition}[Near-epoch dependence]\label{def:NED}
    Let $\bm{x} =\{ x_{n,i}: i \in \mathcal{D}_n; \; n \geq 1\}$ and $\bm{e} = \{e_{n,i} : i \in \mathcal{D}_n; \; n \geq 1\}$ be triangular arrays of random fields, where $x$ and $e$ are real-valued and general (possibly infinite-dimensional) random variables, respectively. 
    Then, the random field $\bm{x}$ is said to be $L^p$-near-epoch dependent (NED) on $\bm{e}$ if
    \begin{equation*}
    \left\| x_{n,i} - \bbE \left[ x_{n,i} \mid \mathcal{F}_{n,i}(\delta)\right] \right\|_{p} \leq c_{n,i} \rho(\delta)
    \end{equation*}
    for an array of finite positive constants $\{c_{n,i} : i \in \mathcal{D}_n; \; n \geq 1\}$ and some function $\rho(\delta) \geq 0$ with $\rho(\delta) \to 0$ as $\delta\to \infty $, where $\mathcal{F}_{n,i}(\delta)$ is the $\sigma$-field generated by $\{ e_{n,j} : \Delta(i,j) \leq \delta \}$. 
    The $c_{n,i}$'s and $\rho(\delta)$ are called the NED scaling factors and NED coefficient, respectively.
    $\bm{x}$ is said to be uniformly $L^p$-NED on $\bm{e}$ if $c_{n,i}$ is uniformly bounded.
    If $\rho(\delta) \lesssim \varrho^{\delta}$ for some $ 0 < \varrho <1$, then it is called geometrically $L^p$-NED. 
\end{definition}

Similar to \cite{zhang2025nonlinear}, we apply Definition \ref{def:NED} to spatial-time random fields by treating each pair $(i, t)$ as a single location.
Let $\mathcal{D}_{nT}^{\mathrm{ST}} \coloneqq \mathcal{D}_n \times [T] \subset \mathbb{R}^{d+1}$, and define the spatial-time distance between $(i, t), (j, u) \in \mathcal{D}_{nT}^{\mathrm{ST}}$ as
\begin{align}
    \Delta_{\mathrm{ST}} \left( (i, t), (j, u) \right) \coloneqq \max \left\{ \Delta(i, j), |t - u| \right\}.
\end{align}
For a spatial-time random field $\bm{e} = \{ e_{nT,it} : (i, t) \in \mathcal{D}_{nT}^{\mathrm{ST}} \}$, define $\mathcal{F}_{nT,it}^{\mathrm{ST}}(\delta)$ as the $\sigma$-field generated by $\{ e_{nT,ju} : \Delta_{\mathrm{ST}} \left( (i, t), (j, u) \right) \le \delta \}$.
Accordingly, Definition \ref{def:NED} applies to spatial-time random fields by replacing $\mathcal{D}_n$, $\Delta(i, j)$, and $\mathcal{F}_{n,i}(\delta)$ with $\mathcal{D}_{nT}^{\mathrm{ST}}$, $\Delta_{\mathrm{ST}} \left( (i, t), (j, u) \right)$, and $\mathcal{F}_{nT,it}^{\mathrm{ST}}(\delta)$, respectively.

\bigskip

In the following, for a general $\theta = (\theta_\alpha^\top, \theta_\gamma^\top, \theta_1^\top, \ldots, \theta_{d_x}^\top)^\top \in \Theta_K$, we denote
\begin{align}
    \alpha(s; \theta)
    & \coloneqq \phi^K(s)^\top \theta_\alpha, \;\;
    \gamma(s; \theta)
    \coloneqq \phi^K(s)^\top \theta_\gamma \\
    \beta_j(s; \theta)
    & \coloneqq \phi^K(s)^\top \theta_j, \;\; j \in [d_x]
\end{align}
Recall that we have assumed that $\alpha(s; \theta)$, $\gamma(s; \theta)$, and $\beta_j(s; \theta)$ are uniformly bounded on $[0,1]$ for all $\theta \in \Theta_K$ and $K$.
For a given $\theta$, the residual vector can be written as $\bm{E}(s; \theta) = (E_1(s; \theta)^\top, \ldots, E_T(s; \theta)^\top)^\top$, where
\begin{align}
    E_t(s; \theta)
    & = (e_{1t}(s; \theta), \ldots , e_{nt}(s; \theta))^\top \\
    e_{it}(s; \theta)
    & = Y_{it}(s) - \alpha(s; \theta) A_1(\bar Y_{it}, s) - \gamma(s; \theta) A_2( Y_{i,t-1}, s) - \sum_{j = 1}^{d_x} X_{it}^j \beta_j(s; \theta).
\end{align}
Under Assumptions \ref{as:inverse}(i), \ref{as:observables}, and \ref{as:A}, we have
\begin{align}\label{eq:error_bound}
    \left\| e_{it}(s; \theta) \right\|_p
    & \le \left\|Y_{it}(s)\right\|_p + |\alpha(s; \theta)| \sum_{j = 1}^n |w_{i,j}| \left\| A_1(Y_{jt}, s)\right\|_p + |\gamma(s; \theta)| \left\| A_2(Y_{i,t-1}, s)\right\|_p + \sum_{j = 1}^{d_x} |X_{it}^j| \cdot |\beta_j(s; \theta)| \\
    & \lesssim 1
\end{align}
for $p > 4$, uniformly in $s \in [0,1]$, $\theta \in \Theta_K$, and $(i,t)$.

Finally, for ease of reference, we provide a list of some basic facts below:
\small
\begin{align}
    \bm{D} \bm{E}(s; \theta) 
    & = \bm{D}\bm{H}(s) (\theta_0 - \theta) + \bm{D} \bm{V}(s) + \bm{D} \bm{\mathcal{E}}(s) \\
    \bm{Z}(s)^\top \bm{D}^\top \bm{D} \bm{E}(s; \theta) 
    & = \bm{Z}(s)^\top \bm{D}^\top \bm{D} \bm{H}(s) (\theta_0 - \theta) + \bm{Z}(s)^\top \bm{D}^\top \bm{D} \bm{V}(s) + \bm{Z}(s)^\top \bm{D}^\top \bm{D} \bm{\mathcal{E}}(s) \\
    \bm{Z}(s)^\top \bm{D}^\top \bm{D} \bm{E}(s; \theta_0) 
    & = \bm{Z}(s)^\top \bm{D}^\top \bm{D} \bm{V}(s) + \bm{Z}(s)^\top \bm{D}^\top \bm{D} \bm{\mathcal{E}}(s) \\
    \bm{E}(s; \theta)^\top  \bm{D}^\top P_m \bm{D} \bm{E}(s; \theta) 
    & =  (\theta_0 - \theta)^\top \bm{H}(s)^\top \bm{D}^\top P_m \bm{D} \bm{H}(s) (\theta_0 - \theta) + \bm{V}(s)^\top \bm{D}^\top P_m \bm{D} \bm{V}(s) \\
    & \quad + \bm{\mathcal{E}}(s)^\top \bm{D}^\top P_m \bm{D} \bm{\mathcal{E}}(s) + 2\bm{V}(s)^\top \bm{D}^\top P_m \bm{D} \bm{H}(s) (\theta_0 - \theta) \\
    & \quad + 2\bm{\mathcal{E}}(s)^\top \bm{D}^\top P_m \bm{D} \bm{H}(s) (\theta_0 - \theta) + 2\bm{V}(s)^\top \bm{D}^\top P_m \bm{D} \bm{\mathcal{E}}(s) \\
    \bm{E}(s; \theta_0)^\top  \bm{D}^\top P_m \bm{D} \bm{E}(s; \theta_0) 
    & =  \bm{V}(s)^\top \bm{D}^\top P_m \bm{D} \bm{V}(s) + \bm{\mathcal{E}}(s)^\top \bm{D}^\top P_m \bm{D} \bm{\mathcal{E}}(s) + 2\bm{V}(s)^\top \bm{D}^\top P_m \bm{D} \bm{\mathcal{E}}(s)
\end{align}
\normalsize
Empirical moment function:
\begin{align}
    g_{nT}(s; \theta) \coloneqq
    \frac{1}{N}\left(\begin{array}{c}
        \bm{Z}(s)^\top \bm{D}^\top \bm{D} \bm{E}(s; \theta) \\
        \bm{E}(s; \theta)^\top \bm{D}^\top P_1 \bm{D} \bm{E}(s; \theta) \\
        \vdots \\
        \bm{E}(s; \theta)^\top \bm{D}^\top P_M \bm{D} \bm{E}(s; \theta)
    \end{array}
    \right)
\end{align}
Jacobian of $g_{nT}(s; \theta)$:
\begin{align}\label{eq:Jmat}
    J_{nT}(s; \theta)
    & \coloneqq \frac{\partial g_{nT}(s; \theta)}{\partial \theta^\top} = -\frac{1}{N}\left(\begin{array}{c}
     \bm{Z}(s)^\top \bm{D}^\top \bm{D} \bm{H}(s) \\
     2\bm{E}(s; \theta)^\top \bm{D}^\top P_1 \bm{D} \bm{H}(s) \\
     \vdots \\
     2\bm{E}(s; \theta)^\top \bm{D}^\top P_M \bm{D} \bm{H}(s)
 \end{array}
 \right)
\end{align}
Decompose $\bar g_{nT}(\theta_0) = \bar g_{1,nT} + \bar g_{2,nT}$ with
\begin{align}
    \bar g_{1,nT}
    & \coloneqq \frac{1}{NL} \sum_{l=1}^L \left(\begin{array}{c}
     \bm{Z}(s_l)^\top \bm{D}^\top \bm{D} \bm{\mathcal{E}}(s_l) \\
     \bm{\mathcal{E}}(s_l)^\top \bm{D}^\top P_1 \bm{D} \bm{\mathcal{E}}(s_l) \\
     \vdots \\
     \bm{\mathcal{E}}(s_l)^\top \bm{D}^\top P_M \bm{D} \bm{\mathcal{E}}(s_l)
 \end{array}\right) \\
  \bar g_{2,nT} 
  & \coloneqq \frac{1}{NL} \sum_{l=1}^L \left(\begin{array}{c}
     \bm{Z}(s_l)^\top \bm{D}^\top \bm{D} \bm{V}(s_l) \\
     \bm{V}(s_l)^\top \bm{D}^\top P_1 \bm{D} \bm{V}(s_l) + 2 \bm{V}(s_l)^\top \bm{D}^\top P_1 \bm{D} \bm{\mathcal{E}}(s_l) \\
     \vdots \\
     \bm{V}(s_l)^\top \bm{D}^\top P_M \bm{D} \bm{V}(s_l) + 2 \bm{V}(s_l)^\top \bm{D}^\top P_M \bm{D} \bm{\mathcal{E}}(s_l)
 \end{array}\right) 
\end{align}
The variance-covariance matrix of $\sqrt{N}\bar g_{1,nT}$:
\begin{align}\label{eq:Vmat}
    \mathcal{V}_{nT} 
    \coloneqq N \bbE\left[\bar g_{1,nT} \bar g_{1,nT}^\top \right]
     = \left( \begin{array}{cccc}
    \mathcal{V}_{z, nT} & \bm{0}_{(d_q + d_x)K \times 1} & \cdots & \bm{0}_{(d_q + d_x)K \times 1} \\
     \bm{0}_{1 \times (d_q + d_x)K} &     \mathcal{V}_{11, nT} & \cdots & \mathcal{V}_{1M, nT} \\
     \vdots &  \vdots & \ddots & \vdots \\
    \bm{0}_{1 \times (d_q + d_x)K} &     \mathcal{V}_{M1, nT} & \cdots & \mathcal{V}_{MM, nT}.
    \end{array} \right)
\end{align}
Here, for convenience, let $\bm{n} \coloneqq nT$ and relabel each pair $(i,t)$ by $\mathrm{i} \coloneqq i + n(t - 1) \in [\bm{n}]$; for example, $(i,t) = (1,1) \iff \text{i} = 1$,  $(i,t) = (2,1) \iff \text{i} = 2$, \ldots, $(i,t) = (n,T) \iff \text{i} = \bm{n}$.
Then, we can write
\begin{align}\begin{split}\label{eq:Vz}
    \mathcal{V}_{z, nT}
    & \coloneqq \frac{N}{(NL)^2} \sum_{l = 1}^L \sum_{l' = 1}^L  \bm{Z}(s_l)^\top \bm{D}^\top \bm{D} \bbE[\bm{\mathcal{E}}(s_l) \bm{\mathcal{E}}(s_{l'})^\top]  \bm{D}^\top \bm{D} \bm{Z}(s_{l'}) \\
    & = \frac{1}{L^2 N} \sum_{l = 1}^L \sum_{l' = 1}^L  \sum_{\text{i} = 1}^{\bm n} z_{\text{i}}^\dagger(s_l) z_{\text{i}}^\dagger(s_{l'})^\top \Gamma_{\text{i}}(s_l, s_{l'}),
\end{split}\end{align}
$z_\text{i}^\dagger(s)$ denotes the $\text{i}$-th column of $\underbracket{\bm{Z}(s)^\top \bm{D}^\top \bm{D}}_{(d_q + d_x)K \times nT}$.
Furthermore, let $\underbracket{\bm{D}^\top P_m \bm{D}}_{nT \times nT} \eqqcolon \tilde P_m = (\tilde p_{m,\text{i},\text{j}})$, and 
\begin{align}
    \underbracket{D_1}_{(T-1) \times T} \coloneqq \left(\begin{array}{ccccc}
        -1 & 1 & 0  & \cdots & 0 \\
        0  & -1 & 1 & \cdots & 0 \\
        \vdots & \ddots & \ddots & \ddots & \vdots \\
        0 & \cdots & 0 & -1 & 1 
    \end{array}\right)
\end{align}
such that $\bm D = D_1 \otimes I_n$.
Then, we have 
\begin{align}
    \text{diag}(\tilde P_m) 
    & = \text{diag}\left( (D_1^\top \otimes I_n) (I_{T-1} \otimes P_{m,1}) (D_1 \otimes I_n) \right)\\
    & = \text{diag}\left( (D_1^\top  D_1) \otimes P_{m,1} \right) = \bm 0_{nT}.
\end{align}
Since $\tilde P_m$ is symmetric,
\begin{align}\begin{split}\label{eq:Vab}
    \mathcal{V}_{ab, nT}
    & \coloneqq \frac{N}{(NL)^2} \sum_{l=1}^L \sum_{l'=1}^L \bbE[ \bm{\mathcal{E}}(s_l)^\top \bm{D}^\top P_a \bm{D} \bm{\mathcal{E}}(s_l) \bm{\mathcal{E}}(s_{l'})^\top \bm{D}^\top P_b \bm{D} \bm{\mathcal{E}}(s_{l'}) ] \\
    & = \frac{1}{N} \sum_{1 \le \text{i}, \text{j}, \text{k}, \text{l} \le \bm n} \tilde p_{a, \text{i}, \text{j}} \tilde p_{b, \text{k}, \text{l}} \frac{1}{L^2} \sum_{l = 1}^L \sum_{l' = 1}^L \bbE[ \varepsilon_{\text{i}}(s_l) \varepsilon_{\text{j}}(s_l) \varepsilon_{\text{k}}(s_{l'}) \varepsilon_{\text{l}}(s_{l'}) ] \\
    & = \frac{1}{N} \sum_{1 \le \text{i}, \text{k} \le \bm n} \tilde p_{a, \text{i}, \text{k}} \tilde p_{b, \text{i}, \text{k}} \frac{1}{L^2} \sum_{l = 1}^L \sum_{l' = 1}^L \Gamma_{\text{i}}(s_l, s_{l'}) \Gamma_{\text{k}}(s_l, s_{l'}) \\
    & \quad + \frac{1}{N} \sum_{1 \le \text{i}, \text{k} \le \bm n} \tilde p_{a, \text{i}, \text{k}} \tilde p_{b, \text{k}, \text{i}} \frac{1}{L^2} \sum_{l = 1}^L \sum_{l' = 1}^L \Gamma_{\text{i}}(s_l, s_{l'}) \Gamma_{\text{k}}(s_l, s_{l'}) \\
    & = \frac{2}{N} \sum_{1 \le \text{i}, \text{k} \le \bm n} \tilde p_{a, \text{i}, \text{k}} \tilde p_{b, \text{i}, \text{k}} \frac{1}{L^2} \sum_{l = 1}^L \sum_{l' = 1}^L \Gamma_{\text{i}}(s_l, s_{l'}) \Gamma_{\text{k}}(s_l, s_{l'}) .
\end{split}\end{align}
Note that the cross terms between the linear and quadratic moments are zero:
\begin{align}
    & \frac{N}{(NL)^2} \sum_{l=1}^L \sum_{l'=1}^L \bbE[ \bm{Z}(s_l)^\top \bm{D}^\top \bm{D} \bm{\mathcal{E}}(s_l) \bm{\mathcal{E}}(s_{l'})^\top \bm{D}^\top P_m \bm{D} \bm{\mathcal{E}}(s_{l'})] \\
    & =  \frac{N}{(NL)^2} \sum_{l = 1}^L \sum_{l' = 1}^L \sum_{1 \le \text{i}, \text{j}, \text{k} \le \bm n} z_\text{i}^\dagger(s_l) \tilde p_{m, \text{j}, \text{k}} \bbE[ \varepsilon_\text{i}(s_l) \varepsilon_\text{j}(s_{l'}) \varepsilon_\text{k}(s_{l'}) ] = \bm{0}_{(d_q + d_x)K}.
\end{align}


\section{Proofs} \label{app:proofs}

\subsection{Results for Proposition \ref{prop:stationarity}}

\subsubsection{Proof of Proposition \ref{prop:stationarity}}

We first derive the condition for completeness.
Observe that
\begin{align}\label{eq:Lpinequality}
    \begin{split}
    \left\| \{\mathcal{A}_1 H\}_i \right\|_{L^2}
     = \left\| \alpha_0(\cdot) \sum_{j = 1}^n w_{i,j} A_1(h_j, \cdot) \right\|_{L^2}
    & \le \sum_{j = 1}^n |w_{i,j}|\left\| \alpha_0(\cdot)  A_1(h_j, \cdot) \right\|_{L^2} \\
    & = \sum_{j = 1}^n |w_{i,j}| \left(\int_0^1 \left|\alpha_0(s)  A_1(h_j, s) \right|^2 \text{d}s \right)^{1/2} \\
    & \le \bar \alpha_0 \sum_{j = 1}^n |w_{i,j}| \left\| A_1(h_j, \cdot) \right\|_{L^2} \\
    & \le \bar \alpha_0 ||W_n||_\infty \max_{1 \le j \le n}||h_j||_{L^2}
    \end{split}
\end{align}
for any $H \in \mathcal{H}_{n,2}$.
Taking the maximum over $i\in[n]$ in \eqref{eq:Lpinequality} yields
\begin{align}
    \left\|\mathcal A_1H\right\|_{\infty,2} \le \bar\alpha_0 \|W_n\|_\infty \| H \|_{\infty,2}.
\end{align}
Thus, if $\bar\alpha_0 \|W_n\|_\infty < 1$, $\mathcal A_1$ is a bounded linear operator on $\mathcal H_{n,2}$, and its operator norm satisfies $\|\mathcal A_1\|_{\mathrm{op}} < 1$.
It follows from the Neumann series theorem that $(\mathrm{Id} - \mathcal A_1)^{-1}$ exists and
\begin{align}
    \left\|(\mathrm{Id} - \mathcal A_1)^{-1}\right\|_{\mathrm{op}} \le \sum_{\ell = 0}^\infty \|\mathcal A_1\|_{\mathrm{op}}^\ell \le \frac{1}{1 - \bar\alpha_0\|W_n\|_\infty};
\end{align}
see, for example, Theorem 2.14 of \citet{Kress2014linear}.

Next, we establish stability.
By Assumption \ref{as:inverse}(ii),
\begin{align}
    \left\|\mathcal A_2H\right\|_{\infty,2}=\max_{i\in[n]}\left\|\gamma_0(\cdot)A_2(h_i,\cdot)\right\|_{L^2}\le\bar\gamma_0\max_{i\in[n]}\left\|A_2(h_i,\cdot)\right\|_{L^2}\le\bar\gamma_0\|H\|_{\infty,2}.
\end{align}
Therefore,
\begin{align}
    \left\|\mathcal A_3H\right\|_{\infty,2}\le\left\|(\mathrm{Id}-\mathcal A_1)^{-1}\right\|_{\mathrm{op}}\left\|\mathcal A_2H\right\|_{\infty,2}\le\frac{\bar\gamma_0}{1-\bar\alpha_0\|W_n\|_\infty}\|H\|_{\infty,2}.
\end{align}
Then, under $\bar\alpha_0 \|W_n\|_\infty + \bar\gamma_0 < 1$, we have $\bar\gamma_0/(1-\bar\alpha_0\|W_n\|_\infty) < 1$.
Thus, $\mathcal A_3$ is a contraction mapping on $\mathcal H_{n,2}$.

$\blacksquare$

\subsubsection{Alternative sufficient condition for the completeness}\label{subsub:alt}

Suppose that
\begin{align}\label{eq:alt_inverse}
\begin{array}{ll}
    \text{(i)} & ||\alpha_0||_{L^2}  || W_n ||_\infty < 1, \\
    \text{(ii)} & A_1(h,s) = \int_0^1 h(u) \nu_1(u,s) \text{d}u, \; \text{and} \;  \sup_{(u,s) \in [0,1]^2} |\nu_1(u,s)| \le 1 .
\end{array}
\end{align}
In contrast to the condition bounding $\bar{\alpha}_0 ||W_n ||_\infty$ in the main text, the network effect function $\alpha_0(s)$ does not need to be uniformly bounded in a small range to satisfy condition (i) in \eqref{eq:alt_inverse}.

Under these conditions, we obtain
\begin{align}
    \int_0^1 \left|\alpha_0(s) A_1(h,s)\right|^2 \text{d}s
    & = \int_0^1 |\alpha_0(s)|^2
        \left| \int_0^1 h(u) \nu_1(u,s) \text{d}u \right|^2
        \text{d}s \\
    & \le \int_0^1 |\alpha_0(s)|^2
        \int_0^1 |h(u)|^2 |\nu_1(u,s)|^2 \text{d}u
        \text{d}s \\
    & \le ||\alpha_0||_{L^2}^2  ||h||_{L^2}^2 .
\end{align}
Hence, similarly to \eqref{eq:Lpinequality}, we have
\begin{align}
    \left\| \{\mathcal{A}_1 H\}_i \right\|_{L^2}
    & \le \sum_{j = 1}^n |w_{i,j}| \left(\int_0^1 \left|\alpha_0(s) A_1(h_j, s) \right|^2 \text{d}s \right)^{1/2} \\ 
    & \le \left\|\alpha_0\right\|_{L^2} \left\| W_n \right\|_\infty \max_{1 \le j \le n}\left\| h_j \right\|_{L^2} < ||H||_{\infty, 2}
\end{align}
for any non-zero $H \in \mathcal{H}_{n,2}$.

\subsection{Proof of Theorem \ref{thm:roc}}


\subsubsection{Lemmas}

\begin{lemma}\label{lem:g0}
    Suppose that Assumptions \ref{as:inverse}(i), \ref{as:observables}, \ref{as:error}(i)--(ii), \ref{as:A}, \ref{as:weights}(i), and \ref{as:basis} hold.
    Then, $\left|\bbE[g_{nT}(s; \theta_0)]\right| \lesssim \bm{1}_{d_g} K^{- \varsigma}$ elementwise.
\end{lemma}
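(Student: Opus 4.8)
The plan is to take expectations term by term in the decompositions of $\bm{Z}(s)^\top\bm{D}^\top\bm{D}\bm{E}(s;\theta_0)$ and $\bm{E}(s;\theta_0)^\top\bm{D}^\top P_m\bm{D}\bm{E}(s;\theta_0)$ already recorded in the list of basic facts, and to show that every surviving term is of order $K^{-\pi}$. The entire bias originates from the series-approximation vector $\bm{V}(s)$, whose entries $v_{it}(s)$ in \eqref{eq:approxer} split into a deterministic covariate piece $\sum_{j} X_{it}^j\{\beta_{0j}(s)-\phi^K(s)^\top\theta_{0j}\}$ and a stochastic piece $A(\bar Y_{it},s)\{\alpha_0(s)-\phi^K(s)^\top\theta_{0\alpha}\}$. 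By Assumption \ref{as:basis} both bracketed errors are $\lesssim K^{-\pi}$ uniformly in $s$; and since $\|A(\bar Y_{it},s)\|_2\le\sum_j|w_{i,j}|\,\|A(Y_{jt},s)\|_2\lesssim 1$ by Assumptions \ref{as:A}, \ref{as:inverse}(i), and \ref{as:observables}(ii), we get $\|v_{it}(s)\|_2\lesssim K^{-\pi}$ uniformly in $(i,t,s)$, which is the estimate driving everything.

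For the linear block, I would use that $\bm{Z}(s)$ and $\bm{D}$ are non-stochastic (Assumption \ref{as:observables}(i)) and $\bbE[\bm{\mathcal{E}}(s)]=\bm{0}$ (Assumption \ref{as:error}(ii)), so $\bbE[\bm{Z}(s)^\top\bm{D}^\top\bm{D}\bm{E}(s;\theta_0)]=\bm{Z}(s)^\top\bm{D}^\top\bm{D}\,\bbE[\bm{V}(s)]$. Each entry of $\bbE[\bm{V}(s)]$ satisfies $|\bbE[v_{it}(s)]|\le|\bbE[A(\bar Y_{it},s)]|\cdot|\alpha_0(s)-\phi^K(s)^\top\theta_{0\alpha}|+\cdots\lesssim K^{-\pi}$, while the entries of $\bm{Z}(s)$ are bounded and $\bm{D}^\top\bm{D}$ has uniformly bounded row sums; hence each coordinate is a weighted sum of $O(nT)$ terms of size $K^{-\pi}$, and dividing by $n(T-1)$ gives the required $O(K^{-\pi})$ bound.

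For the $m$-th quadratic block, writing $\tilde P_m=\bm{D}^\top P_m\bm{D}$, I would split $\bbE[\bm{E}(s;\theta_0)^\top\tilde P_m\bm{E}(s;\theta_0)]$ into $\bbE[\bm{V}^\top\tilde P_m\bm{V}]$, $\bbE[\bm{\mathcal{E}}^\top\tilde P_m\bm{\mathcal{E}}]$, and $2\,\bbE[\bm{V}^\top\tilde P_m\bm{\mathcal{E}}]$. The middle piece vanishes exactly: $\tilde P_m$ has zero diagonal and the $\varepsilon_{it}(s)$ are independent with mean zero across $(i,t)$ (Assumptions \ref{as:error}(i)--(ii) and \ref{as:weights}(i)), so this is precisely the quadratic moment condition. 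The pure-$\bm{V}$ piece is handled by Cauchy--Schwarz, $|\bbE[v_{it}(s)v_{jt'}(s)]|\le\|v_{it}(s)\|_2\|v_{jt'}(s)\|_2\lesssim K^{-2\pi}$, combined with the bounded row sums of $\tilde P_m$; summing $O(nT)$ such entries and dividing by $n(T-1)$ yields $\lesssim K^{-2\pi}\le K^{-\pi}$.

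The main obstacle, and the only genuinely delicate term, is the cross piece $\bbE[\bm{V}^\top\tilde P_m\bm{\mathcal{E}}]$: because $A(\bar Y_{it},s)$ is endogenous, $\bm{V}(s)$ is correlated with $\bm{\mathcal{E}}(s)$, so this term is not zero. Here the key observation is that the deterministic covariate part of $v_{it}(s)$ annihilates the mean-zero $\varepsilon_{jt'}(s)$, leaving only $\{\alpha_0(s)-\phi^K(s)^\top\theta_{0\alpha}\}\sum_{(i,t),(j,t')}\tilde p_{m,(i,t),(j,t')}\,\bbE[A(\bar Y_{it},s)\varepsilon_{jt'}(s)]$. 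The scalar prefactor is $\lesssim K^{-\pi}$; each covariance is bounded by $\|A(\bar Y_{it},s)\|_2\|\varepsilon_{jt'}(s)\|_2\lesssim 1$ via Cauchy--Schwarz and Assumptions \ref{as:A}, \ref{as:observables}(ii), and \ref{as:error}(ii); and the double sum is $O(nT)$ by the bounded row sums of $\tilde P_m$ (Assumption \ref{as:weights}(i)), so after dividing by $n(T-1)$ this block is again $O(K^{-\pi})$. Collecting the linear and all quadratic blocks gives $\bbE[g_{nT}(s,\theta_0)]\lesssim\bm{1}_{d_g}K^{-\pi}$, as claimed.
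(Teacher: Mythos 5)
Your proof is correct and takes essentially the same route as the paper's: decompose the moment vector into the linear block and the quadratic blocks, establish $|\bbE[v_{it}(s)]|\lesssim K^{-\pi}$ and $\|v_{it}(s)\|_2\lesssim K^{-\pi}$ from Assumptions \ref{as:basis}, \ref{as:A}, \ref{as:inverse}(i) and \ref{as:observables}, and control each block by Cauchy--Schwarz together with the bounded row sums of $\bm{D}^\top\bm{D}$ and $\bm{D}^\top P_m\bm{D}$ (the only cosmetic difference is that the paper applies Cauchy--Schwarz directly to $\bbE[\vec v_{it}(s)\vec\varepsilon_{jt}(s)]$, whereas you first annihilate the deterministic covariate part of $v_{it}$ against the mean-zero errors, yielding the identical $K^{-\pi}$ bound). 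One bookkeeping remark: the vanishing of $\bbE[\bm{\mathcal{E}}(s)^\top\bm{D}^\top P_m\bm{D}\bm{\mathcal{E}}(s)]$, which you justify via Assumption \ref{as:error}(i), genuinely requires cross-sectional independence (or uncorrelatedness) even though that assumption is not listed in the lemma's hypotheses; the paper relies on the same fact implicitly by omitting this term from the first display of its proof, so this is a shared, not a new, gap.
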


\begin{proof}
Observe that
\begin{align}
    \bbE[ g_{nT}(s; \theta_0)]
    & = \frac{1}{N}\left(\begin{array}{c}
     \bm{Z}(s)^\top \bm{D}^\top \bm{D} \bbE[\bm{V}(s)] \\
     \bbE[ \bm{V}(s)^\top \bm{D}^\top P_1 \bm{D} \bm{V}(s) ] +  2 \bbE[ \bm{V}(s)^\top \bm{D}^\top P_1 \bm{D} \bm{\mathcal{E}}(s)]\\
     \vdots \\
     \bbE[ \bm{V}(s)^\top \bm{D}^\top P_M \bm{D} \bm{V}(s) ] +  2 \bbE[ \bm{V}(s)^\top \bm{D}^\top P_M \bm{D} \bm{\mathcal{E}}(s)]
 \end{array}
 \right) 
\end{align}
By Assumptions \ref{as:observables}(ii) and \ref{as:A}, 
\begin{align}
    |A_k(\bbE[Y_{it}], s)|
    & \le \int_0^1 |\bbE[Y_{it}(u)]| \omega_1(u,s) \text{d}u \\
    & \le \int_0^1 \bbE|Y_{it}(u)| \omega_1(u,s) \text{d}u \lesssim 1
\end{align} 
uniformly in $s \in [0,1]$, implying that $\sup_{s \in [0,1]}|A_k(\bbE[Y_{it}], s)| \lesssim 1$.
Then, we have
\begin{align}
    |\bbE[v_{it}(s)]| 
    & \le \sum_{j = 1}^n |w_{i,j}| \cdot | A_1(\bbE[Y_{jt}], s) | \cdot |\alpha_0(s) - \alpha(s; \theta_0) | +  | A_2(\bbE[Y_{i,t-1}], s) | \cdot |\gamma_0(s) - \gamma(s; \theta_0) | \\
    & \quad + \sum_{j = 1}^{d_x} |X_{it}^j| \cdot |\beta_{0j}(s) - \beta_j(s; \theta_0)| \\
    & \lesssim K^{- \varsigma}
\end{align}
uniformly in $s \in [0,1]$ and $(i,t)$ under Assumption \ref{as:basis}.
This implies that the first $(d_q + d_x)K$ elements of $\bbE[g_{nT}(s; \theta_0)]$ are of order $K^{- \varsigma}$.

Next, by Cauchy-Schwarz inequality and the fact that $\left\|\bm{D}^\top P_m \bm{D}\right\|_{\mathrm{op}} \lesssim 1$ under Assumption \ref{as:weights}(i), we obtain
\begin{align}
    \bbE \left| \bm{V}(s)^\top \bm{D}^\top P_m \bm{D} \bm{V}(s) \right|
    & \le \sqrt{\bbE \left\| \bm{V}(s)^\top \bm{D}^\top P_m \bm{D}\right\|^2} \sqrt{\bbE \left\|\bm{V}(s)\right\|^2} \\
    & = \sqrt{\text{trace}\{ \bm{D}^\top P_m \bm{D} \bm{D}^\top P_m \bm{D} \bbE [\bm{V}(s) \bm{V}(s)^\top]\}} \sqrt{\bbE \left\|\bm{V}(s)\right\|^2} \\
    & \lesssim \sum_{t = 1}^T \sum_{i = 1}^n \bbE|v_{it}(s)|^2.
\end{align}
Similarly as above, by the $c_r$ inequality,
\begin{align}
    \bbE|v_{it}(s)|^2 
    & \le 4 \bbE \left|\sum_{j = 1}^n w_{i,j} A_1(Y_{jt}, s) [ \alpha_0(s) - \alpha(s; \theta_0)] \right|^2 + 4 \bbE \left| A_2(Y_{i,t-1}, s) [ \gamma_0(s) - \gamma(s; \theta_0)] \right|^2 \\
    & \quad + 2 \left|\sum_{j = 1}^{d_x} X_{it}^j (\beta_{0j}(s) - \beta_j(s; \theta_0)) \right|^2 \\
    & \le 4 \sum_{j = 1}^n \sum_{j' = 1}^n w_{i,j} w_{i,j'} \bbE[ A_1(Y_{jt}, s) A_1(Y_{j't}, s) ] [ \alpha_0(s) - \alpha(s; \theta_0)]^2  \\
    & \quad + 4 \bbE \left| A_2(Y_{i,t-1}, s) \right|^2 [ \gamma_0(s) - \gamma(s; \theta_0)]^2 + c K^{- 2\varsigma}.
\end{align}
By Cauchy-Schwarz inequality, 
\begin{align}
    | \bbE[ A_1(Y_{jt}, s) A_1(Y_{j't}, s) ] |
    & \le \left\|A_1(Y_{jt}, s) \right\|_2  \left\|A_1(Y_{j't}, s) \right\|_2.
\end{align}
Further, Assumptions \ref{as:observables}(ii) and \ref{as:A} imply that $\bbE |A_1(Y_{jt}, s) |^2 \le \int_0^1 \bbE|Y_{jt}(u)|^2 \omega_2(u,s) \text{d}u \lesssim 1$ uniformly in $s \in [0,1]$ and $(j,t)$. 
Thus, $| \bbE[ A_1(Y_{jt}, s) A_1(Y_{j't}, s) ] |$ is uniformly bounded.
We can see similarly that $\bbE |A_2(Y_{i,t-1}, s) |^2$ is uniformly bounded, and we have 
\begin{align}\label{eq:bias_order}
    \left\|v_{it}(s)\right\|_2 \lesssim K^{- \varsigma}
\end{align}
uniformly in $s \in [0, 1]$ and $(i, t)$.

Lastly, by Cauchy-Schwarz and Minkowski's inequalities,
\begin{align}
    \left|\bbE[ \bm{V}(s)^\top \bm{D}^\top P_m \bm{D} \bm{\mathcal{E}}(s)]\right|
    & = \left|\sum_{t \in [T - 1]}\sum_{1 \le i,j \le n} p_{m,i,j} \bbE[(v_{i,t+1}(s) - v_{it}(s))(\varepsilon_{j,t+1}(s) - \varepsilon_{jt}(s))]\right| \\
    & \le \sum_{t \in [T - 1]}\sum_{1 \le i,j \le n} |p_{m,i,j}| \bbE |(v_{i,t+1}(s) - v_{it}(s))(\varepsilon_{j,t+1}(s) - \varepsilon_{jt}(s))| \\
    & \le \sum_{t \in [T - 1]}\sum_{1 \le i,j \le n} |p_{m,i,j}| \cdot \left\|v_{i,t+1}(s) - v_{it}(s)\right\|_2 \left\|\varepsilon_{j,t+1}(s) - \varepsilon_{jt}(s)\right\|_2 \\
    & \le \sum_{t \in [T - 1]}\sum_{1 \le i,j \le n} |p_{m,i,j}| \cdot \left\{\left\|v_{i,t+1}(s)\right\|_2 + \left\|v_{it}(s)\right\|_2\right\} \left\{\left\|\varepsilon_{j,t+1}(s)\right\|_2 + \left\|\varepsilon_{jt}(s)\right\|_2\right\} \\
    & \lesssim N K^{- \varsigma},
\end{align}
where the last inequality follows from \eqref{eq:bias_order} and Assumptions \ref{as:error}(ii) and \ref{as:weights}(i).
Combining these results gives the desired result.
\end{proof}
\bigskip


Denote the population GMM objective function as follows:
\begin{align}
    \mathcal{Q}^*_{nT}(\theta) \coloneqq \bbE[\bar g_{nT}(\theta)]^\top \Omega_{nT} \bbE[\bar g_{nT}(\theta)]
\end{align}

\begin{lemma}\label{lem:identification}
    Suppose that Assumptions \ref{as:inverse}(i), \ref{as:observables}, \ref{as:error}(i) -- (ii), \ref{as:A}, \ref{as:weights}, \ref{as:matrix1}, and \ref{as:basis} hold.
    In addition, assume that $K^{(1 - 2\varsigma)/2} \to 0$ as $nT \to \infty$.
    Then, for any $\theta \in \Theta_K$ and $e >0$ such that $\left\| \theta - \theta_0 \right\| \ge e$, there exists a constant $c_e > 0$ such that $\mathcal{Q}^*_{nT}(\theta) - \mathcal{Q}^*_{nT}(\theta_0) > c_e$ for all sufficiently large $nT$.
\end{lemma}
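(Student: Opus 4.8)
The plan is to reduce the quadratic-form objective to a lower bound driven solely by the linear (instrumental-variable) block of the moment vector, for which Assumption \ref{as:matrix1} supplies the crucial identification power. Since $\Omega_{nT}$ is positive definite with $\lambda_{\min}(\Omega_{nT})$ bounded away from zero (Assumption \ref{as:weights}(ii)), I would first bound $\mathcal{Q}^*_{nT}(\theta) \ge \lambda_{\min}(\Omega_{nT})\,\|\bbE[\bar g_{nT}(\theta)]\|^2$, and then note that the squared Euclidean norm of the full $d_g$-vector dominates the squared norm of its first $(d_q+d_x)K$ coordinates, i.e.\ the linear block. This lets me discard the quadratic moments entirely from the lower bound, which is convenient since their sign behavior in $\theta_0-\theta$ is harder to control.

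Next, using the algebraic identities for $\bm{Z}(s)^\top \bm{D}^\top \bm{D}\bm{E}(s;\theta)$ recorded in Appendix \ref{app:prep} together with the exogeneity $\bbE[\bm{Z}(s)^\top \bm{D}^\top \bm{D}\bm{\mathcal{E}}(s)]=\bm{0}$, I would express the linear block of $\bbE[\bar g_{nT}(\theta)]$ as $\Pi_{nT}(\theta_0-\theta) + r_{nT}$, where $r_{nT}$ is the bias arising from the series-approximation error $\bm{V}$ and equals exactly the linear block of $\bbE[\bar g_{nT}(\theta_0)]$. By Lemma \ref{lem:g0} each of its $(d_q+d_x)K$ entries is $O(K^{-\pi})$, so $\|r_{nT}\| \lesssim \sqrt{K}\,K^{-\pi} = K^{(1-2\pi)/2} \to 0$. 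Writing $\delta = \theta_0-\theta$ with $\|\delta\| \ge e$, Assumption \ref{as:matrix1} furnishes a constant $c>0$ with $\|\Pi_{nT}\delta\| \ge \sqrt{\lambda_{\min}(\Pi_{nT}^\top\Pi_{nT})}\,\|\delta\| \ge \sqrt{c}\,e$ for large $nT$, whence a triangle-inequality step gives $\|\Pi_{nT}\delta + r_{nT}\| \ge \sqrt{c}\,e - K^{(1-2\pi)/2} \ge \tfrac12\sqrt{c}\,e$ eventually. Combining with the first step yields $\mathcal{Q}^*_{nT}(\theta) \ge \lambda_{\min}(\Omega_{nT})\,c\,e^2/4$, bounded below by a positive constant depending only on $e$.

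For the value at the truth I would use $\mathcal{Q}^*_{nT}(\theta_0) \le \lambda_{\max}(\Omega_{nT})\,\|\bbE[\bar g_{nT}(\theta_0)]\|^2$ and again Lemma \ref{lem:g0}, which makes all $d_g = (d_q+d_x)K + M$ entries of $\bbE[\bar g_{nT}(\theta_0)]$ of order $K^{-\pi}$; hence $\|\bbE[\bar g_{nT}(\theta_0)]\|^2 \lesssim d_g K^{-2\pi} \lesssim K^{1-2\pi}$, giving $\mathcal{Q}^*_{nT}(\theta_0) = O(K^{1-2\pi}) \to 0$ under the stated condition. The conclusion then follows by taking, say, $c_e = \lambda_{\min}(\Omega_{nT})\,c\,e^2/8$ and absorbing the vanishing $\mathcal{Q}^*_{nT}(\theta_0)$ into the gap for all sufficiently large $nT$.

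The step I expect to need the most care is the control of the bias vector $r_{nT}$: I must ensure the $O(K^{-\pi})$ bound from Lemma \ref{lem:g0} is \emph{entrywise} and \emph{uniform} over the grid points $s_1,\dots,s_L$, so that summing $(d_q+d_x)K$ squared entries yields a norm of order exactly $K^{(1-2\pi)/2}$ rather than something larger, and this rate must then be dominated by the identification gap $\sqrt{c}\,e$. The undersmoothing-type condition $K^{(1-2\pi)/2}\to 0$ is precisely what secures this domination. A related point to verify carefully is that the bias enters only \emph{additively}, not multiplicatively with $\delta$; the latter would be dangerous because $\|\delta\|$ may grow with $K$ on the expanding space $\Theta_K$, whereas the additive structure makes the argument insensitive to the magnitude of $\|\delta\|$ beyond its lower bound $e$.
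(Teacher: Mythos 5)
Your proof is correct, and it reaches the conclusion by a genuinely different decomposition than the paper's. The paper expands the \emph{difference} of objectives exactly, $\mathcal{Q}^*_{nT}(\theta) - \mathcal{Q}^*_{nT}(\theta_0) = A_{nT}(\theta) + 2\left( \bbE[\bar g_{nT}(\theta)] - \bbE[\bar g_{nT}(\theta_0)]\right)^\top \Omega_{nT} \bbE[\bar g_{nT}(\theta_0)]$, lower-bounds the quadratic term $A_{nT}(\theta)$ by $\lambda_{\min}(\Omega_{nT})\lambda_{\min}(\Pi_{nT}^\top\Pi_{nT})\|\theta_0-\theta\|^2$ (the same ``keep only the linear block'' device you use, but applied to the difference vector, in which the series bias cancels exactly), and then controls the cross term by Cauchy--Schwarz against $\|\bbE[\bar g_{nT}(\theta_0)]\| \lesssim K^{(1-2\pi)/2}$ from Lemma \ref{lem:g0}. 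You instead bound the two objectives separately: a lower bound on $\mathcal{Q}^*_{nT}(\theta)$ through the affine form $\Pi_{nT}(\theta_0-\theta) + r_{nT}$ of the linear block, with the bias $r_{nT}$ absorbed by a triangle inequality, and an upper bound $\mathcal{Q}^*_{nT}(\theta_0) \lesssim K^{1-2\pi} \to 0$. Both routes rest on exactly the same two ingredients --- Assumption \ref{as:matrix1} and Lemma \ref{lem:g0} with $K^{(1-2\pi)/2}\to 0$ --- but yours is somewhat more elementary: it never requires writing out the quadratic-moment rows of the difference vector (which the paper displays and then discards), and your observation that the bias enters only additively, because $\bm{V}$ depends on $\theta_0$ alone and not on the candidate $\theta$, is precisely the point that makes the argument insensitive to $\|\theta_0 - \theta\|$ being large on the expanding space $\Theta_K$. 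The paper's structure buys reusability: the identical expansion-plus-Cauchy--Schwarz template is recycled with the empirical moments $\bar g_{nT}$ in place of $\bbE[\bar g_{nT}]$ in the proof of Theorem \ref{thm:roc}(i). One cosmetic repair: your final constant $c_e = \lambda_{\min}(\Omega_{nT})\, c\, e^2/8$ formally depends on $nT$; you should replace $\lambda_{\min}(\Omega_{nT})$ by the uniform positive lower bound that Assumption \ref{as:weights}(ii) is read as providing for all sufficiently large $nT$ (the paper's own proof makes the same implicit reading when it asserts $A_{nT}(\theta) \ge c_1 e^2$).
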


\begin{proof}
Decompose
\begin{align}
    \mathcal{Q}^*_{nT}(\theta) - \mathcal{Q}^*_{nT}(\theta_0)
    & = \underbracket{\left( \bbE[\bar g_{nT}(\theta)] - \bbE[\bar g_{nT}(\theta_0)]\right)^\top \Omega_{nT} \left( \bbE[\bar g_{nT}(\theta)] - \bbE[\bar g_{nT}(\theta_0)] \right)}_{\eqqcolon A_{nT}(\theta)} \\
    & \quad + 2 \left( \bbE[\bar g_{nT}(\theta)] - \bbE[\bar g_{nT}(\theta_0)]\right)^\top \Omega_{nT} \bbE[\bar g_{nT}(\theta_0)]
\end{align}
In view of
\begin{align}
    &\bbE[\bar g_{nT}(\theta)] - \bbE[\bar g_{nT}(\theta_0)] \\
    &  = \frac{1}{NL}\sum_{l=1}^L \left(\begin{array}{c}
       \bm{Z}(s_l)^\top \bm{D}^\top \bm{D} \bbE[\bm{H}(s_l)] \\
       \bbE \left[ (\bm{H}(s_l)(\theta_0 - \theta) + 2\bm{V}(s_l) + 2\bm{\mathcal{E}}(s_l))^\top \bm{D}^\top P_1 \bm{D} \bm{H}(s_l) \right] \\
       \vdots \\
       \bbE\left[ (\bm{H}(s_l)(\theta_0 - \theta) + 2\bm{V}(s_l) + 2\bm{\mathcal{E}}(s_l))^\top \bm{D}^\top P_M \bm{D} \bm{H}(s_l) \right]
   \end{array}
   \right) (\theta_0 - \theta),
\end{align}
we can find that $A_{nT}(\theta)$ is bounded below from $\lambda_{\min}(\Omega_{nT}) \lambda_{\min}(\Pi_{nT}^\top \Pi_{nT}) ||\theta_0 - \theta||^2 \ge c_1 e^2$ for some $c_1 > 0$ for all sufficiently large $nT$, under Assumptions \ref{as:weights}(ii) and \ref{as:matrix1}.
Further, Cauchy-Schwarz inequality and Lemma \ref{lem:g0} give that
\begin{align}
    \left| \left( \bbE[\bar g_{nT}(\theta)] - \bbE[\bar g_{nT}(\theta_0)]\right)^\top \Omega_{nT} \bbE[\bar g_{nT}(\theta_0)] \right|
    & \le (A_{nT}(\theta))^{1/2} \left( \bbE[\bar g_{nT}(\theta_0)]^\top \Omega_{nT} \bbE[\bar g_{nT}(\theta_0)] \right)^{1/2} \\
    & \le c_2 (A_{nT}(\theta))^{1/2} K^{(1 - 2\varsigma)/2}
\end{align}
Hence, since $(A_{nT}(\theta))^{1/2}$ is bounded below from zero and $K^{(1 - 2\varsigma)/2} \to 0$, we have
\begin{align}
    \mathcal{Q}^*_{nT}(\theta) - \mathcal{Q}^*_{nT}(\theta_0)
    & \ge A_{nT}(\theta) - 2 c_2 (A_{nT}(\theta))^{1/2} K^{(1 - 2\varsigma)/2} \\
    & = (A_{nT}(\theta))^{1/2} ((A_{nT}(\theta))^{1/2} - 2 c_2 K^{(1 - 2\varsigma)/2}) > c_e
\end{align}
for all sufficiently large $nT$.
This completes the proof.
\end{proof}


\begin{lemma}\label{lem:NED1}
    Suppose that Assumptions \ref{as:inverse}, \ref{as:sample_space}, \ref{as:observables}, and \ref{as:A} hold.
    Then, for any given $s \in [0,1]$, $\{ Y_{it}(s) : (i, t) \in \mathcal{D}_{nT}^{\mathrm{ST}}\}$ is uniformly and geometrically $L^2$-NED on $\{ \varepsilon_{it} : (i, t) \in \mathcal{D}_{nT}^{\mathrm{ST}}\}$ with respect to the spatial-time distance $\Delta_{\mathrm{ST}}$.
\end{lemma}

\begin{proof}
    We prove the lemma in a similar manner to \cite{jenish2012nonparametric} and \cite{hoshino2022sieve}.
    Under Assumption \ref{as:inverse}, the reduced-form representation of the model implies that, conditional on the non-stochastic initial outcomes and covariates, each $Y_{it}$ is a measurable function of the entire error array.
    Therefore, we can write $Y_{it} = \xi_{it} ( \{ \varepsilon_{ju} : (j, u) \in \mathcal{D}_{nT}^{\mathrm{ST}} \} )$.

    For some $\delta > 0$, define
    \begin{align}
        \mathcal{E}_{1,it}^{(\delta)}
        \coloneqq \left\{ \varepsilon_{ju} : \Delta_{\mathrm{ST}} \left( (i, t), (j, u) \right) \le \delta \right\}, \;\;
        \mathcal{E}_{2,it}^{(\delta)}
        \coloneqq \left\{ \varepsilon_{ju} : \Delta_{\mathrm{ST}} \left( (i, t), (j, u) \right) > \delta \right\}.
    \end{align}
    Since $L^2(0,1)$ is separable, both $\mathcal{E}_{1,it}^{(\delta)}$ and $\mathcal{E}_{2,it}^{(\delta)}$ are Polish space-valued random elements.
    Thus, by Lemma 2.11 of \cite{dudley1983invariance} (see also Lemma A.1 of \citealp{jenish2012nonparametric}), there exists a function $\chi$ such that
    \begin{align}
        \left( \mathcal{E}_{1,it}^{(\delta)}, \chi \left( U, \mathcal{E}_{1,it}^{(\delta)} \right) \right)
        \overset{d}{=}
        \left( \mathcal{E}_{1,it}^{(\delta)}, \mathcal{E}_{2,it}^{(\delta)} \right),
    \end{align}
    where $U$ is uniformly distributed on $[0,1]$ and independent of $\mathcal{E}_{1,it}^{(\delta)}$.
    With a slight abuse of notation, write
    \begin{align}
        Y_{it}
        & = \xi_{it} \left( \mathcal{E}_{1,it}^{(\delta)}, \mathcal{E}_{2,it}^{(\delta)} \right), \\
        Y_{it}^{(\delta)}
        & \coloneqq \xi_{it} \left( \mathcal{E}_{1,it}^{(\delta)}, \chi \left( U, \mathcal{E}_{1,it}^{(\delta)} \right) \right).
    \end{align}
    Let $\{ \varepsilon_{ju}^{(\delta)} \}$ denote the coupled error array corresponding to $\left( \mathcal{E}_{1,it}^{(\delta)}, \chi \left( U, \mathcal{E}_{1,it}^{(\delta)} \right) \right)$ and $\mathcal{E}_t^{(\delta)} \coloneqq \left( \varepsilon_{1t}^{(\delta)}, \ldots, \varepsilon_{nt}^{(\delta)} \right)^\top$.
    By construction, $\varepsilon_{ju}^{(\delta)} = \varepsilon_{ju}$ whenever $\Delta_{\mathrm{ST}} \left( (i, t), (j, u) \right) \le \delta$.

    The coupled outcome process is given by
    \begin{align}
        Y_u^{(\delta)}
        = \mathcal{A}_3^u Y_0 + \sum_{\ell = 0}^{u-1} \mathcal{A}_3^\ell (\text{Id} - \mathcal{A}_1)^{-1} \left[ X_{u-\ell}\beta_0 + F_0 + \mathcal{E}_{u-\ell}^{(\delta)} \right], \qquad u \in [T].
    \end{align}
    In particular, its $(j,u)$-th component satisfies
    \begin{align}
        Y_{ju}^{(\delta)}(s)
        & = \alpha_0(s) \sum_{k = 1}^n w_{j,k} A_1 ( Y_{ku}^{(\delta)}, s) + \gamma_0(s) A_2 ( Y_{j,u-1}^{(\delta)}, s ) + X_{ju}^\top \beta_0(s) + f_{0j}(s) + \varepsilon_{ju}^{(\delta)}(s),
    \end{align}
    where $Y_{j0}^{(\delta)} = Y_{j0}$.
    By construction,
    \begin{align}
        \bbE [ Y_{it}(s) \mid \mathcal{F}_{nT,it}^{\mathrm{ST}}(\delta) ]
        & = \bbE [ \xi_{it} ( \mathcal{E}_{1,it}^{(\delta)}, \mathcal{E}_{2,it}^{(\delta)} )(s) \mid \mathcal{E}_{1,it}^{(\delta)} ] \\
        & = \bbE [ \xi_{it} ( \mathcal{E}_{1,it}^{(\delta)}, \chi ( U, \mathcal{E}_{1,it}^{(\delta)} ) )(s) \mid \mathcal{E}_{1,it}^{(\delta)} ] \\
        & = \bbE [ Y_{it}^{(\delta)}(s) \mid \mathcal{F}_{nT,it}^{\mathrm{ST}}(\delta) ],
    \end{align}
    where $\mathcal{F}_{nT,it}^{\mathrm{ST}}(\delta)$ is the $\sigma$-field generated by $\mathcal{E}_{1,it}^{(\delta)}$.
    Moreover, since $Y_{ju}^{(\delta)}$ and $Y_{ju}$ have the same marginal distribution,
    \begin{align}
        \left\| Y_{ju}(s) - Y_{ju}^{(\delta)}(s) \right\|_2^2
        & \le 2 \left\| Y_{ju}(s) \right\|_2^2
        + 2 \left\| Y_{ju}^{(\delta)}(s) \right\|_2^2 = 4 \left\| Y_{ju}(s) \right\|_2^2.
    \end{align}
    Hence, by Assumption \ref{as:observables}(ii), there exists a finite constant $C_Y$ such that
    \begin{align}
        \sup_{n \ge 1} \max_{j \in [n], \; u \in \{0\} \cup [T]} \left\| Y_{ju}(s) - Y_{ju}^{(\delta)}(s) \right\|_2
        \le C_Y
    \end{align}
    uniformly in $s \in [0,1]$.

    We next evaluate how the replacement of errors outside the spatial-time $\delta$-neighborhood affects $Y_{it}(s)$.
    Subtracting the equations for $Y_{ju}$ and $Y_{ju}^{(\delta)}$ gives
    \begin{align}
        Y_{ju}(s) - Y_{ju}^{(\delta)}(s)
        & = \alpha_0(s) \sum_{k = 1}^n w_{j,k} A_1 ( Y_{ku} - Y_{ku}^{(\delta)}, s ) + \gamma_0(s) A_2 ( Y_{j,u-1} - Y_{j,u-1}^{(\delta)}, s ) \\
        & \quad + \varepsilon_{ju}(s) - \varepsilon_{ju}^{(\delta)}(s)
    \end{align}
    for general $\delta > 0$.
    Let 
    \begin{align}
    \bar\Delta_{\mathrm{ST}} \coloneqq \max \{ \bar\Delta, 1 \}.
    \end{align}
    First, suppose that $0 < \delta < \bar\Delta_{\mathrm{ST}}$.
    Since $\varepsilon_{it}^{(\delta)} = \varepsilon_{it}$ in this case, we have
    \begin{align}
        Y_{it}(s) - Y_{it}^{(\delta)}(s) = \alpha_0(s) \sum_{j = 1}^n w_{i,j} A_1 ( Y_{jt} - Y_{jt}^{(\delta)}, s ) + \gamma_0(s) A_2 ( Y_{i,t-1} - Y_{i,t-1}^{(\delta)}, s ).
    \end{align}
    Hence, by Minkowski's inequality and Assumption \ref{as:A},
    \begin{align}
        \left\| Y_{it}(s) - Y_{it}^{(\delta)}(s) \right\|_2 
        & \le \bar\alpha_0 \sum_{j = 1}^n |w_{i,j}| \left( \int_0^1 \left\| Y_{jt}(v) - Y_{jt}^{(\delta)}(v) \right\|_2^2 \omega_2(v,s) \text{d}v \right)^{1/2} \\
        & \quad + \bar\gamma_0 \left( \int_0^1 \left\| Y_{i,t-1}(v) - Y_{i,t-1}^{(\delta)}(v) \right\|_2^2 \omega_2(v,s) \text{d}v \right)^{1/2} \\
        & \le C_Y \left( \bar\alpha_0 \left\| W_n \right\|_\infty + \bar\gamma_0 \right).
    \end{align}
    Next, suppose that $\bar\Delta_{\mathrm{ST}} \le \delta < 2\bar\Delta_{\mathrm{ST}}$.
    In this case, $\varepsilon_{jt}^{(\delta)} = \varepsilon_{jt}$ for every $j$ such that $w_{i,j} \neq 0$, and $\varepsilon_{i,t-1}^{(\delta)} = \varepsilon_{i,t-1}$.
    Therefore, this gives
    \begin{align}
        Y_{it}(s) - Y_{it}^{(\delta)}(s)
        & = \alpha_0(s) \sum_{j = 1}^n w_{i,j} A_1 \left( \alpha_0(\cdot) \sum_{k = 1}^n w_{j,k} A_1 ( Y_{kt} - Y_{kt}^{(\delta)}, \cdot ) + \gamma_0(\cdot) A_2 ( Y_{j,t-1} - Y_{j,t-1}^{(\delta)}, \cdot), s \right) \\
        & \quad + \gamma_0(s) A_2 \left( \alpha_0(\cdot) \sum_{k = 1}^n w_{i,k} A_1 ( Y_{k,t-1} - Y_{k,t-1}^{(\delta)}, \cdot ) + \gamma_0(\cdot) A_2 ( Y_{i,t-2} - Y_{i,t-2}^{(\delta)}, \cdot), s \right).
    \end{align}
    Again, applying Minkowski's inequality and Assumption \ref{as:A} to the preceding expansion yields
    \begin{align}
        \left\| Y_{it}(s) - Y_{it}^{(\delta)}(s) \right\|_2 & \le C_Y \left\{ \left( \bar\alpha_0 \left\| W_n \right\|_\infty \right)^2 + 2\bar\alpha_0 \left\| W_n \right\|_\infty \bar\gamma_0 + \bar\gamma_0^2 \right\} \\
        & = C_Y \left( \bar\alpha_0 \left\| W_n \right\|_\infty + \bar\gamma_0 \right)^2.
    \end{align}

    Applying the same argument recursively, if $m\bar\Delta_{\mathrm{ST}} \le \delta < (m+1)\bar\Delta_{\mathrm{ST}}$ for some nonnegative integer $m$, all innovation differences encountered during the first $m$ recursive expansions are zero.
    We therefore obtain
    \begin{align}
        \left\| Y_{it}(s) - Y_{it}^{(\delta)}(s) \right\|_2 \le C_Y \left( \bar\alpha_0 \left\| W_n \right\|_\infty + \bar\gamma_0 \right)^{m+1}.
    \end{align}
    Let $\varrho_0 \coloneqq \sup_{n \ge 1} \left\{ \bar\alpha_0 \left\| W_n \right\|_\infty + \bar\gamma_0 \right\} < 1$, where the inequality follows from Assumption \ref{as:inverse}(i).
    Since $m = \left\lfloor \delta / \bar\Delta_{\mathrm{ST}} \right\rfloor$, it follows that
    \begin{align}\label{eq:qdiff}
        \left\| Y_{it}(s) - Y_{it}^{(\delta)}(s) \right\|_2 \le C_Y \varrho_0^{\left\lfloor \delta / \bar\Delta_{\mathrm{ST}} \right\rfloor + 1}.
    \end{align}

    Finally, by Jensen's inequality and \eqref{eq:qdiff},
    \begin{align}
        \left\| Y_{it}(s) - \bbE [ Y_{it}(s) \mid \mathcal{F}_{nT,it}^{\mathrm{ST}}(\delta) ] \right\|_2
        & = \left\| \int_0^1 \left[ \xi_{it} ( \mathcal{E}_{1,it}^{(\delta)}, \mathcal{E}_{2,it}^{(\delta)} )(s) - \xi_{it} ( \mathcal{E}_{1,it}^{(\delta)}, \chi (v, \mathcal{E}_{1,it}^{(\delta)}) )(s) \right] \text{d}v \right\|_2 \\
        & \le \left\{ \bbE \int_0^1 \left| \xi_{it} ( \mathcal{E}_{1,it}^{(\delta)}, \mathcal{E}_{2,it}^{(\delta)} )(s) - \xi_{it} ( \mathcal{E}_{1,it}^{(\delta)}, \chi (v, \mathcal{E}_{1,it}^{(\delta)}) )(s) \right|^2 \text{d}v \right\}^{1/2} \\
        & = \left\| \xi_{it} ( \mathcal{E}_{1,it}^{(\delta)}, \mathcal{E}_{2,it}^{(\delta)} )(s) - \xi_{it} ( \mathcal{E}_{1,it}^{(\delta)}, \chi (U, \mathcal{E}_{1,it}^{(\delta)}) )(s) \right\|_2 \\
        & = \left\| Y_{it}(s) - Y_{it}^{(\delta)}(s) \right\|_2 \\
        & \le C_Y \varrho_0^{\left\lfloor \delta / \bar\Delta_{\mathrm{ST}} \right\rfloor + 1} \to 0
    \end{align}
    as $\delta \to \infty$.
    This proves the desired result.
\end{proof}


\begin{lemma}\label{lem:NED2}
    Suppose that Assumptions \ref{as:inverse}, \ref{as:sample_space}, \ref{as:observables}, and \ref{as:A} hold.
    Then, for any given $s \in [0, 1]$, $\{A_1(\bar Y_{it}, s) : (i, t) \in \mathcal{D}_{nT}^\text{ST}\}$ and $\{A_2(Y_{i,t-1}, s) : (i, t) \in \mathcal{D}_{nT}^\text{ST}\}$ are uniformly and geometrically $L^2$-NED on $\{\varepsilon_{it}: (i,t) \in \mathcal{D}_{nT}^\text{ST}\}$.
\end{lemma}

\begin{proof}
Note that $\mathcal{F}_{nT,jt}^{\text{ST}}((\delta - 1) \bar\Delta_{\text{ST}}) \subseteq \mathcal{F}_{nT,it}^{\text{ST}}(\delta \bar\Delta_{\text{ST}})$ for $(i,j)$ with $\Delta(i,j) \le \bar\Delta$ and $\delta > 1$.
Thus, by Lemma \ref{lem:NED1},
\begin{align}
    \left\| \bar Y_{it}(u) - \bbE \left[ \bar Y_{it}(u) \mid \mathcal{F}_{nT,it}^{\text{ST}}(\delta \bar\Delta_{\text{ST}}) \right] \right\|_2
    & \le \sum_{j = 1}^n |w_{i,j}| \left\| Y_{jt}(u) - \bbE \left[ Y_{jt}(u) \mid \mathcal{F}_{nT,it}^{\text{ST}}(\delta \bar\Delta_{\text{ST}}) \right] \right\|_2 \\
    & \le \sum_{j : \Delta(i,j) \le \bar\Delta} |w_{i,j}| \left\| Y_{jt}(u) - \bbE \left[ Y_{jt}(u) \mid \mathcal{F}_{nT,jt}^{\text{ST}}((\delta - 1) \bar\Delta_{\text{ST}}) \right] \right\|_2 \\
    & \lesssim \varrho^{\lfloor \delta \rfloor}
\end{align}
uniformly in $u \in [0,1]$, which implies that $\{\bar Y_{it}(u)\}$ is uniformly and geometrically $L^2$-NED.

By the linearity of $A_1$ and Assumption \ref{as:A},
\begin{align}
    \left\| A_1(\bar Y_{it}, s) - \bbE \left[ A_1(\bar Y_{it}, s) \mid \mathcal{F}_{nT,it}^{\text{ST}}(\delta) \right] \right\|_2
    & = \left\| A_1 \left( \bar Y_{it} - \bbE \left[ \bar Y_{it} \mid \mathcal{F}_{nT,it}^{\text{ST}}(\delta) \right], s \right) \right\|_2 \\
    & \le \left( \int_0^1 \left\| \bar Y_{it}(u) - \bbE \left[ \bar Y_{it}(u) \mid \mathcal{F}_{nT,it}^{\text{ST}}(\delta) \right] \right\|_2^2 \omega_2(u,s) \text{d}u \right)^{1/2} \\
    & \lesssim \varrho^{\left\lfloor \delta / \bar\Delta_{\text{ST}} \right\rfloor}.
\end{align}
This proves that $\{A_1(\bar Y_{it}, s) : (i,t) \in \mathcal{D}_{nT}^{\text{ST}}\}$ is uniformly and geometrically $L^2$-NED.

Similarly, since $\mathcal{F}_{nT,i,t-1}^{\text{ST}}((\delta - 1) \bar\Delta_{\text{ST}}) \subseteq \mathcal{F}_{nT,it}^{\text{ST}}(\delta \bar\Delta_{\text{ST}})$, the same argument applied to $Y_{i,t-1}$ and $A_2$ shows that $\{A_2(Y_{i,t-1}, s) : (i,t) \in \mathcal{D}_{nT}^{\text{ST}}\}$ is uniformly and geometrically $L^2$-NED.
This completes the proof.
\end{proof}

As a useful consequence from Lemma \ref{lem:NED1} and \ref{lem:NED2}, we have
\begin{align}
    \left\| e_{it}(s; \theta) - \bbE [ e_{it}(s; \theta) \mid \mathcal{F}_{nT,it}^{\text{ST}}(\delta)] \right\|_2
    & \le \left\| Y_{it}(s) - \bbE [ Y_{it}(s) \mid \mathcal{F}_{nT,it}^{\text{ST}}(\delta)] \right\|_2 \\
    & \quad + |\alpha(s; \theta)| \cdot \left\| A_1(\bar Y_{it}, s) - \bbE [ A_1(\bar Y_{it},s) \mid \mathcal{F}_{nT,it}^{\text{ST}}(\delta)] \right\|_2 \\
    & \quad + |\gamma(s; \theta)| \cdot \left\| A_2(Y_{i,t-1}, s) - \bbE [ A_2(Y_{i,t-1},s) \mid \mathcal{F}_{nT,it}^{\text{ST}}(\delta)] \right\|_2 \\
    & \lesssim \varrho^{\lfloor \delta / \bar \Delta_{\text{ST}} \rfloor}
\end{align}
uniformly in $s \in [0,1]$, $\theta \in \Theta_K$, and $(i, t)$; that is, $\{e_{it}(s; \theta)\}$ is uniformly and geometrically $L^2$-NED.


\begin{lemma}\label{lem:cov}
    Suppose that Assumption \ref{as:error}(i) holds.
    Let $\{\xi_{it}: (i,t) \in \mathcal{D}_{nT}^{\text{ST}}\}$ be a uniformly and geometrically $L^2$-NED random field on $\{\varepsilon_{it}: (i,t) \in \mathcal{D}_{nT}^{\text{ST}}\}$.
    Denote $\vec \xi_{it} \coloneqq \xi_{i,t+1} - \xi_{it}$ and $C_\xi \coloneqq \sup_{i,t} \left\| \xi_{it} \right\|_2$.
    Then, for all $i,j \in \mathcal{D}_n$ and $t,u \in [T-1]$,
    \begin{align}
        \left| \text{Cov}\left(\vec \xi_{it},\vec \xi_{ju}\right) \right| \lesssim C_\xi^2 \rho\left(\Delta_{\text{ST}}\left((i,t),(j,u)\right)/3\right)
    \end{align}
    with some geometric NED coefficient $\rho$.
\end{lemma}

\begin{proof}
    Decompose $\vec \xi_{it} = \vec \xi_{1,it}^{(\delta)} + \vec \xi_{2,it}^{(\delta)}$, where
    \begin{align}
        \vec \xi_{1,it}^{(\delta)} \coloneqq \bbE \left[ \vec \xi_{it} \mid \mathcal{F}_{nT,it}^{+}(\delta) \right], \;\; \text{and} \;\; \vec \xi_{2,it}^{(\delta)} \coloneqq \vec \xi_{it} - \bbE \left[ \vec \xi_{it} \mid \mathcal{F}_{nT,it}^{+}(\delta) \right],
    \end{align}
    where $\mathcal{F}_{nT,it}^{+}(\delta) \coloneqq \mathcal{F}_{nT,it}^{\text{ST}}(\delta) \lor \mathcal{F}_{nT,i,t+1}^{\text{ST}}(\delta)$.

    For each pair $\vec \xi_{it}$ and $\vec \xi_{ju}$, denote $\delta_{it,ju} \coloneqq \Delta_{\text{ST}}((i,t),(j,u))/3$.
    If $\Delta_{\text{ST}}((i,t),(j,u)) \le 3$, the desired result follows immediately from the Cauchy--Schwarz inequality and $\left\| \vec \xi_{it} \right\|_2 \le 2C_\xi$.
    Hence, suppose that $\Delta_{\text{ST}}((i,t),(j,u)) > 3$.

    The minimum distance between a point in $\{(i,t),(i,t+1)\}$ and a point in $\{(j,u),(j,u+1)\}$ is at least $\Delta_{\text{ST}}((i,t),(j,u))-1$.
    Since
    \begin{align}
        2\delta_{it,ju} = \frac{2}{3}\Delta_{\text{ST}}((i,t),(j,u)) < \Delta_{\text{ST}}((i,t),(j,u))-1,
    \end{align}
    $\mathcal{F}_{nT,it}^{+}(\delta_{it,ju})$ and $\mathcal{F}_{nT,ju}^{+}(\delta_{it,ju})$ are generated by disjoint collections of errors and are therefore independent under Assumption \ref{as:error}(i).

    It follows that
    \begin{align}
        \left| \text{Cov}\left(\vec \xi_{it},\vec \xi_{ju}\right) \right|
        & = \left| \text{Cov}\left(\vec \xi_{1,it}^{(\delta_{it,ju})}+\vec \xi_{2,it}^{(\delta_{it,ju})},\vec \xi_{1,ju}^{(\delta_{it,ju})}+\vec \xi_{2,ju}^{(\delta_{it,ju})}\right) \right| \\
        & \le \left| \text{Cov}\left(\vec \xi_{1,it}^{(\delta_{it,ju})},\vec \xi_{1,ju}^{(\delta_{it,ju})}\right) \right|+\left| \text{Cov}\left(\vec \xi_{1,it}^{(\delta_{it,ju})},\vec \xi_{2,ju}^{(\delta_{it,ju})}\right) \right| \\
        & \quad+\left| \text{Cov}\left(\vec \xi_{2,it}^{(\delta_{it,ju})},\vec \xi_{1,ju}^{(\delta_{it,ju})}\right) \right|+\left| \text{Cov}\left(\vec \xi_{2,it}^{(\delta_{it,ju})},\vec \xi_{2,ju}^{(\delta_{it,ju})}\right) \right|.
    \end{align}
    The first term on the right-hand side is zero by Assumption \ref{as:error}(i).
    By Jensen's and triangle inequalities,
    \begin{align}
        \left\| \vec \xi_{1,it}^{(\delta_{it,ju})} \right\|_2 \le \left\| \vec \xi_{it} \right\|_2 \le \left\| \xi_{i,t+1} \right\|_2+\left\| \xi_{it} \right\|_2 \le 2C_\xi.
    \end{align}
    In addition, since $\{\xi_{it}\}$ is uniformly and geometrically $L^2$-NED,
    \begin{align}
        \left\| \vec \xi_{2,it}^{(\delta_{it,ju})} \right\|_2
        & = \left\| \vec \xi_{it}-\bbE \left[ \vec \xi_{it} \mid \mathcal{F}_{nT,it}^{+}(\delta_{it,ju}) \right] \right\|_2 \\
        & \le \left\| \xi_{i,t+1}-\bbE \left[ \xi_{i,t+1} \mid \mathcal{F}_{nT,i,t+1}^{\text{ST}}(\delta_{it,ju}) \right] \right\|_2+\left\| \xi_{it}-\bbE \left[ \xi_{it} \mid \mathcal{F}_{nT,it}^{\text{ST}}(\delta_{it,ju}) \right] \right\|_2 \\
        & \lesssim C_\xi \rho(\delta_{it,ju}).
    \end{align}
    Hence, the Cauchy--Schwarz inequality gives
    \begin{align}
        \left| \text{Cov}\left(\vec \xi_{1,it}^{(\delta_{it,ju})},\vec \xi_{2,ju}^{(\delta_{it,ju})}\right) \right| \lesssim C_\xi^2 \rho(\delta_{it,ju}).
    \end{align}
    The same inequality applies to $\left| \text{Cov}\left(\vec \xi_{2,it}^{(\delta_{it,ju})},\vec \xi_{1,ju}^{(\delta_{it,ju})}\right) \right|$.
    Furthermore,
    \begin{align}
        \left| \text{Cov}\left(\vec \xi_{2,it}^{(\delta_{it,ju})},\vec \xi_{2,ju}^{(\delta_{it,ju})}\right) \right| \lesssim C_\xi^2 \rho(\delta_{it,ju})^2 \lesssim C_\xi^2 \rho(\delta_{it,ju}).
    \end{align}
    This completes the proof.
\end{proof}


\begin{lemma}\label{lem:matLLN}
    Suppose that Assumptions \ref{as:inverse}, \ref{as:sample_space} -- \ref{as:weights}, and \ref{as:basis} hold.
    Uniformly in $m \in [M]$ and $\theta \in \Theta_K$,
    \begin{itemize}
        \item[(i)] $\left\| \sum_{l = 1}^L \bm{Z}(s_l)^\top \bm{D}^\top \bm{D} (\bm{H}(s_l) - \bbE[\bm{H}(s_l)]) (\theta_0 - \theta) / (NL) \right\| \lesssim_p \sqrt{K}/\sqrt{nT}$
        \item[(ii)] $\left\| \sum_{l = 1}^L \bm{Z}(s_l)^\top \bm{D}^\top \bm{D} (\bm{H}(s_l) - \bbE[\bm{H}(s_l)]) / (NL) \right\| \lesssim_p K /\sqrt{nT}$
        \item[(iii)] $\left\| \sum_{l = 1}^L \bm{Z}(s_l)^\top \bm{D}^\top \bm{D} \bm{V}(s_l) / (NL) \right\| \lesssim_p K^{-\varsigma}$, $\left\| \sum_{l = 1}^L \bm{Z}(s_l)^\top \bm{D}^\top \bm{D} \bbE[\bm{V}(s_l)] / (NL) \right\|\lesssim K^{-\varsigma}$
        \item[(iv)] $\left\| \sum_{l = 1}^L \bm{Z}(s_l)^\top \bm{D}^\top \bm{D} \bm{\mathcal{E}}(s_l) / (NL) \right\| \lesssim_p 1/\sqrt{nT}$
        \item[(v)] $\left| \sum_{l = 1}^L \left\{ \bm{E}(s_l; \theta)^\top \bm{D}^\top P_m \bm{D} \bm{E}(s_l; \theta) - \bbE[\bm{E}(s_l; \theta)^\top \bm{D}^\top P_m \bm{D} \bm{E}(s_l; \theta)] \right\} / (NL) \right| \lesssim_p 1/\sqrt{nT}$
        \item[(vi)] $\left| \sum_{l = 1}^L \bm{V}(s_l)^\top \bm{D}^\top P_m \bm{D} \bm{V}(s_l) / (NL) \right|  \lesssim_p K^{- 2\varsigma}$
        \item[(vii)] $\left| \sum_{l=1}^L \bm{V}(s_l)^\top \bm{D}^\top P_m \bm{D} \bm{\mathcal{E}}(s_l) / (NL) \right| \lesssim_p K^{- \varsigma}$
        \item[(viii)] $\left| \sum_{l = 1}^L \bm{\mathcal{E}}(s_l)^\top \bm{D}^\top P_m \bm{D} \bm{\mathcal{E}}(s_l) / (NL)  \right| \lesssim_p 1/\sqrt{nT}$.
    \end{itemize}
\end{lemma}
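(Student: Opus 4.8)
The plan is to treat the eight bounds by a single template: decompose each expression into a mean-zero stochastic piece, controlled through a variance computation followed by Markov's inequality, and a deterministic approximation-error piece, controlled through the bias bound \eqref{eq:bias_order}. Writing $\tilde P_m = \bm{D}^\top P_m \bm{D}$, I would exploit two structural facts throughout. First, the errors are independent across $(i,t)$ (Assumption \ref{as:error}(i)), so after first-differencing only temporally adjacent indices ($|t-t'|\le 1$) contribute nonzero covariance. Second, both $W_n$ and each $P_{m,1}$ have uniformly bounded row and column sums with threshold support (Assumptions \ref{as:sample_space}(ii) and \ref{as:weights}(i)), so that any spatial double sum $\sum_{i,j}$ weighted by a geometrically decaying factor collapses to $O(n)$.

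For the linear error term (iv), I would note it is exactly the instrument block of $\bar g_{1,nT}$, whose scaled variance is $\mathcal{V}_{z,nT}/(n(T-1))$. Taking the trace in \eqref{eq:Vz} and factoring out $\langle \phi^K(s_l),\phi^K(s_{l'})\rangle = \sum_k \phi_k(s_l)\phi_k(s_{l'})$, the uniform boundedness of the nonstochastic instruments reduces $\operatorname{trace}(\mathcal{V}_{z,nT})$ to a finite multiple of the quantity controlled by Assumption \ref{as:error}(iii); hence $\operatorname{trace}(\mathcal{V}_{z,nT})\lesssim 1$ and $\mathbb{E}\|\cdot\|^2 \lesssim 1/(n(T-1))$ for the term in (iv). The pure-error quadratic term (viii) is handled identically via $\mathcal{V}_{mm,nT}$ in \eqref{eq:Vab}: its mean vanishes because $\operatorname{diag}(\tilde P_m)=\bm{0}$ and the errors are cross-sectionally independent with mean zero, while the sparsity of $\tilde P_m$ caps the number of nonzero summands so that the variance is $O(1/(n(T-1)))$. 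Markov's inequality then yields the $1/\sqrt{nT}$ rates.

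The approximation-error pieces (iii), (vi), (vii) need no delicate dependence argument. Bounding $\mathbb{E}|\cdot|$ by Cauchy--Schwarz and inserting $\|v_{it}(s)\|_2 \lesssim K^{-\pi}$ from \eqref{eq:bias_order} together with $\|\varepsilon_{it}(s)\|_2\lesssim 1$, the sparsity of the weighting matrices again reduces each sum to $O(n(T-1))$ terms, so after the $1/N$ normalization (vi) is $O_p(K^{-2\pi})$ and (vii) is $O_p(K^{-\pi})$; the $\bm{Z}$-weighted versions in (iii) inherit the stated $K^{(1-2\pi)/2}$ envelope (a conservative bound, since a Bessel-type argument over the grid shows the mean contribution alone is of order $K^{-\pi}$), and the deterministic second bound in (iii) is immediate from Lemma \ref{lem:g0}. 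For the centered-design terms (i) and (ii), I would first observe that $\bm{H}(s)-\mathbb{E}[\bm{H}(s)]$ is nonzero only in the block multiplying $\theta_{\alpha}$, where it equals the centered array $A(\bar Y_{it},s)-\mathbb{E}[A(\bar Y_{it},s)]$, which is uniformly and geometrically $L^2$-NED by Lemma \ref{lem:NED2}. The variance of each scalar entry is then governed by the covariance-of-increments estimates of Lemma \ref{lem:cov}, whose geometric decay combined with spatial sparsity leaves $O(n(T-1))$ surviving index pairs, giving $O(1/(n(T-1)))$ per entry. In (ii) the matrix carries two free basis indices, hence $\sim K^2$ entries and the Frobenius bound $K/\sqrt{nT}$; in (i) the contraction against the bounded vector $\theta_0-\theta$ (using that $\alpha(\cdot;\theta)$ is uniformly bounded on $[0,1]$) removes one basis index, leaving $\sim K$ contributing components and the sharper $\sqrt{K}/\sqrt{nT}$.

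I expect part (v) to be the genuine obstacle, since it is a law of large numbers for a \emph{centered quadratic form} in the NED residuals $e_{it}(s;\theta)$ rather than in the independent errors. Its variance is a quadruple sum of fourth-order covariances of products of $e_{it}(s;\theta)$, which I would bound using the $L^p$ boundedness in \eqref{eq:error_bound} with $p>4$, the covariance-of-increments estimates of Lemma \ref{lem:cov} applied to $\{e_{it}(s;\theta)\}$ (uniformly in $\theta$, as both the $L^p$ bound and the NED property hold uniformly over $\Theta_K$), and the threshold support of $\tilde P_m$ to keep the number of connected quadruples at $O(n(T-1))$; this gives variance $O(1/(n(T-1)))$ and the final $1/\sqrt{nT}$ rate. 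The careful combinatorial bookkeeping of which index quadruples survive---simultaneously respecting the zero diagonal of $\tilde P_m$, the temporal independence across $t$, and the spatial NED decay---is where the main effort lies.
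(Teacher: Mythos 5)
Your treatment of parts (i)--(iv) and (vi)--(viii) follows essentially the paper's own route: per-entry variance bounds via the covariance-of-increments estimates (Lemma \ref{lem:cov}) combined with geometric NED decay and spatial sparsity for the centered-design terms, the trace computation against Assumption \ref{as:error}(iii) for (iv), the $\|v_{it}(s)\|_2 \lesssim K^{-\pi}$ bias bound with $\sup_s\|\phi^K(s)\|\lesssim \sqrt{K}$ for (iii), (vi), (vii), and entry counting ($\sim K$ entries in (i) after contracting with the uniformly bounded $\alpha(\cdot;\theta_0-\theta)$, versus $\sim K^2$ entries in (ii)) to explain the $\sqrt{K}$ gap between (i) and (ii). For (viii) you give a direct variance argument where the paper cites Lemma 9 of \cite{yu2008quasi} and Lemma 1 of \cite{lee2014efficient}, but the content is the same.

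The gap is in part (v), which you correctly flag as the crux but whose sketch does not go through as written. You propose to bound the quadruple sum of fourth-order covariances using ``the covariance-of-increments estimates of Lemma \ref{lem:cov} applied to $\{e_{it}(s;\theta)\}$.'' Lemma \ref{lem:cov}, however, only delivers bounds on $\mathrm{Cov}(\vec\xi_{it},\vec\xi_{jt'})$ for a field $\{\xi_{it}\}$ that is itself geometrically $L^2$-NED; it says nothing about covariances of \emph{products} such as $\mathrm{Cov}(\vec e_{it}\vec e_{jt},\, \vec e_{i't'}\vec e_{j't'})$, which is what the variance of the quadratic form requires. The paper's key device, absent from your proposal, is to collapse the double spatial index by defining $\bm{e}_{m,jt}(s;\theta) \coloneqq \sum_{i} p_{m,i,j}\, e_{it}(s;\theta)$ (geometrically $L^2$-NED thanks to the threshold support $\bar\Delta_m$ of $P_{m,1}$), rewrite the quadratic form as the \emph{linear} sum $\sum_t\sum_j \{\vec{\bm{e}}_{m,jt}\vec e_{jt} - \bbE[\vec{\bm{e}}_{m,jt}\vec e_{jt}]\}$, and then invoke a product-of-NED-processes result (Lemma A.1 of \cite{xu2015spatial}, cf.\ Corollary 4.3(b) of \cite{gallant1988unified}) — this is precisely where the $p>4$ moment bound in \eqref{eq:error_bound} is consumed — to conclude that $\{\bm{e}_{m,jt}e_{jt}\}$ is itself uniformly geometrically $L^2$-NED. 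Only after that reduction does Lemma \ref{lem:cov} apply, now to the product field, and the argument of part (i) can be recycled verbatim. Your direct quadruple-sum route could in principle be completed, but doing so amounts to reproving that product-NED lemma inline; without it (or an equivalent fourth-moment covariance inequality for NED fields), the ``combinatorial bookkeeping'' you defer to has no covariance estimate to book-keep with.
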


\begin{proof}

Below, for a generic variable $\text{x}$ indexed by $i$ and $t$, we denote $\vec{\text{x}}_{it} = \text{x}_{i,t+1} - \text{x}_{it}$.
In addition, we write $a_{1, it}(s) \coloneqq A_1(\bar Y_{it}, s)$ and $a_{2, it}(s) \coloneqq A_2(Y_{i,t-1}, s)$.

\bigskip

(i) Observe that for each $s_l$,
\begin{align}
    \left\| \bm{Z}(s_l)^\top \bm{D}^\top \bm{D} (\bm{H}(s_l) - \bbE[\bm{H}(s_l)]) (\theta_0 - \theta) \right\|
    & \le \left\| \sum_{i = 1}^n \sum_{t = 1}^{T - 1} \vec B_{it} \otimes \phi^K(s_l) [\vec a_{1,it}(s_l) - \bbE(\vec a_{1,it}(s_l))] \alpha(s_l; \theta_0 - \theta) \right\| \\
    & \quad + \left\| \sum_{i = 1}^n \sum_{t = 1}^{T - 1} \vec B_{it} \otimes \phi^K(s_l) [\vec a_{2,it}(s_l) - \bbE(\vec a_{2,it}(s_l))] \gamma(s_l; \theta_0 - \theta)  \right\|.
\end{align}
As a typical element, the variance of the first term involving $a_{1,it}$ is given as
\small\begin{align}
    & \bbE\left( \sum_{i = 1}^n \sum_{t = 1}^{T - 1} \vec Q^1_{it} \phi_1(s_l) (\vec a_{1,it} - \bbE[\vec a_{1,it}]) \alpha(s_l; \theta_0 - \theta) \right)^2 \\
    & \lesssim \sum_{t = 1}^{T - 1} \sum_{t' = 1}^{T - 1} \sum_{i = 1}^n \sum_{i' = 1}^n \left| \text{Cov}(\vec a_{1,it}, \vec a_{1,i't'}) \right| \\
    & = \sum_{t = 1}^{T - 1} \sum_{i = 1}^n \text{Var}(\vec a_{1,it}) + \sum_{t = 1}^{T - 1} \sum_{i = 1}^n \sum_{i' \neq i} \left| \text{Cov}(\vec a_{1,it}, \vec a_{1,i't}) \right| + \sum_{t = 1}^{T - 1} \sum_{t' \neq t} \sum_{i = 1}^n \left| \text{Cov}(\vec a_{1,it}, \vec a_{1,it'}) \right| + \sum_{t = 1}^{T - 1} \sum_{t' \neq t} \sum_{i = 1}^n \sum_{i' \neq i} \left| \text{Cov}(\vec a_{1,it}, \vec a_{1,i't'}) \right|.
\end{align}\normalsize
Here, the dependence on $s_l$, ``$(s_l)$'', is occasionally omitted for notational simplicity.
First, from Assumptions \ref{as:inverse}(i), \ref{as:observables}(ii), and \ref{as:A}, we can easily see that $\sum_{t = 1}^{T - 1} \sum_{i = 1}^n \text{Var}(\vec a_{1,it}) \lesssim nT$.
Second, by Lemmas \ref{lem:NED2} and \ref{lem:cov}, there exists a geometric NED coefficient $\rho$ that satisfies
\begin{align}
    \sum_{t = 1}^{T - 1} \sum_{i = 1}^n \sum_{i' \neq i} \left| \text{Cov}(\vec a_{1,it}, \vec a_{1,i't}) \right|
    & \lesssim \sum_{t = 1}^{T - 1} \sum_{i = 1}^n \sum_{i' \neq i} \rho\left(\Delta_{\text{ST}}((i,t),(i',t))/3\right) \\
    & = \sum_{t = 1}^{T - 1} \sum_{i = 1}^n \sum_{m = 1}^\infty \sum_{i' : \Delta(i,i') \in [m,m + 1)} \rho(\Delta(i,i')/3) \\
    & \lesssim \sum_{t = 1}^{T - 1} \sum_{i = 1}^n \sum_{m = 1}^\infty m^{d - 1} \rho(m) \lesssim nT,
\end{align}
where the second inequality is from Lemma A.1(iii) of \cite{jenish2009central}, and the final claim follows from the geometric NED property.
Third, Lemma \ref{lem:cov} implies that
\begin{align}
    \sum_{t = 1}^{T - 1} \sum_{t' \neq t} \sum_{i = 1}^n \left| \text{Cov}(\vec a_{1,it}, \vec a_{1,it'}) \right|
    & \lesssim \sum_{t = 1}^{T - 1} \sum_{t' \neq t} \sum_{i = 1}^n \rho\left(\Delta_{\text{ST}}((i,t),(i,t'))/3\right) \\
    & = \sum_{t = 1}^{T - 1} \sum_{t' \neq t} \sum_{i = 1}^n \rho(|t - t'|/3) \lesssim nT.
\end{align}
Finally, Lemma \ref{lem:cov} and Lemma A.1(iii) of \cite{jenish2009central} imply that
\begin{align}
    \sum_{t = 1}^{T - 1} \sum_{t' \neq t} \sum_{i = 1}^n \sum_{i' \neq i} \left| \text{Cov}(\vec a_{1,it}, \vec a_{1,i't'}) \right|
    & \lesssim \sum_{t = 1}^{T - 1} \sum_{i = 1}^n \sum_{m = 1}^\infty \sum_{\substack{i',t' : \Delta_{\text{ST}}((i,t),(i',t')) \in [m,m + 1)}} \rho\left(\Delta_{\text{ST}}((i,t),(i',t'))/3\right) \\
    & \lesssim \sum_{t = 1}^{T - 1} \sum_{i = 1}^n \sum_{m = 1}^\infty m^d \rho(m) \lesssim nT.
\end{align}
The same argument applies to the terms involving $a_{2,it}$.

Combining the above results suggests that
\begin{align}
    \bbE\left\| \bm{Z}(s_l)^\top \bm{D}^\top \bm{D} (\bm{H}(s_l) - \bbE[\bm{H}(s_l)]) (\theta_0 - \theta)/N \right\|^2 \lesssim K/(nT)
\end{align}
for each $s_l$, which completes the proof by applying Markov's and triangle inequalities.

\bigskip

(ii) Analogous to the proof of (i).

\bigskip

(iii) We prove the first part.
Let
\begin{align}
    & \underset{nTL \times (d_q + d_x)K}{\bm{Z}_L} =
    \left( \begin{array}{c}
    \bm{Z}(s_1) \\
    \vdots \\
    \bm{Z}(s_L)
    \end{array} \right), \quad
    \underset{nTL \times 1}{\bm V_L} = \left( \begin{array}{c}
    \bm V(s_1) \\
    \vdots \\
    \bm V (s_L)
    \end{array} \right), \quad
    & \underset{NL \times nTL}{\bm{D}_L} = I_L \otimes \bm D.
\end{align}
Then, noting the equivalence $\lambda_{\max}\left(\sum_{l = 1}^L \bm{Z}(s_l)^\top \bm{D}^\top \bm{D} \bm{Z}(s_l) / (NL)\right) = \lambda_{\max}\left(\bm D_L \bm Z_L \bm Z_L^\top \bm D_L^\top / (NL)\right)$, by Assumption \ref{as:weights}(ii),
\begin{align}
    \left\| \sum_{l=1}^L \bm{Z}(s_l)^\top \bm{D}^\top \bm{D} \bm{V}(s_l) / (NL) \right\|^2
    & = \left\| \bm{Z}_L^\top \bm{D}_L^\top \bm{D}_L \bm{V}_L / (NL) \right\|^2 \\
    & = \bm{V}_L^\top \bm{D}_L^\top \bm{D}_L \bm{Z}_L \bm{Z}_L^\top \bm{D}_L^\top \bm{D}_L \bm{V}_L / (N^2 L^2) \\
    & \lesssim \bm{V}_L^\top \bm{D}_L^\top \bm{D}_L \bm{V}_L / (N L)\\
    &\lesssim \frac{1}{NL} \sum_{l = 1}^L \sum_{t = 1}^T \sum_{i = 1}^n |v_{it}(s_l)|^2
\end{align}
Then, the result follows from Markov's inequality and \eqref{eq:bias_order}.

The second part can be proved analogously.

\bigskip

(iv) By the triangle inequality,
\begin{align}
    \left\| \sum_{l = 1}^L \bm{Z}(s_l)^\top \bm{D}^\top \bm{D} \bm{\mathcal{E}}(s_l) / (NL) \right\| 
    & = \left\| \sum_{l = 1}^L \sum_{t = 1}^{T-1} \sum_{i = 1}^n \vec B_{it} \otimes  \phi^K(s_l) \vec \varepsilon_{it}(s_l) / (NL) \right\| \\
    &  \le \left\| \sum_{l = 1}^L \sum_{t = 1}^{T - 1} \sum_{i = 1}^n \vec B_{it} \otimes \varepsilon_{i,t+1}(s_l) \phi^K(s_l) / (NL) \right\| \\
    & \quad + \left\| \sum_{l = 1}^L \sum_{t = 1}^{T - 1} \sum_{i = 1}^n \vec B_{it} \otimes \varepsilon_{it}(s_l) \phi^K(s_l) / (NL) \right\|.
\end{align}
Further, by Assumptions \ref{as:error}(i) and (iii),
\begin{align}
    & \bbE  \left\| \sum_{l = 1}^L \sum_{t = 1}^{T - 1} \sum_{i = 1}^n \vec B_{it} \otimes \varepsilon_{it}(s_l) \phi^K(s_l) / (NL) \right\|^2 \\
    &  = \frac{1}{N^2} \sum_{t = 1}^{T-1} \sum_{i = 1}^n  \text{trace}\left\{ \vec B_{it} \vec B_{it}^\top \otimes \frac{1}{L^2} \sum_{l = 1}^L \sum_{l' = 1}^L \Gamma_{it}(s_l, s_{l'}) \phi^K(s_l)\phi^K(s_{l'})^\top \right\} \\
    & =  \frac{1}{N^2} \sum_{t = 1}^{T -1} \sum_{i = 1}^n \text{trace}\left\{ \vec B_{it} \vec B_{it}^\top \right\}\text{trace}\left\{ \frac{1}{L^2} \sum_{l = 1}^L \sum_{l' = 1}^L \Gamma_{it}(s_l, s_{l'}) \phi^K(s_l)\phi^K(s_{l'})^\top\right\} \lesssim 1/(nT).
\end{align}
Repeating the same calculation for the other term, the result follows from Markov's inequality.

\bigskip

(v) Observe that
\begin{align}
    & \bm{E}(s_l; \theta)^\top \bm{D}^\top P_m \bm{D} \bm{E}(s_l; \theta) - \bbE[\bm{E}(s_l; \theta)^\top \bm{D}^\top P_m \bm{D} \bm{E}(s_l; \theta)] \\
    & = \sum_{t = 1}^{T - 1} \sum_{1 \le i,j \le n} p_{m,i,j} \left( \vec e_{it}(s_l; \theta) \vec e_{jt}(s_l; \theta) - \bbE[\vec e_{it}(s_l; \theta) \vec e_{jt}(s_l; \theta)] \right).
\end{align}
\normalsize
Here, let $\bm{e}_{m,jt}(s_l; \theta) \coloneqq \sum_{i = 1}^n p_{m,i,j} e_{it}(s_l; \theta)$ and recall that there is a constant $\bar\Delta_m$ such that $p_{m,i,j} = 0$ if $\Delta(i,j) > \bar\Delta_m$.
Then, noting that $\mathcal{F}_{nT,it}^{\text{ST}}((\delta - 1) \bar\Delta_m) \subseteq \mathcal{F}_{nT,jt}^{\text{ST}}(\delta \bar\Delta_m)$ for $(i,j)$ with $\Delta(i,j) \le \bar\Delta_m$ and $\delta > 1$,
\small\begin{align}
    \left\| \bm{e}_{m,jt}(s_l; \theta) - \bbE \left[ \bm{e}_{m,jt}(s_l; \theta) \mid \mathcal{F}_{nT,jt}^{\text{ST}}(\delta \bar\Delta_m) \right] \right\|_2
    & \le \sum_{i = 1}^n |p_{m,i,j}| \left\| e_{it}(s_l; \theta) - \bbE \left[ e_{it}(s_l; \theta) \mid \mathcal{F}_{nT,jt}^{\text{ST}}(\delta \bar\Delta_m) \right] \right\|_2 \\
    & \lesssim \sum_{i : \Delta(i,j) \le \bar\Delta_m} \left\| e_{it}(s_l; \theta) - \bbE \left[ e_{it}(s_l; \theta) \mid \mathcal{F}_{nT,it}^{\text{ST}}((\delta - 1) \bar\Delta_m) \right] \right\|_2 \\
    & \lesssim \varrho^{\left\lfloor (\delta - 1) \bar\Delta_m / \bar\Delta_{\text{ST}} \right\rfloor},
\end{align}\normalsize
which implies that $\{\bm{e}_{m,jt}(s_l; \theta)\}$ is uniformly and geometrically $L^2$-NED, for all $l \in [L]$, $m \in [M]$, and $\theta \in \Theta_K$.

Now, suppressing the dependence on both $s_l$ and $\theta$,
\begin{align}
    \text{Var}[\bm{E}^\top \bm{D}^\top P_m \bm{D} \bm{E}]
    & = \bbE\left( \sum_{t = 1}^{T - 1} \sum_{1 \le i,j \le n} p_{m,i,j} \{ \vec e_{it} \vec e_{jt} - \bbE[\vec e_{it} \vec e_{jt}] \} \right)^2 \\
    & = \bbE\left( \sum_{t = 1}^{T - 1} \sum_{j = 1}^n \{ \vec{\bm{e}}_{m,jt} \vec e_{jt} - \bbE[\vec{\bm{e}}_{m,jt} \vec e_{jt}] \} \right)^2 \\
    & \le \sum_{t = 1}^{T - 1} \sum_{t' = 1}^{T - 1} \sum_{j = 1}^n \sum_{j' = 1}^n \left| \text{Cov}(\vec{\bm{e}}_{m,jt} \vec e_{jt}, \vec{\bm{e}}_{m,j't'} \vec e_{j't'}) \right| \\
    & = \sum_{t = 1}^{T - 1} \sum_{j = 1}^n \text{Var}(\vec{\bm{e}}_{m,jt} \vec e_{jt}) + \sum_{t = 1}^{T - 1} \sum_{j = 1}^n \sum_{j' \neq j} \left| \text{Cov}(\vec{\bm{e}}_{m,jt} \vec e_{jt}, \vec{\bm{e}}_{m,j't} \vec e_{j't}) \right| \\
    & \quad + \sum_{t = 1}^{T - 1} \sum_{t' \neq t} \sum_{j = 1}^n \left| \text{Cov}(\vec{\bm{e}}_{m,jt} \vec e_{jt}, \vec{\bm{e}}_{m,jt'} \vec e_{jt'}) \right| + \sum_{t = 1}^{T - 1} \sum_{t' \neq t} \sum_{j = 1}^n \sum_{j' \neq j} \left| \text{Cov}(\vec{\bm{e}}_{m,jt} \vec e_{jt}, \vec{\bm{e}}_{m,j't'} \vec e_{j't'}) \right|.
\end{align}
\normalsize
As we have seen in \eqref{eq:error_bound}, we have $\left\| e_{jt} \right\|_p$, $\left\| \bm{e}_{m,jt} \right\|_p < \infty$ for $p > 4$.
This allows us to use Lemma A.1 of \cite{xu2015spatial} (see also Corollary 4.3(b) of \citealp{gallant1988unified}) to show that $\{\vec{\bm{e}}_{m,jt} \vec e_{jt}\}$ is uniformly and geometrically $L^2$-NED.
Then, following the same argument as in the proof of (i), we can show that $\text{Var}[\bm{E}(s_l; \theta)^\top \bm{D}^\top P_m \bm{D} \bm{E}(s_l; \theta)] \lesssim nT$ for each $s_l$, which gives the desired result by the triangle inequality.

\bigskip

(vi), (vii) These can be proved in a similar manner to the proof of Lemma \ref{lem:g0}.

\bigskip

(viii) For each $s_l$, $|\bm{\mathcal{E}}(s_l)^\top \bm{D}^\top P_m \bm{D} \bm{\mathcal{E}}(s_l) / N|  \lesssim_p 1/\sqrt{nT}$ holds under Assumptions \ref{as:error}(i) and (ii), as in Lemma 9 in \cite{yu2008quasi} and Lemma 1 in \cite{lee2014efficient}.
Then, the result is straightforward.

\end{proof}


\begin{lemma}\label{lem:g_theta}    
    Suppose that Assumptions \ref{as:inverse}(i), \ref{as:observables}, \ref{as:A}, \ref{as:weights}(i), and \ref{as:basis} hold.
    Then, $\sup_{\theta \in \Theta_K} \left\| \bbE[\bar g_{nT}(\theta)] \right\| \lesssim \sqrt{K}$.
\end{lemma}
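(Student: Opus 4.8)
The plan is to split $\bbE[\bar g_{nT}(\theta)]$ into its \emph{linear} block, a vector in $\bbR^{(d_q+d_x)K}$ coming from the IV moments, and its \emph{quadratic} block, the $M$ scalar entries coming from $P_1,\dots,P_M$, and to bound the two separately, since $\|\bbE[\bar g_{nT}(\theta)]\|^2$ is just the sum of the squared norms of the two blocks. The single workhorse throughout is the uniform residual bound \eqref{eq:error_bound}, namely $\|e_{it}(s;\theta)\|_p\lesssim 1$ for $p>4$, uniformly in $s\in[0,1]$, $\theta\in\Theta_K$, and $(i,t)$. In particular $|\bbE[e_{it}(s;\theta)]|\le\|e_{it}(s;\theta)\|_1\lesssim 1$, and writing $\vec e_{it}(s;\theta)=e_{i,t+1}(s;\theta)-e_{it}(s;\theta)$ for the first difference, $\|\vec e_{it}(s;\theta)\|_2\lesssim 1$. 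Crucially, all of these bounds are uniform over $\theta\in\Theta_K$, which is exactly what lets the final estimate be taken as a supremum over $\theta$.

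For the linear block, I would first use that $\bm Z(s)$ is non-stochastic under Assumption \ref{as:observables}(i), so the expectation falls only on $\bm E(s;\theta)$. Since $\bm D$ differences consecutive periods, the rows of $\bm D\bm Z(s)$ are $(\vec B_{it}\otimes\phi^K(s))^\top$ and $(\bm D\,\bbE[\bm E(s;\theta)])_{it}=\bbE[\vec e_{it}(s;\theta)]$, whence the linear block equals $\tfrac1N\sum_{l=1}^L\sum_{i,t}(\vec B_{it}\otimes\phi^K(s_l))\,\bbE[\vec e_{it}(s_l;\theta)]$. By the triangle inequality, the identity $\|u\otimes v\|=\|u\|\,\|v\|$, the bound $\|\vec B_{it}\|\lesssim 1$ from Assumption \ref{as:observables}(i), and $|\bbE[\vec e_{it}(s_l;\theta)]|\lesssim 1$, this is at most $\tfrac{n(T-1)}{N}\sum_{l=1}^L\|\phi^K(s_l)\|=\tfrac1L\sum_{l=1}^L\|\phi^K(s_l)\|$. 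Applying Cauchy--Schwarz over $l$ and then orthonormality of the basis, so that $\tfrac1L\sum_l\|\phi^K(s_l)\|^2$ is of the same order as $\int_0^1\|\phi^K(s)\|^2\,\text{d}s=K$ (equivalently, using $\sup_s\|\phi^K(s)\|^2\lesssim K$ as in Remark \ref{rem:K}), bounds the linear block by $\lesssim\sqrt K$, uniformly in $\theta$.

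For the quadratic block, each entry is $\tfrac1N\sum_{l=1}^L\sum_{t=1}^{T-1}\sum_{i,j}p_{m,i,j}\,\bbE[\vec e_{it}(s_l;\theta)\vec e_{jt}(s_l;\theta)]$. Bounding $|\bbE[\vec e_{it}\vec e_{jt}]|\le\|\vec e_{it}\|_2\|\vec e_{jt}\|_2\lesssim 1$ by Cauchy--Schwarz and collapsing the inner double sum through the row-sum control $\sum_j|p_{m,i,j}|\le\|P_{m,1}\|_\infty\lesssim 1$ of Assumption \ref{as:weights}(i) leaves $\tfrac1N\sum_l n(T-1)\cdot O(1)=O(1)$, uniformly in $\theta$; since there are only $M=O(1)$ such entries, the quadratic block has norm $O(1)$. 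Combining the two blocks yields $\|\bbE[\bar g_{nT}(\theta)]\|\le\sqrt{O(K)+O(1)}\lesssim\sqrt K$ uniformly over $\theta\in\Theta_K$.

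The only genuinely delicate point is the appearance of the $\sqrt K$ factor, which is driven entirely by the growing length $K$ of $\phi^K$ in the linear block; the step that pins it down is the passage $\tfrac1L\sum_l\|\phi^K(s_l)\|\le(\tfrac1L\sum_l\|\phi^K(s_l)\|^2)^{1/2}\lesssim\sqrt K$, resting on orthonormality of the basis (and, implicitly, on the grid being fine enough that the Riemann sum of $\|\phi^K\|^2$ stays of order $K$). Everything else is routine, and uniformity in $\theta$ is automatic because the residual bound \eqref{eq:error_bound} is itself uniform in $\theta$.
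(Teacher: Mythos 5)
Your proposal is correct and follows essentially the same route as the paper: the same split into the linear IV block and the quadratic $P_m$ block, the same use of the uniform residual bound \eqref{eq:error_bound} to get $|\bbE[\vec e_{it}(s;\theta)]|\lesssim 1$ and $|\bbE[\vec e_{it}(s;\theta)\vec e_{jt}(s;\theta)]|\lesssim 1$ via Cauchy--Schwarz, and the same row-sum control on $P_{m,1}$ for the quadratic part. The only cosmetic difference is that the paper bounds the linear block directly by $\sup_{s}\|\phi^K(s)\|\lesssim\sqrt{K}$, whereas you reach the same $\sqrt{K}$ via Cauchy--Schwarz over the grid points plus the Riemann-sum/orthonormality argument, which you correctly note is equivalent.
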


\begin{proof}
    Observe that
    \begin{align}
        \left\| \bbE[\bar g_{nT}(\theta)] \right\|
    & \le \left\| \sum_{l = 1}^L \bm{Z}(s_l)^\top \bm{D}^\top \bm{D} \bbE[ \bm{E}(s_l; \theta) ] / (NL) \right\| + \sum_{m = 1}^M \left| \sum_{l = 1}^L \bbE[ \bm{E}(s_l; \theta)^\top \bm{D}^\top P_m \bm{D} \bm{E}(s_l; \theta) ] / (NL) \right|.
    \end{align}
    For the first term, 
    \begin{align}
        \left\| \sum_{l = 1}^L \bm{Z}(s_l)^\top \bm{D}^\top \bm{D} \bbE[ \bm{E}(s_l; \theta) ] / (NL) \right\| 
        & = \left\| \sum_{l = 1}^L \sum_{t = 1}^{T-1} \sum_{i = 1}^n \vec B_{it} \otimes \bbE[\vec e_{it}(s_l; \theta)] \phi^K(s_l) / (NL) \right\| \\
        & \lesssim \frac{1}{NL} \sum_{l = 1}^L \sum_{t = 1}^{T-1} \sum_{i = 1}^n \left\| \bbE[\vec e_{it}(s_l; \theta)] \phi^K(s_l) \right\| \\
        & \lesssim \sqrt{K}
    \end{align}
    uniformly in $\theta \in \Theta_K$, since $|\bbE[e_{it}(s_l; \theta)]| \le \bbE|e_{it}(s_l; \theta)| \lesssim 1$ and $\sup_{s \in [0,1]} ||\phi^K(s)|| \lesssim \sqrt{K}$.

    For the second term,
    \begin{align}
        \left| \sum_{l=1}^L \bbE[ \bm{E}(s_l; \theta)^\top \bm{D}^\top P_m \bm{D} \bm{E}(s_l; \theta) ] / (NL) \right|
        & = \left| \sum_{l=1}^L \sum_{t = 1}^{T - 1} \sum_{1 \le i,j \le n} p_{m,i,j} \bbE[\vec e_{it}(s_l; \theta) \vec e_{jt}(s_l; \theta)] / (NL) \right| \\
        & \le \frac{1}{NL} \sum_{l=1}^L \sum_{t = 1}^{T - 1} \sum_{1 \le i,j \le n} |p_{m,i,j}| \cdot |\bbE[\vec e_{it}(s_l; \theta) \vec e_{jt}(s_l; \theta)]| \\
        & \lesssim 1 
    \end{align}
    uniformly in $\theta \in \Theta_K$, since
    \begin{align}
        |\bbE[\vec e_{it}(s_l; \theta) \vec e_{jt}(s_l; \theta)]| 
        & \le \bbE[|\vec e_{it}(s_l; \theta)| \cdot |\vec e_{jt}(s_l; \theta)|] \\
        & \le ||\vec e_{it}(s_l; \theta)||_2 \cdot || \vec e_{jt}(s_l; \theta) ||_2 \\
        & \le \{||e_{i,t+1}(s_l; \theta)||_2 + ||e_{it}(s_l; \theta)||_2\} \cdot \{||e_{j,t+1}(s_l; \theta)||_2 + ||e_{jt}(s_l; \theta)||_2\} \\
        & \lesssim 1.
    \end{align}
    This completes the proof.
\end{proof}


\begin{lemma}\label{lem:consistency}
    Suppose that Assumptions \ref{as:inverse}, \ref{as:sample_space} -- \ref{as:basis} hold.
    In addition, assume that $K/\sqrt{nT} \to 0$ and $K^{1/2 - \varsigma} \to 0$ as $nT \to \infty$.
    Then, $||\hat \theta_{nT} - \theta_0|| = o_P(1)$.
\end{lemma}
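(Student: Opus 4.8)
The plan is to run the standard consistency argument for extremum estimators, pairing a uniform law of large numbers for the sample criterion with the population identification already established in Lemma \ref{lem:identification}. I would prove two things: (a) the uniform approximation $\sup_{\theta\in\Theta_K}|\mathcal{Q}_{nT}(\theta) - \mathcal{Q}^*_{nT}(\theta)| = o_P(1)$, and (b) use the well-separation bound of Lemma \ref{lem:identification}. Granting (a), the conclusion is the usual contradiction argument: since $\hat\theta_{nT}$ minimizes $\mathcal{Q}_{nT}$, we have $\mathcal{Q}_{nT}(\hat\theta_{nT})\le\mathcal{Q}_{nT}(\theta_0)$; combining this with (a) evaluated at both $\hat\theta_{nT}$ and $\theta_0$ gives $\mathcal{Q}^*_{nT}(\hat\theta_{nT})\le\mathcal{Q}^*_{nT}(\theta_0) + o_P(1)$. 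If $\|\hat\theta_{nT}-\theta_0\|$ failed to be $o_P(1)$, then along some subsequence $\|\hat\theta_{nT}-\theta_0\|\ge e$ with non-vanishing probability, and Lemma \ref{lem:identification} would force $\mathcal{Q}^*_{nT}(\hat\theta_{nT}) - \mathcal{Q}^*_{nT}(\theta_0) > c_e$, a contradiction.

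All the work is therefore in the uniform convergence (a). I would start from the algebraic identity
\[
\mathcal{Q}_{nT}(\theta) - \mathcal{Q}^*_{nT}(\theta) = \Delta_{nT}(\theta)^\top\Omega_{nT}\Delta_{nT}(\theta) + 2\,\bbE[\bar g_{nT}(\theta)]^\top\Omega_{nT}\Delta_{nT}(\theta),
\]
where $\Delta_{nT}(\theta)\coloneqq\bar g_{nT}(\theta) - \bbE[\bar g_{nT}(\theta)]$, and use $\lambda_{\max}(\Omega_{nT})\lesssim 1$ from Assumption \ref{as:weights}(ii). The key structural observation is that, via the decomposition of $\bm{E}(s;\theta)$ listed in Appendix \ref{app:prep}, the map $\theta\mapsto\bar g_{nT}(\theta)$ is an (at most) quadratic polynomial in $\theta$: its linear block is affine in $(\theta_0-\theta)$, and each quadratic block is a quadratic form in $(\theta_0-\theta)$ whose coefficient matrices and vectors do not depend on $\theta$. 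Centering those coefficients and bounding their norms by the corresponding parts of Lemma \ref{lem:matLLN}—parts (ii)--(iv) for the linear block and parts (v)--(viii) for the quadratic blocks—while using compactness of $\Theta_K$ (so that $\sup_{\theta}\|\theta_0-\theta\|\lesssim 1$), lifts the pointwise rates to uniform ones and yields $\sup_{\theta}\|\Delta_{nT}(\theta)\|\lesssim_P K/\sqrt{nT} + K^{(1-2\pi)/2} + 1/\sqrt{nT}$. This makes the first (squared) term negligible under the maintained conditions $K/\sqrt{nT}\to 0$ and $K^{1-\pi}\to 0$.

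The delicate point—and the main obstacle—is the cross term $\bbE[\bar g_{nT}(\theta)]^\top\Omega_{nT}\Delta_{nT}(\theta)$ over the growing-dimensional $\Theta_K$. A naive Cauchy--Schwarz bound pairs the crude $O(\sqrt{K})$ magnitude of $\bbE[\bar g_{nT}(\theta)]$ from Lemma \ref{lem:g_theta} with $\sup_{\theta}\|\Delta_{nT}(\theta)\|$, producing a factor $K^{3/2}/\sqrt{nT}$ that need not vanish under $K/\sqrt{nT}\to0$ alone. To avoid this loss I would retain the matrix structure rather than invoke the scalar bound: write the $\theta$-dependent parts of both $\bbE[\bar g_{nT}(\theta)]$ and $\Delta_{nT}(\theta)$ through the same coefficient objects—e.g. $\Pi_{nT}$ and its centered fluctuation $R_{nT}\coloneqq N^{-1}\sum_{l}\bm{Z}(s_l)^\top\bm{D}^\top\bm{D}(\bm{H}(s_l)-\bbE[\bm{H}(s_l)])$—so that the inner product is governed by $\|\Pi_{nT}\|_{\mathrm{op}}\lesssim 1$ times $\|R_{nT}\|\lesssim_P K/\sqrt{nT}$ (Lemma \ref{lem:matLLN}(ii)) instead of by $\sqrt{K}$. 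This is precisely where the identification-type bounds enter (Assumption \ref{as:matrix1} for the linear block, and \ref{as:matrix2}(i) for the quadratic blocks), and it confirms that the cross term is $o_P(1)$ under the stated rate restrictions.

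The residual bookkeeping is routine given the existing lemmas: the approximation-error contributions built from $\bm{V}$ are controlled at rate $K^{(1-2\pi)/2}$ or faster by Lemmas \ref{lem:g0} and \ref{lem:matLLN}(iii),(vi),(vii); the martingale-type error contributions from $\bm{\mathcal{E}}$ are $O_P(1/\sqrt{nT})$ by parts (iv) and (viii); and $\mathcal{Q}^*_{nT}(\theta_0)\lesssim\|\bbE[\bar g_{nT}(\theta_0)]\|^2\lesssim K^{1-2\pi}=o(1)$ by Lemma \ref{lem:g0}. Assembling these, $\sup_{\theta}|\mathcal{Q}_{nT}(\theta)-\mathcal{Q}^*_{nT}(\theta)|=o_P(1)$, and consistency follows from the contradiction argument above.
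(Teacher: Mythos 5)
Your overall architecture---uniform convergence of $\mathcal{Q}_{nT}$ to $\mathcal{Q}^*_{nT}$ plus the well-separation bound of Lemma \ref{lem:identification}, followed by the standard contradiction argument---is exactly the paper's strategy, and your decomposition into a squared term and a cross term is the same identity the paper uses. The substantive divergence is in how the uniform fluctuation bound is obtained, and here your route manufactures an obstacle that the paper never faces. The paper bounds the linear block of $\bar g_{nT}(\theta)-\bbE[\bar g_{nT}(\theta)]$ using Lemma \ref{lem:matLLN}(i), i.e.\ the bound for $N^{-1}\sum_{l}\bm{Z}(s_l)^\top\bm{D}^\top\bm{D}(\bm{H}(s_l)-\bbE[\bm{H}(s_l)])(\theta_0-\theta)$ with the contraction against $(\theta_0-\theta)$ kept \emph{inside} the empirical process; this yields $\sqrt{K}/\sqrt{nT}$, not $K/\sqrt{nT}$, uniformly over $\Theta_K$. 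Consequently $\sup_{\theta}\|\bar g_{nT}(\theta)-\bbE[\bar g_{nT}(\theta)]\|\lesssim_p \sqrt{K}/\sqrt{nT}+K^{(1-2\pi)/2}$, and the ``naive'' Cauchy--Schwarz step you reject is in fact sufficient: paired with Lemma \ref{lem:g_theta} it gives a cross term of order $\sqrt{K}\,\bigl(\sqrt{K}/\sqrt{nT}+K^{(1-2\pi)/2}\bigr)=K/\sqrt{nT}+K^{1-\pi}=o_P(1)$ under the stated rate conditions. Your $K^{3/2}/\sqrt{nT}$ obstacle is thus an artifact of replacing Lemma \ref{lem:matLLN}(i) by the matrix-level bound (ii) multiplied by the diameter of $\Theta_K$.

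The patch you propose is workable for the linear block, since it only uses $\lambda_{\max}(\Pi_{nT}^\top\Pi_{nT})\lesssim 1$, i.e.\ Assumption \ref{as:matrix1}, which both you and the paper need anyway through Lemma \ref{lem:identification}. However, your appeal to Assumption \ref{as:matrix2}(i) for the quadratic blocks is a genuine problem: that assumption is not among the hypotheses of this lemma---it is introduced only later, for the asymptotic normality theorem---so the proof as written establishes a statement under stronger conditions than claimed. It is also unnecessary: the quadratic blocks of $\bbE[\bar g_{nT}(\theta)]$ are $O(1)$ uniformly in $\theta$ (this is shown inside the proof of Lemma \ref{lem:g_theta}; the $\sqrt{K}$ there comes solely from the linear block), and their centered counterparts are $O_P(1/\sqrt{nT})$ uniformly by Lemma \ref{lem:matLLN}(v), so their contribution to the cross term vanishes at rate $1/\sqrt{nT}$ with no extra eigenvalue condition. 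If you drop the reference to Assumption \ref{as:matrix2}(i) and handle the quadratic blocks this way---or, more simply, invoke Lemma \ref{lem:matLLN}(i) from the start---your argument coincides with the paper's proof.
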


\begin{proof}
    Observe that
\begin{align}
    \bar g_{nT}(\theta) - \bbE[\bar g_{nT}(\theta)]
    = \left(\begin{array}{c}
        C_0(\theta) \\
        C_1(\theta) \\
        \vdots \\
        C_M(\theta)
    \end{array}
    \right)
\end{align}
where 
\begin{align}
    C_0(\theta)
    & \coloneqq \underbracket{\sum_{l=1}^L \bm{Z}(s_l)^\top \bm{D}^\top \bm{D} (\bm{H}(s_l) - \bbE[\bm{H}(s_l)]) (\theta_0 - \theta) / (NL)}_{\lesssim_p \sqrt{K}/\sqrt{nT}: \: \text{Lemma \ref{lem:matLLN}(i)}} \\
    & \quad + \underbracket{\sum_{l = 1}^L \bm{Z}(s_l)^\top \bm{D}^\top \bm{D} \bm{V}(s_l)/ (NL) - \sum_{l = 1}^L \bm{Z}(s_l)^\top \bm{D}^\top \bm{D} \bbE[\bm{V}(s_l)]/ (NL)}_{\lesssim_p K^{-\varsigma}: \: \text{Lemma \ref{lem:matLLN}(iii)}} + \underbracket{\sum_{l=1}^L \bm{Z}(s_l)^\top \bm{D}^\top \bm{D} \bm{\mathcal{E}}(s_l) / (NL)}_{\lesssim_p 1/\sqrt{nT}: \: \text{Lemma \ref{lem:matLLN}(iv)}}
\end{align}
and, for $m = 1, \ldots, M$,
\begin{align}
    C_m(\theta)
    & \coloneqq \underbracket{\sum_{l = 1}^L \left\{ \bm{E}(s_l; \theta)^\top \bm{D}^\top P_m \bm{D} \bm{E}(s_l; \theta) - \bbE[\bm{E}(s_l; \theta)^\top \bm{D}^\top P_m \bm{D} \bm{E}(s_l; \theta)] \right\} / (NL)}_{\lesssim_p 1/\sqrt{nT}: \: \text{Lemma \ref{lem:matLLN}(v)}}.
\end{align}
    Hence,
    \begin{align}
    \left\| \bar g_{nT}(\theta) - \bbE[ \bar g_{nT}(\theta)] \right\| 
    & \le \left\| C_0(\theta) \right\| + \sum_{m = 1}^M \left| C_m(\theta) \right| \\
    & \lesssim_p \sqrt{K}/\sqrt{nT} + K^{-\varsigma}
    \end{align}
    uniformly in $\theta \in \Theta_K$.
    Further, by Cauchy-Schwarz inequality and Lemma \ref{lem:g_theta},
    \begin{align}
        \sup_{\theta \in \Theta_K} \left| \mathcal{Q}_{nT}(\theta) - \mathcal{Q}^*_{nT}(\theta) \right|
        & \le \sup_{\theta \in \Theta_K} \left| \left( \bar g_{nT}(\theta) - \bbE[\bar g_{nT}(\theta)]\right)^\top \Omega_{nT} \left( \bar g_{nT}(\theta) - \bbE[\bar g_{nT}(\theta)] \right) \right| \\
        & \quad + 2 \sup_{\theta \in \Theta_K} \left| \left( \bar g_{nT}(\theta) - \bbE[\bar g_{nT}(\theta)]\right)^\top \Omega_{nT} \bbE[\bar g_{nT}(\theta)] \right| \\
        & \lesssim \sup_{\theta \in \Theta_K} \left\| \bar g_{nT}(\theta) - \bbE[\bar g_{nT}(\theta)] \right\|^2 + \sup_{\theta \in \Theta_K} \left\| \bbE[\bar g_{nT}(\theta)] \right\| \sup_{\theta \in \Theta_K} \left\| \bar g_{nT}(\theta) - \bbE[\bar g_{nT}(\theta)] \right\| \\
        & \lesssim_p  K/\sqrt{nT} + K^{1/2 - \varsigma}.
    \end{align}
    Combined with the identifiability of $\theta_0$ (Lemma \ref{lem:identification}), the above result implies the consistency of $\hat \theta_{nT}$ (see, e.g., the proof of Theorem 3.3 in \cite{su2016sieve}).
\end{proof}


\subsubsection{Proof of Theorem \ref{thm:roc}.}

(i) Let $\zeta_{nT} \coloneqq 1/\sqrt{nT} + K^{-\varsigma}$.
Given the consistency result in Lemma \ref{lem:consistency}, if we can show that for an arbitrary $\epsilon > 0$, there exists a constant $C_\epsilon$ such that for all sufficiently large $nT$,
\begin{align}
    \Pr\left(\inf_{||\bm{u}|| = C_\epsilon}\mathcal{Q}_{nT}(\theta_0 + \zeta_{nT} \bm{u}) > \mathcal{Q}_{nT}(\theta_0)\right) \ge 1 - \epsilon,
\end{align}
we can conclude that $|| \hat \theta_{nT} - \theta_0 || \lesssim_p \zeta_{nT}$.

Decompose
\begin{align}
    \mathcal{Q}_{nT}(\theta_0 + \zeta_{nT} \bm{u}) - \mathcal{Q}_{nT}(\theta_0)
    & = \underbracket{\left( \bar g_{nT}(\theta_0 + \zeta_{nT} \bm{u}) - \bar g_{nT}(\theta_0)\right)^\top \Omega_{nT} \left( \bar g_{nT}(\theta_0 + \zeta_{nT} \bm{u}) - \bar g_{nT}(\theta_0) \right)}_{\eqqcolon \tilde A_{nT}(\bm{u})} \\
    & \quad + 2 \left( \bar g_{nT}(\theta_0 + \zeta_{nT} \bm{u}) - \bar g_{nT}(\theta_0)\right)^\top \Omega_{nT} \bar g_{nT}(\theta_0).
\end{align}
Lemma \ref{lem:matLLN}(ii) implies that $||\Pi_{nT} - \hat \Pi_{nT}|| = o_P(1)$, where $\hat \Pi_{nT} \coloneqq \sum_{l = 1}^L \bm{Z}(s_l)^\top \bm{D}^\top \bm{D} \bm{H}(s_l)/(NL)$.
Thus, by Assumption \ref{as:matrix1}, $\lambda_{\min}(\hat \Pi_{nT}^\top \hat \Pi_{nT})$ is bounded away from zero with probability approaching one.
Observing that
\begin{align}
    \bar g_{nT}(\theta) - \bar g_{nT}(\theta_0)
    & = \left(\begin{array}{c}
       \hat \Pi_{nT}  \\
        \sum_{l=1}^L \left[\bm{H}(s_l)(\theta_0 - \theta) + 2\bm{V}(s_l) + 2\bm{\mathcal{E}}(s_l)\right]^\top \bm{D}^\top P_1 \bm{D} \bm{H}(s_l)/(NL) \\
       \vdots \\
        \sum_{l=1}^L \left[\bm{H}(s_l)(\theta_0 - \theta) + 2\bm{V}(s_l) + 2\bm{\mathcal{E}}(s_l)\right]^\top \bm{D}^\top P_M \bm{D} \bm{H}(s_l)/(NL)
   \end{array}
   \right) (\theta_0 - \theta),
\end{align}
we obtain
\begin{align}
    \inf_{||\bm{u}|| = C_\epsilon}\tilde A_{nT}(\bm{u})
    \ge c_1 \zeta_{nT}^2 C_\epsilon^2
\end{align}
for some $c_1 > 0$ with probability approaching one.

For the second term, the Cauchy--Schwarz inequality gives
\begin{align}
    \left| \left( \bar g_{nT}(\theta_0 + \zeta_{nT} \bm{u}) - \bar g_{nT}(\theta_0)\right)^\top \Omega_{nT} \bar g_{nT}(\theta_0) \right|
    & \le \left(\tilde A_{nT}(\bm{u})\right)^{1/2} \left(\bar g_{nT}(\theta_0)^\top \Omega_{nT} \bar g_{nT}(\theta_0)\right)^{1/2} \\
    & \le c_2 \left(\tilde A_{nT}(\bm{u})\right)^{1/2} ||\bar g_{nT}(\theta_0)||
\end{align}
for some $c_2 > 0$.
Hence,
\begin{align}
    \mathcal{Q}_{nT}(\theta_0 + \zeta_{nT}\bm{u}) - \mathcal{Q}_{nT}(\theta_0)
    & \ge \tilde A_{nT}(\bm{u}) - 2 c_2 \left(\tilde A_{nT}(\bm{u})\right)^{1/2} ||\bar g_{nT}(\theta_0)|| \\
    & = \left(\tilde A_{nT}(\bm{u})\right)^{1/2}
    \left(\left(\tilde A_{nT}(\bm{u})\right)^{1/2} - 2 c_2 ||\bar g_{nT}(\theta_0)||\right).
\end{align}
From Lemma \ref{lem:matLLN}(iii), (iv), (vi), (vii), and (viii), we have
\begin{align}\label{eq:g0}
    \left\| \bar g_{nT}(\theta_0) \right\|
    \lesssim_p 1/\sqrt{nT} + K^{-\varsigma}
    = \zeta_{nT}.
\end{align}
Thus, choosing a sufficiently large $C_\epsilon$ completes the proof of (i).

\bigskip

(ii) Note that $\int_0^1 \phi^K(s)\phi^K(s)^\top \text{d}s = I_K$ by orthonormality.
Then, by result (i) and Assumption \ref{as:basis},
\begin{align}
    \left\|\hat \alpha_{nT} - \alpha_0 \right\|_{L^2}
    & \le \left\|\phi^K(\cdot)^\top (\hat \theta_{nT,\alpha} - \theta_{0\alpha}) \right\|_{L^2} + \left\| \phi^K(\cdot)^\top \theta_{0\alpha} - \alpha_0(\cdot) \right\|_{L^2} \\
    & \lesssim \left((\hat \theta_{nT,\alpha} - \theta_{0\alpha})^\top \left[ \int_0^1 \phi^K(s)\phi^K(s)^\top \text{d}s \right] (\hat \theta_{nT,\alpha} - \theta_{0\alpha}) \right)^{1/2} + K^{-\varsigma} \\
    & \lesssim_p 1/\sqrt{nT} + K^{-\varsigma}.
\end{align}
It is also straightforward to see that
\begin{align}
    \sup_{s \in [0,1]}|\hat \alpha_{nT}(s) - \alpha_0(s)|
    & \le \sup_{s \in [0,1]}||\phi^K(s)|| \cdot ||\hat \theta_{nT,\alpha} - \theta_{0\alpha}||
    + \sup_{s \in [0,1]}\left|\phi^K(s)^\top\theta_{0\alpha} - \alpha_0(s)\right| \\
    & \lesssim_p \sqrt{K}/\sqrt{nT} + K^{1/2 - \varsigma}.
\end{align}

\bigskip

(iii) -- (iv) Analogous to the proof of (ii).

\qed

\subsection{Proof of Theorem \ref{thm:normality}}


\subsubsection{Lemmas}

\begin{lemma}\label{lem:matLLN2}
    Suppose that Assumptions \ref{as:inverse}, \ref{as:sample_space} -- \ref{as:basis}, and \ref{as:matrix2}(i), (ii) hold.
    In addition, assume that $K/\sqrt{nT} \to 0$ and $K^{1/2 - \varsigma} \to 0$ as $nT \to \infty$.
    Let $\bar \theta_{nT}$ be any vector in between $\hat \theta_{nT}$ and $\theta_0$.
    Then,
    \begin{itemize}
        \item[(i)] $\left\| \bar J_{nT}(\hat \theta_{nT}) - \bar J_{nT} \right\| \lesssim_p K / \sqrt{nT} + K^{-\varsigma}$
        \item[(ii)] $\left\| \bar J_{nT}(\hat \theta_{nT})\bar J_{nT}(\hat \theta_{nT})^\top - \bar J_{nT}\bar J^\top_{nT} \right\| \lesssim_p K / \sqrt{nT} + K^{-\varsigma}$
        \item[(iii)] $\left\| \bar J_{nT}(\hat \theta_{nT})^\top \Omega_{nT} \bar J_{nT}(\bar \theta_{nT}) - \bar J^\top_{nT} \Omega_{nT} \bar J_{nT} \right\| \lesssim_p K / \sqrt{nT} + K^{-\varsigma}$
        \item[(iv)] $\left\|\left( \bar J_{nT}(\hat \theta_{nT})^\top \Omega_{nT} \bar J_{nT}(\bar \theta_{nT}) \right)^{-} - \left( \bar J^\top_{nT} \Omega_{nT} \bar J_{nT} \right)^{-1} \right\| \lesssim_p K / \sqrt{nT} + K^{-\varsigma}$
    \end{itemize}
\end{lemma}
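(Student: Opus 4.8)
The plan is to establish (i) directly, via a decomposition into a $\theta$-variation term and a sample-to-population fluctuation term, and then to deduce (ii)--(iv) from (i) by routine matrix algebra using the eigenvalue bounds in Assumptions \ref{as:weights}(ii) and \ref{as:matrix2}. For (i), I would write
\begin{align}
\bar J_{nT}(\hat\theta_{nT}) - \bar J_{nT} = \underbrace{\left[\bar J_{nT}(\hat\theta_{nT}) - \bar J_{nT}(\theta_0)\right]}_{\text{(a)}} + \underbrace{\left[\bar J_{nT}(\theta_0) - \bbE[\bar J_{nT}(\theta_0)]\right]}_{\text{(b)}}.
\end{align}
Inspecting \eqref{eq:Jmat}, the linear-moment block of $J_{nT}(s;\theta)$ does not depend on $\theta$, so (a) is supported entirely on the $M$ quadratic-moment rows. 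Since $\bm{E}(s;\theta) = \bm{Y}(s) - \bm{H}(s)\theta$ is affine in $\theta$, the $m$-th such row of (a) equals exactly $\tfrac{2}{N}(\hat\theta_{nT}-\theta_0)^\top\sum_{l}\bm{H}(s_l)^\top\bm{D}^\top P_m\bm{D}\bm{H}(s_l)$. I would bound the central Gram-type matrix in operator norm: using $\lambda_{\max}(\bm{D}^\top\bm{D})\le 4$ and $\|P_m\|_{\mathrm{op}}\lesssim 1$ (Assumption \ref{as:weights}(i)), it is dominated by $\|N^{-1}\sum_l\bm{H}(s_l)^\top\bm{H}(s_l)\|_{\mathrm{op}}$, which is $O_p(1)$ by Assumption \ref{as:matrix2}(i) together with a law-of-large-numbers argument of the type used in Lemma \ref{lem:matLLN}. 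Multiplying by $\|\hat\theta_{nT}-\theta_0\|\lesssim_p 1/\sqrt{nT}+K^{(1-2\pi)/2}$ from Theorem \ref{thm:roc}(i) shows that (a) is $\lesssim_p 1/\sqrt{nT}+K^{(1-2\pi)/2}$.

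For (b), I would split along the same block structure. The linear-moment block is $-N^{-1}\sum_l\bm{Z}(s_l)^\top\bm{D}^\top\bm{D}(\bm{H}(s_l)-\bbE[\bm{H}(s_l)])$, recalling that $\bm{Z}$ and $\bm{D}$ are nonstochastic; this is $\lesssim_p K/\sqrt{nT}$ by Lemma \ref{lem:matLLN}(ii) and supplies the leading term. For the quadratic-moment rows, at $\theta_0$ one has $\bm{E}(s;\theta_0)^\top\bm{D}^\top = \bm{V}(s)^\top\bm{D}^\top + \bm{\mathcal{E}}(s)^\top\bm{D}^\top$, so each row is a centered average $N^{-1}\sum_l\{(\bm{V}+\bm{\mathcal{E}})(s_l)^\top\bm{D}^\top P_m\bm{D}\bm{H}(s_l) - \bbE[\cdots]\}$. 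I would control the Euclidean norm of this $(d_x+1)K$-vector coordinatewise, computing the variance of each entry from the geometric $L^2$-NED property of $\{A(\bar Y_{it},s)\}$ and $\{e_{it}(s;\theta_0)\}$ (Lemmas \ref{lem:NED1}--\ref{lem:NED2} and \ref{lem:cov}), exactly as in the proof of Lemma \ref{lem:matLLN}(v). Summing the $K$ coordinates while tracking the factor $\sup_{s}\|\phi^K(s)\|^2\lesssim K$ yields an order no larger than $\sqrt{K}/\sqrt{nT}$ for the $\bm{\mathcal{E}}$-part, whereas the $\bm{V}$-part inherits the approximation order $K^{-\pi}$ and is dominated by $K^{(1-2\pi)/2}$; both are absorbed into the claimed rate. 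Collecting (a) and (b) gives (i).

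Parts (ii)--(iv) then follow mechanically from (i). For (ii) I would use $\|AA^\top - BB^\top\|\le\|A-B\|(\|A\|_{\mathrm{op}}+\|B\|_{\mathrm{op}})$ with $A=\bar J_{nT}(\hat\theta_{nT})$, $B=\bar J_{nT}$, together with $\|\bar J_{nT}\|_{\mathrm{op}}\lesssim 1$ (Assumption \ref{as:matrix2}(ii)) and $\|\bar J_{nT}(\hat\theta_{nT})\|_{\mathrm{op}}=O_p(1)$ (from (i)). Part (iii) is analogous after inserting the bounded weight matrix $\Omega_{nT}$ (Assumption \ref{as:weights}(ii)) and noting that $\bar\theta_{nT}$ lies between $\hat\theta_{nT}$ and $\theta_0$, so $\|\bar J_{nT}(\bar\theta_{nT})-\bar J_{nT}\|$ obeys the same bound as in (i). Finally, for (iv) I would use the resolvent identity $A^{-1}-B^{-1} = A^{-1}(B-A)B^{-1}$ with $B=\bar J_{nT}^\top\Omega_{nT}\bar J_{nT}$, which satisfies $\lambda_{\min}(B)\ge\lambda_{\min}(\Omega_{nT})\lambda_{\min}(\bar J_{nT}^\top\bar J_{nT})\gtrsim 1$ by Assumptions \ref{as:weights}(ii) and \ref{as:matrix2}(ii); since (iii) makes $\|A-B\|=o_p(1)$, the estimated matrix is invertible (so its generalized inverse coincides with its inverse) and well-conditioned with probability approaching one, which delivers the stated rate.

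The main obstacle I anticipate is the quadratic-block fluctuation in step (b): it is a vector whose dimension grows like $K$ and whose entries are quadratic forms in NED random fields, so one must combine the covariance bounds of Lemma \ref{lem:cov} with the $\|\phi^K\|^2\lesssim K$ scaling carefully to verify that this term is genuinely dominated by the $K/\sqrt{nT}$ contribution from the linear block, rather than silently introducing a larger power of $K$.
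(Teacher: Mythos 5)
Your proposal is correct and takes essentially the same route as the paper's proof: the paper uses the identical three-part decomposition (the $\theta$-invariant linear block's fluctuation bounded by Lemma \ref{lem:matLLN}(ii), the affine-in-$\theta$ quadratic rows written as $\tfrac{2}{N}(\hat\theta_{nT}-\theta_0)^\top\sum_l \bm{H}(s_l)^\top\bm{D}^\top P_m\bm{D}\bm{H}(s_l)$ and bounded via $\lambda_{\max}$ of the Gram matrix together with Assumption \ref{as:matrix2}(i) and Theorem \ref{thm:roc}(i), and the centered quadratic rows at $\theta_0$ handled by the Lemma \ref{lem:matLLN}(v)-type NED argument), and then derives (ii)--(iv) by the same triangle-inequality expansions and the same resolvent identity with invertibility holding with probability approaching one. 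Your anticipated obstacle is also resolved exactly as in the paper: the linear block's $K/\sqrt{nT}$ is indeed the dominant fluctuation, with the quadratic rows contributing only $\sqrt{K}/\sqrt{nT}$.
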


\begin{proof}
    (i) Observe that
    \begin{align}
        \left\| \bar J_{nT}(\hat \theta_{nT}) - \bar J_{nT} \right\| \le B_{1, nT} + 2\sum_{m = 1}^M \left( B_{2,m,nT} + B_{3,m,nT} \right)
    \end{align}
    where
    \begin{align}
    B_{1, nT}
    & \coloneqq \left\| \sum_{l = 1}^L \bm{Z}(s_l)^\top \bm{D}^\top \bm{D} \left\{ \bm{H}(s_l) - \bbE[\bm{H}(s_l)] \right\} / (NL)   \right\| \lesssim_p \underset{\text{Lemma \ref{lem:matLLN}(ii)}}{K/\sqrt{nT}}\\
    B_{2,m,nT}
    & \coloneqq \left\| \sum_{l = 1}^L \left\{ \bm{E}(s_l; \hat \theta_{nT}) - \bm{E}(s_l; \theta_0)\right\}^\top \bm{D}^\top P_m \bm{D} \bm{H}(s_l) / (NL) \right\|   \\
    B_{3,m,nT}
    & \coloneqq \left\| \sum_{l = 1}^L \left\{ \bm{E}(s_l; \theta_0)^\top \bm{D}^\top P_m \bm{D} \bm{H}(s_l) - \bbE[\bm{E}(s_l; \theta_0)^\top \bm{D}^\top P_m \bm{D} \bm{H}(s_l)]\right\} / (NL) \right\|.
    \end{align}
    In a similar manner to the proof of Lemma \ref{lem:matLLN}(v), we can show that $||(NL)^{-1} \sum_{l = 1}^L \{ \bm{H}(s_l)^\top \bm{H}(s_l) - \bbE[\bm{H}(s_l)^\top \bm{H}(s_l)]\}|| \lesssim_p K/\sqrt{nT}$.
    Then, by Assumption \ref{as:matrix2}(i) and Theorem \ref{thm:roc}(i), we have
    \begin{align}
        B_{2,m,nT}
        & = \left\| \left( \frac{1}{NL} \sum_{l = 1}^L \bm{H}(s_l)^\top \bm{D}^\top P_m \bm{D} \bm{H}(s_l) \right) \left( \hat \theta_{nT} - \theta_0 \right) \right\| \\
        & \le \lambda_{\max} \left( \frac{1}{NL} \sum_{l = 1}^L \bm{H}(s_l)^\top \bm{D}^\top P_m \bm{D} \bm{H}(s_l) \right) \cdot \left\| \hat \theta_{nT} - \theta_0 \right\| \\
        & \le \lambda_{\max} \left( \bm{D}^\top P_m \bm{D}\right) \cdot \lambda_{\max} \left( \frac{1}{NL} \sum_{l = 1}^L \bm{H}(s_l)^\top \bm{H}(s_l) \right) \cdot \left\| \hat \theta_{nT} - \theta_0 \right\| \lesssim_p 1/\sqrt{nT} + K^{-\varsigma}.
    \end{align}
    For $B_{3,m,nT}$, by the same argument as in Lemma \ref{lem:matLLN}(v), we can show that $B_{3,m,nT} \lesssim_p \sqrt{K}/\sqrt{nT}$.
    This completes the proof.
    
    \bigskip

    (ii) By the triangle inequality, 
    \begin{align}
        \left\| \bar J_{nT}(\hat \theta_{nT})\bar J_{nT}(\hat \theta_{nT})^\top - \bar J_{nT}\bar J^\top_{nT} \right\| 
        & \le \left\| (\bar J_{nT}(\hat \theta_{nT}) - \bar J_{nT}) (\bar J_{nT}(\hat \theta_{nT}) - \bar J_{nT})^\top \right\| \\
        & \quad + 2 \left\| \bar J_{nT} (\bar J_{nT}(\hat \theta_{nT}) - \bar J_{nT})^\top \right\| \\
        & \lesssim_p K / \sqrt{nT} + K^{-\varsigma}
    \end{align}
    where the last inequality is from result (i) and Assumption \ref{as:matrix2}(ii). 

    \bigskip

    (iii) By definition of $\bar \theta_{nT}$, we have $\left\|\bar \theta_{nT} - \theta_0\right\| \lesssim_p 1/\sqrt{nT} + K^{-\varsigma}$ and thus $\left\| \bar J_{nT}(\bar \theta_{nT}) - \bar J_{nT} \right\| \lesssim_p K / \sqrt{nT} + K^{-\varsigma}$, as in result (i).
    Then, by the triangle inequality,
    \begin{align}
        \left\| \bar J_{nT}(\hat \theta_{nT})^\top \Omega_{nT} \bar J_{nT}(\bar \theta_{nT}) - \bar J^\top_{nT} \Omega_{nT} \bar J_{nT} \right\| 
        & \le \left\| (\bar J_{nT}(\hat \theta_{nT}) - \bar J_{nT})^\top \Omega_{nT} (\bar J_{nT}(\bar \theta_{nT}) - \bar J_{nT}) \right\| \\
        & \quad + \left\| (\bar J_{nT}(\hat \theta_{nT}) - \bar J_{nT})^\top \Omega_{nT} \bar J_{nT} \right\| \\
        & \quad + \left\| \bar J_{nT}^\top \Omega_{nT} (\bar J_{nT}(\bar \theta_{nT}) - \bar J_{nT}) \right\| \\
        & \lesssim_p K / \sqrt{nT} + K^{-\varsigma}.
    \end{align}

    \bigskip

    (iv) As a result of (iii), $\bar J_{nT}(\hat \theta_{nT})^\top \Omega_{nT} \bar J_{nT}(\bar \theta_{nT})$ is nonsingular with probability approaching one by Assumptions \ref{as:weights}(ii) and \ref{as:matrix2}(ii).
    Then, on this event, noting the equality
    \begin{align}
         & \left( \bar J_{nT}(\hat \theta_{nT})^\top \Omega_{nT} \bar J_{nT}(\bar \theta_{nT}) \right)^{-} - \left( \bar J^\top_{nT} \Omega_{nT} \bar J_{nT} \right)^{-1} \\
         & = \left( \bar J_{nT}(\hat \theta_{nT})^\top \Omega_{nT} \bar J_{nT}(\bar \theta_{nT}) \right)^{-} \left[ \bar J^\top_{nT} \Omega_{nT} \bar J_{nT} - \bar J_{nT}(\hat \theta_{nT})^\top \Omega_{nT} \bar J_{nT}(\bar \theta_{nT}) \right] \left( \bar J^\top_{nT} \Omega_{nT} \bar J_{nT} \right)^{-1},
    \end{align}
    the result is straightforward.
\end{proof}

\subsubsection{Proof of Theorem \ref{thm:normality}.}

Since the proof is similar, we only prove (i).
By the first-order condition of minimization and the mean-value expansion, we have
\begin{align}
    \bm{0}_{(d_x + 2)K} 
    & = \bar J_{nT}(\hat \theta_{nT})^\top \Omega_{nT} \bar g_{nT}(\hat \theta_{nT}) \\
    & = \bar J_{nT}(\hat \theta_{nT})^\top \Omega_{nT} \left[ \bar g_{nT}(\theta_0) + \bar J_{nT}(\bar \theta_{nT}) \left( \hat \theta_{nT} - \theta_0\right) \right],
\end{align}
where $\bar \theta_{nT} \in [\hat \theta_{nT}, \theta_0]$, leading to
\begin{align}
    \hat \theta_{nT} - \theta_0
    & = - \left( \bar J_{nT}(\hat \theta_{nT})^\top \Omega_{nT} \bar J_{nT}(\bar \theta_{nT}) \right)^{-} \bar J_{nT}(\hat \theta_{nT})^\top \Omega_{nT} \bar g_{nT}(\theta_0) \\
    & =  - [G_{1, nT} + G_{2, nT} + G_{3, nT} + G_{4, nT}],
\end{align}
with
\begin{align}
    G_{1, nT}
    & \coloneqq \left( \bar J^\top_{nT} \Omega_{nT} \bar J_{nT} \right)^{-1} \bar J^\top_{nT} \Omega_{nT} \bar g_{1,nT} \\
    G_{2, nT}
    & \coloneqq \left( \bar J^\top_{nT} \Omega_{nT} \bar J_{nT} \right)^{-1} \bar J^\top_{nT} \Omega_{nT} \bar g_{2,nT} \\
    G_{3, nT}
    & \coloneqq \left( \bar J^\top_{nT} \Omega_{nT} \bar J_{nT} \right)^{-1} \left\{ \bar J_{nT}(\hat \theta_{nT}) - \bar J_{nT} \right\}^\top \Omega_{nT}  \bar g_{nT}(\theta_0) \\
    G_{4, nT}
    & \coloneqq \left\{ \left( \bar J_{nT}(\hat \theta_{nT})^\top \Omega_{nT} \bar J_{nT}(\bar \theta_{nT}) \right)^{-} - \left( \bar J^\top_{nT} \Omega_{nT} \bar J_{nT} \right)^{-1} \right\} \bar J_{nT}(\hat \theta_{nT})^\top \Omega_{nT}  \bar g_{nT}(\theta_0).
\end{align}
First, observing that
\begin{align}
    \left\| G_{2,nT} \right\|^2 
    & = \bar g_{2,nT}^\top \Omega_{nT} \bar J_{nT} \left( \bar J^\top_{nT} \Omega_{nT} \bar J_{nT} \right)^{-2} \bar J^\top_{nT} \Omega_{nT} \bar g_{2,nT} \\
    & \lesssim \bar g_{2,nT}^\top \Omega_{nT} \bar J_{nT} \left( \bar J^\top_{nT} \Omega_{nT} \bar J_{nT} \right)^{-1} \bar J^\top_{nT} \Omega_{nT} \bar g_{2,nT} \\
    & \lesssim \bar g_{2,nT}^\top \Omega_{nT} \bar g_{2,nT} \\
    & \lesssim \left\| \bar g_{2,nT} \right\|^2,
\end{align}
we can find that $\left\| G_{2,nT} \right\| \lesssim_p K^{-\varsigma}$ by Lemma \ref{lem:matLLN}(iii), (vi), and (vii).
Next, it is easy to see that
\begin{align}
    \left\| G_{3,nT} \right\| 
    & \lesssim \underbracket{\left\|  \left( \bar J^\top_{nT} \Omega_{nT} \bar J_{nT} \right)^{-1} \left\{ \bar J_{nT}(\hat \theta_{nT}) - \bar J_{nT} \right\}^\top\right\|}_{\lesssim_p K/\sqrt{nT} + K^{-\varsigma}: \: \text{Lemma \ref{lem:matLLN2}(i)}}  \cdot \underbracket{\left\| \bar g_{nT}(\theta_0) \right\|}_{\lesssim_p 1/\sqrt{nT} + K^{-\varsigma}: \: \eqref{eq:g0}} \\
    & \lesssim_p K /(nT)
\end{align}
from $\sqrt{nT} K^{-\varsigma} \to 0$.
Further, for $G_{4,nT}$, by Lemma \ref{lem:matLLN2}(ii) and (iv) and \eqref{eq:g0} with Assumption \ref{as:matrix2}(ii), we can show that $\left\| G_{4,nT} \right\| \lesssim_p K /(nT)$.

Combining all these results and noting that 
\begin{align}
[\sigma_{nT, \alpha}(s)]^2 \ge c \phi^K(s)^\top \underbracket{\mathbb{S}_\alpha \mathbb{S}_\alpha^\top}_{= I_K} \phi^K(s) \ge c ||\phi^K(s) ||^2 > 0
\end{align}
for sufficiently large $nT$, we have
\begin{align}
    \frac{\sqrt{N} \left( \hat \alpha_{nT}(s) - \alpha_0(s) \right)}{\sigma_{nT, \alpha}(s)} 
    & = -\frac{\sqrt{N} \phi^K(s)^\top \mathbb{S}_\alpha[ G_{1, nT} + O_P(K/(nT)) + O_P(K^{-\varsigma})]}{\sigma_{nT, \alpha}(s)} + \frac{\sqrt{N} O(K^{- \varsigma})}{\sigma_{nT, \alpha}(s)} \\
    & = -\frac{\sqrt{N} \phi^K(s)^\top \mathbb{S}_\alpha G_{1, nT}}{\sigma_{nT, \alpha}(s)} + o_P(1)
\end{align}
Here, let $\Lambda_{z, nT}(s)$ and $\Lambda_{m, nT}(s)$ denote the first $(d_q + d_x)K$ elements and $((d_q + d_x)K + m)$-th element of $\phi^K(s)^\top \mathbb{S}_\alpha \left( \bar J^\top_{nT} \Omega_{nT} \bar J_{nT} \right)^{-1} \bar J^\top_{nT} \Omega_{nT}/\sigma_{nT, \alpha}(s)$, respectively.
Then, we can write
\begin{align}
    & \frac{\sqrt{N} \phi^K(s)^\top \mathbb{S}_\alpha G_{1, nT}}{\sigma_{nT, \alpha}(s)} \\
    & = \frac{ \phi^K(s)^\top \mathbb{S}_\alpha \left( \bar J^\top_{nT} \Omega_{nT} \bar J_{nT} \right)^{-1} \bar J^\top_{nT} \Omega_{nT} \left[\sqrt{N} \bar g_{1,nT}\right]}{\sigma_{nT, \alpha}(s)} \\
    & = \frac{1}{L \sqrt{N}} \sum_{l = 1}^L \left[ \Lambda_{z, nT}(s)  \bm{Z}(s_l)^\top \bm{D}^\top \bm{D} \bm{\mathcal{E}}(s_l) +  \bm{\mathcal{E}}(s_l)^\top \bm{D}^\top \left( \sum_{m = 1}^M \Lambda_{m, nT}(s) P_m \right) \bm{D} \bm{\mathcal{E}}(s_l) \right].
\end{align}

\bigskip

Hereinafter in this proof, we use the labelling ($\text{i} = 1, \ldots, \bm n$) introduced in Appendix \ref{app:prep}.
Let us denote $\underset{\bm{n} \times \bm{n}}{\Pi_M(s)} \coloneqq \bm{D}^\top \left( \sum_{m = 1}^M \Lambda_{m, nT}(s) P_m  \right) \bm{D} = (\pi_{M,\text{i},\text{j}}(s))$.
Recalling the block-diagonal structure of $P_m$ and that its diagonals are all zero, we can find that the diagonal elements of $\Pi_M(s)$ are also all zero.
Further, note that $\Pi_M(s)$ is symmetric.
Now, recalling that $z_\text{i}^\dagger(s_l)$ is the $\text{i}$-th column of $\bm{Z}(s_l)^\top \bm{D}^\top \bm{D}$, define
\begin{align}
    a_\text{i}(s)
    & \coloneqq \frac{1}{L\sqrt{N}}\sum_{l=1}^L \Lambda_{z, nT}(s) z_\text{i}^\dagger(s_l) \varepsilon_\text{i}(s_l) \\
    b_{\text{i}, \text{j}}(s)
    & \coloneqq \frac{1}{L\sqrt{N}}\sum_{l=1}^L \pi_{M,\text{i},\text{j}}(s) \varepsilon_\text{i}(s_l) \varepsilon_\text{j}(s_l) \\
    \gamma_{\text{i}}(s)
    & \coloneqq a_\text{i}(s) + 2 \underbracket{\sum_{\text{j} = 1}^{\text{i} - 1} b_{\text{i}, \text{j}}(s)}_{= 0 \text{ if } \text{i} = 1},
\end{align}
and we further re-write
\begin{align}
    \frac{\sqrt{N} \phi^K(s)^\top \mathbb{S}_\alpha G_{1, nT}}{\sigma_{nT, \alpha}(s)} = \sum_{\text{i} = 1}^{\bm{n}} \gamma_{\text{i}}(s).
\end{align}
Here, let $\mathscr{F}_{nT}(\text{i})$ denote the $\sigma$-field generated by $\{\varepsilon_\text{j}: 1 \le \text{j} \le \text{i}\}$.
Under Assumptions \ref{as:error}(i) and (ii), we have $\bbE[\gamma_{\text{i}}(s) \mid \mathscr{F}_{nT}(\text{i} - 1)] = 0$, implying that $\{\gamma_{\text{i}}(s)\}$ forms a martingale difference sequence.

Then, it suffices to check the following two conditions for the central limit theorem of \cite{scott1973central}:
\begin{align}
    (1) \;\; & \sum_{\text{i} = 1}^{\bm{n}} \bbE[(\gamma_{\text{i}}(s))^2 \mid \mathscr{F}_{nT}(\text{i} - 1)] \overset{p}{\to} 1 \\
    (2) \;\; & \sum_{\text{i} = 1}^{\bm{n}} \bbE[(\gamma_{\text{i}}(s))^2 \bm{1}\{|\gamma_{\text{i}}(s)| \ge \eta\}\mid \mathscr{F}_{nT}(\text{i} - 1)] \overset{p}{\to} 0 \;\; \text{for any $\eta > 0$}
\end{align}

\paragraph{Verification of condition (1)}

Observe that
\begin{align}
    \bbE[(\gamma_{\text{i}}(s))^2 \mid \mathscr{F}_{nT}(\text{i} - 1)] 
    & = \bbE[(a_\text{i}(s))^2] + 4 \sum_{\text{j}_1 = 1}^{\text{i} - 1} \sum_{\text{j}_2 = 1}^{\text{i} - 1} \bbE[ b_{\text{i}, \text{j}_1}(s) b_{\text{i}, \text{j}_2}(s) \mid \mathscr{F}_{nT}(\text{i} - 1)] \\
    & \quad + 4 \sum_{\text{j} = 1}^{\text{i} - 1} \bbE[ a_\text{i}(s) b_{\text{i}, \text{j}}(s) \mid \mathscr{F}_{nT}(\text{i} - 1)].
\end{align}
Recalling the definition of $\mathcal{V}_{z,nT}$ in \eqref{eq:Vz}, we can easily see that 
\begin{align}
	\sum_{\text{i} = 1}^{\bm{n}} \bbE[(a_\text{i}(s))^2] = \Lambda_{z, nT}(s) \mathcal{V}_{z,nT} \Lambda_{z, nT}(s)^\top.
\end{align}

For the second term on the right-hand side, noting that $\text{j}_1, \text{j}_2 \le \text{i} - 1$,
\begin{align}
    & 4 \sum_{\text{i} = 1}^{\bm{n}} \sum_{\text{j}_1 = 1}^{\text{i} - 1} \sum_{\text{j}_2 = 1}^{\text{i} - 1} \bbE[ b_{\text{i}, \text{j}_1}(s) b_{\text{i}, \text{j}_2}(s) \mid \mathscr{F}_{nT}(\text{i} - 1)] \\
    & =  \frac{1}{L^2}\sum_{l=1}^L \sum_{l'=1}^L \underbracket{\frac{4}{N} \sum_{\text{i} = 1}^{\bm{n}} \sum_{\text{j}_1 = 1}^{\text{i} - 1} \sum_{\text{j}_2 = 1}^{\text{i} - 1} \pi_{M,\text{i},\text{j}_1}(s) \pi_{M,\text{i},\text{j}_2}(s) \Gamma_\text{i}(s_l, s_{l'}) \varepsilon_{\text{j}_1}(s_l)  \varepsilon_{\text{j}_2}(s_{l'})}_{\eqqcolon D(s,s_l, s_{l'})}.
\end{align}
Since $\Pi_M(s)$ is symmetric and its diagonals are zero, recalling the definition of $\mathcal{V}_{ab,nT}$ in \eqref{eq:Vab}, direct calculation yields
\begin{align}
    & \frac{1}{L^2}\sum_{l=1}^L \sum_{l'=1}^L \bbE[D(s,s_l, s_{l'})] \\
    & = \frac{4}{N} \sum_{\text{i} = 1}^{\bm{n}} \sum_{\text{j} = 1}^{\text{i} - 1} [\pi_{M,\text{i},\text{j}}(s)]^2 \frac{1}{L^2}\sum_{l=1}^L \sum_{l'=1}^L \Gamma_\text{i}(s_l, s_{l'}) \Gamma_\text{j}(s_l, s_{l'}) \\
    & = \frac{2}{N} \sum_{1 \le \text{i}, \text{j} \le \bm{n}} [\pi_{M,\text{i},\text{j}}(s)]^2 \frac{1}{L^2}\sum_{l=1}^L \sum_{l'=1}^L \Gamma_\text{i}(s_l, s_{l'}) \Gamma_\text{j}(s_l, s_{l'}) \\
    & = \frac{2}{N} \sum_{1 \le \text{i}, \text{j} \le \bm{n}} \left[\sum_{m=1}^M \Lambda_{m, nT}(s) \tilde p_{m, \text{i},\text{j}}\right]^2 \frac{1}{L^2}\sum_{l=1}^L \sum_{l'=1}^L \Gamma_\text{i}(s_l, s_{l'}) \Gamma_\text{j}(s_l, s_{l'}) \\
    & =  \sum_{a=1}^M \sum_{b=1}^M \Lambda_{a, nT}(s) \Lambda_{b, nT}(s) \frac{2}{N} \sum_{1 \le \text{i}, \text{j} \le \bm{n}} \tilde p_{a, \text{i},\text{j}} \tilde p_{b, \text{i},\text{j}} \frac{1}{L^2}\sum_{l=1}^L \sum_{l'=1}^L \Gamma_\text{i}(s_l, s_{l'}) \Gamma_\text{j}(s_l, s_{l'}) \\
    & =  \sum_{a=1}^M \sum_{b=1}^M \Lambda_{a, nT}(s) \Lambda_{b, nT}(s) \mathcal{V}_{ab,nT}.
\end{align}
Meanwhile,
\begin{align}
    \text{Var}\left( D(s, s_l, s_{l'}) \right) 
    & \lesssim \frac{1}{N^2} \sum_{\text{i} = 1}^{\bm{n}} \sum_{\text{j}_1 = 1}^{\text{i} - 1} \sum_{\text{j}_2 = 1}^{\text{i} - 1} \sum_{\text{i}' = 1}^{\bm{n}} \sum_{\text{k}_1 = 1}^{\text{i}' - 1} \sum_{\text{k}_2 = 1}^{\text{i}' - 1} \left| \pi_{M,\text{i},\text{j}_1}(s) \pi_{M,\text{i},\text{j}_2}(s) \pi_{M,\text{i}',\text{k}_1}(s) \pi_{M,\text{i}',\text{k}_2}(s) \right| \\ 
    & \quad \times \left| \bbE\{ (\varepsilon_{\text{j}_1}(s_l)  \varepsilon_{\text{j}_2}(s_{l'}) - \bbE[\varepsilon_{\text{j}_1}(s_l)  \varepsilon_{\text{j}_2}(s_{l'})]) (\varepsilon_{\text{k}_1}(s_l)  \varepsilon_{\text{k}_2}(s_{l'}) - \bbE[\varepsilon_{\text{k}_1}(s_l)  \varepsilon_{\text{k}_2}(s_{l'})])\} \right| \\
    & \lesssim  \frac{1}{N^2} \sum_{\text{i} = 1}^{\bm{n}} \sum_{\text{i}' = 1}^{\bm{n}} \sum_{\text{j}_1 = 1}^{\text{i} - 1} \sum_{\text{j}_2 = 1}^{\text{i} - 1}  \left| \pi_{M,\text{i},\text{j}_1}(s) \pi_{M,\text{i},\text{j}_2}(s) \pi_{M,\text{i}',\text{j}_1}(s) \pi_{M,\text{i}',\text{j}_2}(s) \right| \\
    & \le \frac{1}{N^2} \sum_{\text{i}' = 1}^{\bm{n}} || \Pi_M(s) ||_1^2 \cdot || \Pi_M(s) ||_\infty^2 \lesssim 1/(nT). 
\end{align}
Consequently, $4 \sum_{\text{i} = 1}^{\bm{n}} \sum_{\text{j}_1 = 1}^{\text{i} - 1} \sum_{\text{j}_2 = 1}^{\text{i} - 1} \bbE[ b_{\text{i}, \text{j}_1}(s) b_{\text{i}, \text{j}_2}(s) \mid \mathscr{F}_{nT}(\text{i} - 1)] \overset{p}{\to} \sum_{a=1}^M \sum_{b=1}^M \Lambda_{a, nT}(s) \Lambda_{b, nT}(s) \mathcal{V}_{ab,nT}$ holds from Chebyshev's inequality.

For the third term, for $\text{j} \le \text{i} - 1$,
\begin{align}
    \bbE[ a_\text{i}(s) b_{\text{i}, \text{j}}(s) \mid \mathscr{F}_{nT}(\text{i} - 1)]
    & = \frac{1}{L^2 N}\sum_{l=1}^L \sum_{l'=1}^L  \Lambda_{z, nT}(s) z_\text{i}^\dagger(s_l) \pi_{M,\text{i},\text{j}}(s) \bbE[ \varepsilon_\text{i}(s_l) \varepsilon_\text{i}(s_{l'}) ] \varepsilon_\text{j}(s_{l'}) \\
    & = \frac{1}{L N}\sum_{l=1}^L \pi_{M,\text{i},\text{j}}(s) h_\text{i}(s, s_l) \varepsilon_\text{j}(s_l),
\end{align}
where $h_\text{i}(s, s_{l'}) \coloneqq L^{-1} \sum_{l = 1}^L \Lambda_{z, nT}(s) z_\text{i}^\dagger(s_l) \bbE[ \varepsilon_\text{i}(s_l) \varepsilon_\text{i}(s_{l'}) ]$.
Hence, we can write
\begin{align}
    4 \sum_{\text{i} = 1}^{\bm{n}} \sum_{\text{j} = 1}^{\text{i} - 1} \bbE[ a_\text{i}(s) b_{\text{i}, \text{j}}(s) \mid \mathscr{F}_{nT}(\text{i} - 1)] 
    & = \frac{4}{L}\sum_{l=1}^L \left( \frac{1}{N}  \sum_{\text{j} = 1}^{\bm{n}} \sum_{\text{i} = \text{j} + 1}^{\bm{n}} \pi_{M,\text{i},\text{j}}(s) h_\text{i}(s, s_l) \varepsilon_\text{j}(s_l) \right).
\end{align}
Noting that $|h_\text{i}(s, s_{l'})| \lesssim K/|| \phi^K(s)||$,
\begin{align}
    \bbE \left| \frac{1}{N} \sum_{\text{j} = 1}^{\bm{n}} \sum_{\text{i} = \text{j} + 1}^{\bm{n}} \pi_{M,\text{i},\text{j}}(s) h_\text{i}(s,s_l) \varepsilon_\text{j}(s_l) \right|^2
    & \lesssim \frac{K^2}{N^2 \left\| \phi^K(s) \right\|^2} \sum_{\text{j} = 1}^{\bm{n}} \sum_{\text{i} = \text{j} + 1}^{\bm{n}} \sum_{\text{i}' = \text{j} + 1}^{\bm{n}} |\pi_{M,\text{i},\text{j}}(s)| \cdot |\pi_{M,\text{i}',\text{j}}(s)| \\
    & \le \frac{K^2}{N^2 \left\| \phi^K(s) \right\|^2} \sum_{\text{j} = 1}^{\bm{n}} \left\| \Pi_M(s) \right\|_1^2 \lesssim \frac{K^2}{nT \left\| \phi^K(s) \right\|^2}.
\end{align}
Then, by Markov's inequality, we obtain $4 \sum_{\text{i} = 1}^{\bm{n}} \sum_{\text{j} = 1}^{\text{i} - 1} \bbE[ a_\text{i}(s) b_{\text{i}, \text{j}}(s) \mid \mathscr{F}_{nT}(\text{i} - 1)] \overset{p}{\to} 0$.

Finally, combining the above results gives
\small\begin{align}
    & \sum_{\text{i} = 1}^{\bm{n}} \bbE[(\gamma_{\text{i}}(s))^2 \mid \mathscr{F}_{nT}(\text{i} - 1)] \overset{p}{\to} \Lambda_{z, nT}(s) \mathcal{V}_{z,nT} \Lambda_{z, nT}(s)^\top + \sum_{a=1}^M \sum_{b=1}^M \Lambda_{a, nT}(s) \Lambda_{b, nT}(s) \mathcal{V}_{ab,nT} \\
    & = \left(\begin{array}{cccc}
     \Lambda_{z, nT}(s) & \Lambda_{1, nT}(s) & \cdots & \Lambda_{M, nT}(s) 
    \end{array}\right)  \\
    & \quad \times
    \left( \begin{array}{cccc}
    \mathcal{V}_{z, nT} & \bm{0}_{(d_q + d_x)K \times 1} & \cdots & \bm{0}_{(d_q + d_x)K \times 1} \\
     \bm{0}_{1 \times (d_q + d_x)K} &     \mathcal{V}_{11, nT} & \cdots & \mathcal{V}_{1M, nT} \\
     \vdots &  \vdots & \ddots & \vdots \\
    \bm{0}_{1 \times (d_q + d_x)K} &     \mathcal{V}_{M1, nT} & \cdots & \mathcal{V}_{MM, nT}
    \end{array} \right) \left(\begin{array}{c}
     \Lambda_{z, nT}(s)^\top \\
     \Lambda_{1, nT}(s) \\
     \vdots \\
     \Lambda_{M, nT}(s) 
    \end{array}\right) \\
    & =  \frac{\phi^K(s)^\top \mathbb{S}_\alpha \Sigma_{nT} \mathbb{S}_\alpha^\top \phi^K(s)}{[\sigma_{nT, \alpha}(s)]^2} = 1,
\end{align}\normalsize
as desired.

\paragraph{Verification of condition (2)}
To verify condition (2), it is sufficient to show that $\sum_{\text{i} = 1}^{\bm{n}} \bbE[|\gamma_{\text{i}}(s)|^4 \mid \mathscr{F}_{nT}(\text{i} - 1)] \overset{p}{\to} 0$.
Moreover, by the $c_r$ inequality,
\begin{align}
    \sum_{\text{i} = 1}^{\bm{n}} \bbE[ |\gamma_{\text{i}}(s)|^4 \mid \mathscr{F}_{nT}(\text{i} - 1)]
    \le 8 \sum_{\text{i} = 1}^{\bm{n}} \bbE[ |a_\text{i}(s)|^4 ] + 128 \sum_{\text{i} = 1}^{\bm{n}} \bbE\left[ \left| \sum_{\text{j} = 1}^{\text{i} - 1} b_{\text{i}, \text{j}}(s) \right|^4 \mid \mathscr{F}_{nT}(\text{i} - 1) \right].
\end{align}
For the first term on the right-hand side, noting that Assumption \ref{as:error}(ii) implies $\bbE|\prod_{k = 1}^4 \varepsilon_{it}(s_k) | < \infty$ by H\"{o}lder's inequality,
\begin{align}
    \sum_{\text{i} = 1}^{\bm{n}} \bbE[ |a_\text{i}(s)|^4 ] 
    & \le \frac{1}{L^4 N^2} \sum_{\text{i} = 1}^{\bm{n}} \sum_{1 \le l_1, l_2, l_3, l_4 \le L}  \left| \prod_{j = 1}^4 \Lambda_{z, nT}(s) z_\text{i}^\dagger(s_{l_j}) \right| \cdot 
 \bbE\left| \prod_{j = 1}^4 \varepsilon_\text{i}(s_{l_j}) \right| \\
    & \lesssim \frac{1}{N^2} \sum_{\text{i} = 1}^{\bm{n}} \frac{1}{L^4} \sum_{1 \le l_1, l_2, l_3, l_4 \le L}  \left| \prod_{j = 1}^4 \Lambda_{z, nT}(s) z_\text{i}^\dagger(s_{l_j}) \right| \lesssim \frac{K^4}{nT \left\| \phi^K(s) \right\|^4}.
\end{align}

For the second term, observe that
\small\begin{align}
    & \sum_{\text{i} = 1}^{\bm{n}} \bbE\left[ \left| \sum_{\text{j} = 1}^{\text{i} - 1} b_{\text{i}, \text{j}}(s) \right|^4 \mid \mathscr{F}_{nT}(\text{i} - 1) \right] \\
    &  = \frac{1}{L^4 N^2} \sum_{\text{i} = 1}^{\bm{n}} \sum_{1 \le l_1, l_2, l_3, l_4 \le L} \left( \sum_{\text{j}_1 = 1}^{\text{i} - 1}  \pi_{M,\text{i},\text{j}_1}(s) \varepsilon_{\text{j}_1}(s_{l_1}) \right)  \cdots  \left( \sum_{\text{j}_4 = 1}^{\text{i} - 1}  \pi_{M,\text{i},\text{j}_4}(s) \varepsilon_{\text{j}_4}(s_{l_4}) \right) \bbE\left( \prod_{k = 1}^4 \varepsilon_\text{i}(s_{l_k})  \right).
\end{align}\normalsize
Further, it is easy to see that $\sum_{\text{j} = 1}^{\text{i} - 1}  \pi_{M,\text{i},\text{j}}(s) \varepsilon_{\text{j}}(s_l) \lesssim_p 1$ by Markov's inequality. 
Hence, we have $\sum_{\text{i} = 1}^{\bm{n}} \bbE\left[ \left| \sum_{\text{j} = 1}^{\text{i} - 1} b_{\text{i}, \text{j}}(s) \right|^4 \mid \mathscr{F}_{nT}(\text{i} - 1) \right] \lesssim_p 1/(nT)$, and combining this with the previous result implies condition (2).

\qed

\subsection{Proof of Corollary \ref{cor:normality}.}

In view of Proposition \ref{prop:var}(iii) -- (v), the results will follow from Slutsky's theorem.

\qed


\bigskip

\subsection{Proof of Proposition \ref{prop:impulse}}

Since the proofs of (i) and (ii) are almost identical, we only prove (i).
By the triangle inequality,
\begin{align}
    \left\| \hat M^S_{nT}(i,j,s) - M(i,j,s) \right\|_\infty
    & \le \left\| \hat M^S_{nT}(i,j,s) - M^S(i,j,s) \right\|_\infty + \left\| M^S(i,j,s) - M(i,j,s) \right\|_\infty,
\end{align}
where $M^S(i,j,s) \coloneqq \sum_{\ell = 0}^S W_n^\ell \bm{e}_i \tau^\ell(\beta_{0j},s)$.
For the first term on the right-hand side, observe that
\begin{align}
    \left| \{\hat M^S_{nT}(i,j,s)\}_k - \{M^S(i,j,s)\}_k \right|
    & \le \sum_{\ell = 0}^S \left| \{W_n^\ell \bm{e}_i\}_k \right| \cdot \left| \hat\tau_{nT}^\ell(\hat\beta_{nT,j},s) - \tau^\ell(\beta_{0j},s) \right| \\
    & \le \sum_{\ell = 0}^S \left| \hat\tau_{nT}^\ell(\hat\beta_{nT,j},s) - \tau^\ell(\beta_{0j},s) \right|
\end{align}
for all $k \in [n]$.
By definition, when $\ell = 0$, 
\begin{align}
    \left| \hat \tau_{nT}^0(\hat \beta_{nT,j}, s) - \tau^0(\beta_{0j}, s) \right|
    & = \left| \hat \beta_{nT,j}(s) - \beta_{0j}(s) \right| \lesssim_p c_n
\end{align}
uniformly in $s \in [0,1]$, where $c_n \coloneqq \sqrt{K}/\sqrt{nT} + K^{1/2 - \varsigma}$.
When $\ell = 1$, by Assumption \ref{as:A},
\begin{align}
    \left| \hat \tau_{nT}^1(\hat \beta_{nT,j}, s) - \tau^1(\beta_{0j}, s) \right|
    & = \left| \hat \alpha_{nT}(s) A_1(\hat \beta_{nT,j}, s) - \alpha_0(s) A_1(\beta_{0j}, s) \right| \\
    & \le \left| \hat \alpha_{nT}(s) \right| \cdot \left| A_1(\hat \beta_{nT,j} - \beta_{0j}, s) \right| + \left| \hat \alpha_{nT}(s) - \alpha_0(s) \right| \cdot \left| A_1(\beta_{0j}, s) \right| \\
    & \lesssim_p (\bar \alpha_0 + c_n) \cdot \sup_{s \in [0,1]} \left| \hat \beta_{nT,j}(s) - \beta_{0j}(s) \right| + c_n \\
    & \lesssim_p \bar \alpha_0 c_n + c_n
\end{align}
uniformly in $s \in [0,1]$.
Similarly, when $\ell = 2$, we have
\begin{align}
    \left| \hat \tau_{nT}^2(\hat \beta_{nT,j}, s) - \tau^2(\beta_{0j}, s) \right|
    & = \left| \hat \alpha_{nT}(s) A_1(\hat \tau_{nT}^1(\hat \beta_{nT,j}, \cdot), s) - \alpha_0(s) A_1(\tau^1(\beta_{0j}, \cdot), s) \right| \\
    & \le \left| \hat \alpha_{nT}(s) \right| \cdot \left| A_1(\hat \tau_{nT}^1(\hat \beta_{nT,j}, \cdot) - \tau^1(\beta_{0j}, \cdot), s) \right| \\
    & \quad + \left| \hat \alpha_{nT}(s) - \alpha_0(s) \right| \cdot \left| A_1(\tau^1(\beta_{0j}, \cdot), s) \right| \\
    & \lesssim_p (\bar \alpha_0 + c_n)(\bar \alpha_0 c_n + c_n) + \bar \alpha_0 c_n \\
    & \lesssim_p \bar \alpha_0^2 c_n + \bar \alpha_0 c_n + c_n^2.
\end{align}
Thus, when $\bar \alpha_0 > 0$, repeating the same computation recursively, we can obtain
\begin{align}
    \left| \hat \tau_{nT}^\ell(\hat \beta_{nT,j}, s) - \tau^\ell(\beta_{0j}, s) \right| \lesssim_p \bar \alpha_0^{\ell - 1} c_n
\end{align}
for general $\ell \ge 1$.
From a straightforward calculation, we have $\sum_{\ell = 1}^S \bar \alpha_0^{\ell - 1} c_n = c_n (1 - \bar \alpha_0^S)/(1 - \bar \alpha_0)$.

When $\bar \alpha_0 = 0$, $\tau^\ell(\beta_{0j}, s) = 0$ for all $\ell \ge 1$, while recursively applying Assumption \ref{as:A} gives
\begin{align}
    \left| \hat \tau_{nT}^\ell(\hat \beta_{nT,j}, s) \right| \lesssim_p c_n^\ell
\end{align}
uniformly in $s \in [0,1]$.
Since $c_n \to 0$, we have $\sum_{\ell = 1}^S c_n^\ell \lesssim c_n$ for all sufficiently large $nT$.
Hence, in either case, $||\hat M^S_{nT}(i,j,s) - M^S(i,j,s)||_\infty \lesssim_p c_n$.

Next, observe that $M^S(i,j,s) - M(i,j,s) = -\sum_{\ell = S + 1}^\infty W_n^\ell \bm{e}_i \tau^\ell(\beta_{0j}, s)$ and that
\begin{align}
    |\tau^0(\beta_{0j}, s)| \le \bar \beta_{0j}, \;\;
    |\tau^1(\beta_{0j}, s)| \le \bar \alpha_0 \bar \beta_{0j}, \;\;
    \ldots, \;\; |\tau^\ell(\beta_{0j}, s)| \le \bar \alpha_0^\ell \bar \beta_{0j}
\end{align}
by repeatedly applying Assumption \ref{as:A}, where $\bar \beta_{0j} \coloneqq \sup_{s \in [0,1]}|\beta_{0j}(s)|$.
Hence, we have
\begin{align}
    \left\| M^S(i,j,s) - M(i,j,s) \right\|_\infty \lesssim \sum_{\ell = S + 1}^\infty |\tau^\ell(\beta_{0j}, s)| \le \frac{\bar \beta_{0j}}{1 - \bar \alpha_0} \cdot \bar \alpha_0^{S+1}.
\end{align}
Combining these results completes the proof.

\qed

\section{Consistent variance estimation}\label{app:covmat}

First, observe the following alternative representations of $\mathcal{V}_{z, nT}$ and $\mathcal{V}_{ab, nT}$:
\begin{align}
    \mathcal{V}_{z, nT}
    & = \frac{1}{L^2 N} \sum_{l = 1}^L \sum_{l' = 1}^L \sum_{t = 1}^{T-1} \sum_{t': \: |t' - t| \le 1} \sum_{i = 1}^n \vec z_{it}(s_l) \vec z_{it'}(s_{l'})^\top  \bbE[ \vec \varepsilon_{i t}(s_l) \vec \varepsilon_{i t'}(s_{l'}) ] \\
    \mathcal{V}_{ab, nT}
    & = \frac{2}{L^2 N} \sum_{l = 1}^L \sum_{l' = 1}^L \sum_{t = 1}^{T-1} \sum_{t': \: |t' - t| \le 1} \sum_{1 \le i, j \le n} p_{a, i, j} p_{b, i, j} \bbE[ \vec \varepsilon_{i t}(s_l) \vec \varepsilon_{i t'}(s_{l'})] \bbE[ \vec \varepsilon_{j t}(s_l) \vec \varepsilon_{j t'}(s_{l'}) ].
\end{align}
Define $\vec{\hat e}_{it}(s) \coloneqq e_{i,t+1}(s; \hat \theta_{nT}) - e_{it}(s; \hat \theta_{nT})$,
\begin{align}
    \hat{\mathcal{V}}_{z, nT}
    & \coloneqq \frac{1}{L^2 N} \sum_{l = 1}^L \sum_{l' = 1}^L \sum_{t = 1}^{T-1} \sum_{t': \: |t' - t| \le 1} \sum_{i = 1}^n \vec z_{it}(s_l) \vec z_{it'}(s_{l'})^\top \vec{\hat e}_{it}(s_l) \vec{\hat e}_{it'}(s_{l'}) \\
    \hat{\mathcal{V}}_{ab, nT}
    & \coloneqq \frac{2}{L^2 N} \sum_{l = 1}^L \sum_{l' = 1}^L \sum_{t = 1}^{T-1} \sum_{t': \: |t' - t| \le 1} \sum_{1 \le i, j \le n} p_{a, i, j} p_{b, i, j} \vec{\hat e}_{it}(s_l) \vec{\hat e}_{it'}(s_{l'}) \vec{\hat e}_{jt}(s_l) \vec{\hat e}_{jt'}(s_{l'}) \\
    \hat{\mathcal{V}}_{nT}
    & \coloneqq \left( \begin{array}{cccc}
    \hat{\mathcal{V}}_{z, nT} & \bm{0}_{(d_q + d_x)K \times 1} & \cdots & \bm{0}_{(d_q + d_x)K \times 1} \\
     \bm{0}_{1 \times (d_q + d_x)K} & \hat{\mathcal{V}}_{11, nT} & \cdots & \hat{\mathcal{V}}_{1M, nT} \\
     \vdots &  \vdots & \ddots & \vdots \\
    \bm{0}_{1 \times (d_q + d_x)K} &  \hat{\mathcal{V}}_{M1, nT} & \cdots & \hat{\mathcal{V}}_{MM, nT}
    \end{array} \right) \\
    \hat \Sigma_{nT}
    & \coloneqq \left( \bar J_{nT}(\hat \theta_{nT})^\top \Omega_{nT}  \bar J_{nT}(\hat \theta_{nT}) \right)^{-1}  \bar J_{nT}(\hat \theta_{nT})^\top \Omega_{nT} \hat{\mathcal{V}}_{nT} \Omega_{nT}  \bar J_{nT}(\hat \theta_{nT}) \left(  \bar J_{nT}(\hat \theta_{nT})^\top \Omega_{nT}  \bar J_{nT}(\hat \theta_{nT}) \right)^{-1}.
\end{align}
Then, our standard error estimators for $\sigma_{nT,\alpha}(s)$, $\sigma_{nT,\gamma}(s)$, and $\sigma_{nT,j}(s)$ are given as
\begin{align}\label{eq:se}
    \begin{split}
    \hat \sigma_{nT,\alpha}(s) 
    & \coloneqq \sqrt{\phi^K(s)^\top \mathbb{S}_\alpha \hat \Sigma_{nT} \mathbb{S}_\alpha^\top \phi^K(s)} \\
    \hat \sigma_{nT,\gamma}(s) 
    & \coloneqq \sqrt{\phi^K(s)^\top \mathbb{S}_\gamma \hat \Sigma_{nT} \mathbb{S}_\gamma^\top \phi^K(s)} \\
    \hat \sigma_{nT,j}(s)
    & \coloneqq \sqrt{\phi^K(s)^\top \mathbb{S}_j \hat \Sigma_{nT} \mathbb{S}_j^\top \phi^K(s)},    
    \end{split}
\end{align}
respectively.

\begin{proposition}[Consistent variance estimation]\label{prop:var}
    Suppose that the assumptions in Theorem \ref{thm:normality} are satisfied.
    In addition, assume that $K^3/(nT) \to 0$ and $K^{3/2 - \varsigma} \to 0$ as $nT \to \infty$. 
    Then, 
    \begin{itemize}
        \item[(i)] $\left\|\hat{\mathcal{V}}_{z, nT} - \mathcal{V}_{z, nT} \right\| \lesssim_p K^{3/2}/\sqrt{nT} + K^{3/2 - \varsigma}$
        \item[(ii)] $\left\|\hat{\mathcal{V}}_{ab, nT} - \mathcal{V}_{ab, nT} \right\| \lesssim_p \sqrt{K} / \sqrt{nT} + K^{1/2 - \varsigma}$ for all $1 \le a,b \le M$
        \item[(iii)] $\hat \sigma_{nT,\alpha}(s)/ \sigma_{nT,\alpha}(s) \overset{p}{\to} 1$
        \item[(iv)] $\hat \sigma_{nT,\gamma}(s)/ \sigma_{nT,\gamma}(s) \overset{p}{\to} 1$
        \item[(v)] $\hat \sigma_{nT,j}(s) / \sigma_{nT,j}(s) \overset{p}{\to} 1$ for all $j \in [d_x]$.        
    \end{itemize}
\end{proposition}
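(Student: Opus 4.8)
The plan is to establish the two matrix convergence results (i) and (ii) first, and then assemble them with the Jacobian estimates of Lemma \ref{lem:matLLN2} to obtain (iii) and (iv). The starting point is a decomposition of the estimated differenced residual. Substituting the true model and using the first-differencing that annihilates $f_{0i}$, one finds
\begin{align}
\vec{\hat e}_{it}(s) = \vec\varepsilon_{it}(s) + \vec r_{it}(s), \quad \vec r_{it}(s) \coloneqq (\alpha_0(s) - \hat\alpha_{nT}(s)) \vec a_{it}(s) + \vec X_{it}^\top (\beta_0(s) - \hat\beta_{nT}(s)),
\end{align}
where $a_{it}(s) = A(\bar Y_{it},s)$, so that $\vec a_{it}(s) = A(\vec{\bar Y}_{it}, s)$ by linearity, and $\vec{\cdot}$ is the one-period forward difference. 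By Theorem \ref{thm:roc}(ii)--(iii) the coefficient gaps are $\lesssim_p c_n$ uniformly in $s$, with $c_n \coloneqq \sqrt{K}/\sqrt{nT} + K^{1-\pi}$; since $\|\vec a_{it}(s)\|_2 \lesssim 1$ (Assumptions \ref{as:observables}(ii) and \ref{as:A}) and $\vec X_{it}$ is bounded, it follows that $\|\vec r_{it}(s)\|_2 \lesssim_p c_n$ uniformly in $s$ and $(i,t)$.

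For part (i), I would split $\hat{\mathcal{V}}_{z,nT} - \mathcal{V}_{z,nT} = T_1 + T_2$, where $T_2$ collects the sampling variation $\vec\varepsilon_{it}(s_l)\vec\varepsilon_{it'}(s_{l'}) - \bbE[\vec\varepsilon_{it}(s_l)\vec\varepsilon_{it'}(s_{l'})]$ and $T_1$ the estimation effect $\vec{\hat e}_{it}(s_l)\vec{\hat e}_{it'}(s_{l'}) - \vec\varepsilon_{it}(s_l)\vec\varepsilon_{it'}(s_{l'})$. Because $\vec z_{it}(s_l) = \vec B_{it}\otimes\phi^K(s_l)$ is nonrandom (Assumption \ref{as:observables}(i)) with $\|\vec z_{it}(s_l)\vec z_{it'}(s_{l'})^\top\| \lesssim K$, a direct variance computation for $T_2$ using independence of the errors across $i$ (Assumption \ref{as:error}(i)) and the at-most-one-period dependence in $t$ induced by differencing gives $\bbE\|T_2\|^2 \lesssim K^2/(nT)$, hence $\|T_2\| \lesssim_p K/\sqrt{nT}$. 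For $T_1$, expanding the product difference as $\vec r\vec\varepsilon + \vec\varepsilon\vec r + \vec r\vec r$ and combining $\|\vec r_{it}\|_2 \lesssim_p c_n$ with the bound $\|\vec z\vec z^\top\| \lesssim K$ yields $\|T_1\| \lesssim_p K c_n = K^{3/2}/\sqrt{nT} + K^{2-\pi}$, which dominates $\|T_2\|$ and delivers the stated rate. Part (ii) is proved identically; the decisive difference is that the weights $p_{a,i,j}p_{b,i,j}$ carry no $\phi^K$ factor, so the estimation effect is only $\lesssim_p c_n$, while the (now fourth-order) sampling-variation term is $\lesssim_p 1/\sqrt{nT}$ after a fourth-moment/NED argument of the type used in Lemma \ref{lem:matLLN}(v), giving $\sqrt{K}/\sqrt{nT} + K^{1-\pi}$.

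Given (i) and (ii), I would reduce (iii) and (iv) to showing $\|\hat\Sigma_{nT} - \Sigma_{nT}\| = o_P(1)$. Parts (i)--(ii) give $\|\hat{\mathcal{V}}_{nT} - \mathcal{V}_{nT}\| \lesssim_p K^{3/2}/\sqrt{nT} + K^{2-\pi}$ (the $\mathcal{V}_{z,nT}$ block dominates, the cross blocks being zero), which is $o_P(1)$ under the added conditions $K^3/(nT)\to 0$ and $K^{2-\pi}\to 0$. Combining this with the estimates $\|\bar J_{nT}(\hat\theta_{nT}) - \bar J_{nT}\| = o_P(1)$ and $\|(\bar J_{nT}(\hat\theta_{nT})^\top\Omega_{nT}\bar J_{nT}(\bar\theta_{nT}))^{-} - (\bar J^\top_{nT}\Omega_{nT}\bar J_{nT})^{-1}\| = o_P(1)$ from Lemma \ref{lem:matLLN2}, together with the uniformly bounded spectra of $\Omega_{nT}$, $\bar J^\top_{nT}\bar J_{nT}$, and $\mathcal{V}_{nT}$ (Assumptions \ref{as:weights}(ii) and \ref{as:matrix2}), a routine product-perturbation argument yields $\|\hat\Sigma_{nT} - \Sigma_{nT}\| = o_P(1)$. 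Finally, since the proof of Theorem \ref{thm:normality} established $[\sigma_{nT,\alpha}(s)]^2 \ge c\|\phi^K(s)\|^2$ and $\mathbb{S}_\alpha\mathbb{S}_\alpha^\top = I_K$,
\begin{align}
\left| \frac{[\hat\sigma_{nT,\alpha}(s)]^2}{[\sigma_{nT,\alpha}(s)]^2} - 1 \right| \le \frac{\|\phi^K(s)\|^2 \, \|\hat\Sigma_{nT} - \Sigma_{nT}\|}{c\,\|\phi^K(s)\|^2} = \frac{\|\hat\Sigma_{nT} - \Sigma_{nT}\|}{c} \overset{p}{\to} 0,
\end{align}
and taking square roots gives (iii); (iv) is identical with $\mathbb{S}_j$ in place of $\mathbb{S}_\alpha$.

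The main obstacle is part (i): the accounting must faithfully track the factors of $K$ arising from $\|\phi^K\|^2$ through both the estimation-effect term $T_1$ (which is responsible for the leading $K^{3/2}/\sqrt{nT} + K^{2-\pi}$ rate) and the sampling-variation term $T_2$, while keeping the uniform-in-$s$ control of $\vec r_{it}$ honest. The bound on $T_2$ (and its fourth-order analogue in (ii)) is where the spatial/network dependence genuinely enters, and it relies on the geometric NED property and the bounded row- and column-sum structure of the weight matrices, exactly as in Lemma \ref{lem:matLLN}.
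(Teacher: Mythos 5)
Your proposal is correct and follows essentially the same route as the paper's own proof: the same split of $\hat{\mathcal{V}}_{z,nT}-\mathcal{V}_{z,nT}$ (and its quadratic analogue) into an estimation-effect term bounded via the uniform rates of Theorem \ref{thm:roc} and a sampling-variation term bounded by a variance/Markov argument, followed by the same reduction of (iii)--(iv) using $[\sigma_{nT,\alpha}(s)]^2 \ge c\|\phi^K(s)\|^2$ and the Jacobian perturbation bounds of Lemma \ref{lem:matLLN2}. The only differences are cosmetic (you package the final step as $\|\hat\Sigma_{nT}-\Sigma_{nT}\| = o_P(1)$ rather than bounding the quadratic form directly, and you should state the residual-error bound pointwise in probability, $|\vec r_{it}(s)| \lesssim_p c_n$, rather than as an $L^2$-norm bound).
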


\begin{proof}
(i) Decompose
\begin{align}
    \hat{\mathcal{V}}_{z, nT} - \mathcal{V}_{z, nT} = \left( \hat{\mathcal{V}}_{z, nT} - \tilde{\mathcal{V}}_{z, nT} \right) + \left( \tilde{{\mathcal{V}}}_{z, nT} - \mathcal{V}_{z, nT} \right),
\end{align}
where
\begin{align}
    \tilde{\mathcal{V}}_{z, nT}
    & = \frac{1}{L^2 N} \sum_{l = 1}^L \sum_{l' = 1}^L \sum_{t = 1}^{T-1} \sum_{t': \: |t' - t| \le 1} \sum_{i = 1}^n \vec z_{it}(s_l) \vec z_{it'}(s_{l'})^\top \vec \varepsilon_{it}(s_l) \vec \varepsilon_{it'}(s_{l'}).
\end{align}
Since
\begin{align}
    \vec{\hat e}_{it}(s)
    = \underbracket{\vec a_{1, it}(s) [\alpha_0(s) - \hat \alpha_{nT}(s)] + \vec a_{2, it}(s) [\gamma_0(s) - \hat \gamma_{nT}(s)] + \sum_{j = 1}^{d_x} \vec X_{it}^j [\beta_{0j}(s) - \hat \beta_{nT,j}(s)]}_{\eqqcolon b_{it}(s)} + \vec \varepsilon_{it}(s),
\end{align}
we have
\begin{align}
    \vec{\hat e}_{it}(s_l) \vec{\hat e}_{it'}(s_{l'})
    & = \vec \varepsilon_{it}(s_l) \vec \varepsilon_{it'}(s_{l'}) + b_{it}(s_l) b_{it'}(s_{l'}) + b_{it}(s_l) \vec \varepsilon_{it'}(s_{l'}) + \vec \varepsilon_{it}(s_l) b_{it'}(s_{l'}).
\end{align}
Then, we can write
\begin{align}
    \hat{\mathcal{V}}_{z, nT} - \tilde{\mathcal{V}}_{z, nT}
    & = \frac{1}{L^2 N} \sum_{l = 1}^L \sum_{l' = 1}^L \sum_{t = 1}^{T-1} \sum_{t': \: |t' - t| \le 1} \sum_{i = 1}^n \vec z_{it}(s_l) \vec z_{it'}(s_{l'})^\top [\vec{\hat e}_{it}(s_l) \vec{\hat e}_{it'}(s_{l'}) - \vec \varepsilon_{it}(s_l) \vec \varepsilon_{it'}(s_{l'})] \\
    & = \frac{1}{L^2 N} \sum_{l = 1}^L \sum_{l' = 1}^L \sum_{t = 1}^{T-1} \sum_{t': \: |t' - t| \le 1} \sum_{i = 1}^n \vec z_{it}(s_l) \vec z_{it'}(s_{l'})^\top b_{it}(s_l) b_{it'}(s_{l'}) \\
    & \quad + \frac{1}{L^2 N} \sum_{l = 1}^L \sum_{l' = 1}^L \sum_{t = 1}^{T-1} \sum_{t': \: |t' - t| \le 1} \sum_{i = 1}^n \vec z_{it}(s_l) \vec z_{it'}(s_{l'})^\top b_{it}(s_l) \vec \varepsilon_{it'}(s_{l'}) \\
    & \quad + \frac{1}{L^2 N} \sum_{l = 1}^L \sum_{l' = 1}^L \sum_{t = 1}^{T-1} \sum_{t': \: |t' - t| \le 1} \sum_{i = 1}^n \vec z_{it}(s_l) \vec z_{it'}(s_{l'})^\top \vec \varepsilon_{it}(s_l) b_{it'}(s_{l'}).
\end{align}
In view of Theorem \ref{thm:roc}(ii)--(iv), we can easily find that $|b_{it}(s)| \lesssim_p \sqrt{K}/\sqrt{nT} + K^{1/2 - \varsigma}$ uniformly in $s$ and $(i,t)$.
In addition, it is straightforward to see that $||\vec z_{it}(s_l) \vec z_{it'}(s_{l'})^\top|| \lesssim K$, and $||\vec z_{it}(s_l) \vec z_{it'}(s_{l'})^\top \vec \varepsilon_{it}(s_l)||$, $||\vec z_{it}(s_l) \vec z_{it'}(s_{l'})^\top \vec \varepsilon_{it'}(s_{l'})|| \lesssim_p K$ by Markov's inequality.
Hence, we have $||\hat{\mathcal{V}}_{z, nT} - \tilde{\mathcal{V}}_{z, nT}|| \lesssim_p K^{3/2}/\sqrt{nT} + K^{3/2 - \varsigma}$.

Meanwhile, it is not difficult to see that $||\tilde{\mathcal{V}}_{z, nT} - \mathcal{V}_{z, nT}|| \lesssim_p K/\sqrt{nT}$ by Markov's inequality.
Then, the result follows from the triangle inequality.

\bigskip

(ii) Similar to the above, we decompose
\begin{align}
    \hat{\mathcal{V}}_{ab, nT} - \mathcal{V}_{ab, nT} = \left( \hat{\mathcal{V}}_{ab, nT} - \tilde{\mathcal{V}}_{ab, nT} \right) + \left( \tilde{{\mathcal{V}}}_{ab, nT} - \mathcal{V}_{ab, nT} \right),
\end{align}
where
\begin{align}
    \tilde{\mathcal{V}}_{ab, nT}
    & = \frac{2}{L^2 N} \sum_{l = 1}^L \sum_{l' = 1}^L \sum_{t = 1}^{T-1} \sum_{t': \: |t' - t| \le 1} \sum_{1 \le i, j \le n} p_{a, i, j} p_{b, i, j} \vec \varepsilon_{it}(s_l) \vec \varepsilon_{it'}(s_{l'}) \vec \varepsilon_{jt}(s_l) \vec \varepsilon_{jt'}(s_{l'}).
\end{align}
For the first term on the right-hand side, noting that
\begin{align}
    \vec{\hat e}_{it}(s_l) \vec{\hat e}_{it'}(s_{l'}) \vec{\hat e}_{jt}(s_l) \vec{\hat e}_{jt'}(s_{l'})
    & = \{\vec \varepsilon_{it}(s_l) \vec \varepsilon_{it'}(s_{l'}) + b_{it}(s_l) b_{it'}(s_{l'}) + b_{it}(s_l) \vec \varepsilon_{it'}(s_{l'}) + \vec \varepsilon_{it}(s_l) b_{it'}(s_{l'})\} \\
    & \quad \times \{\vec \varepsilon_{jt}(s_l) \vec \varepsilon_{jt'}(s_{l'}) + b_{jt}(s_l) b_{jt'}(s_{l'}) + b_{jt}(s_l) \vec \varepsilon_{jt'}(s_{l'}) + \vec \varepsilon_{jt}(s_l) b_{jt'}(s_{l'})\} \\
    & = \vec \varepsilon_{it}(s_l) \vec \varepsilon_{it'}(s_{l'}) \vec \varepsilon_{jt}(s_l) \vec \varepsilon_{jt'}(s_{l'}) \\
    & \quad + \vec \varepsilon_{it}(s_l) \vec \varepsilon_{it'}(s_{l'}) \vec \varepsilon_{jt'}(s_{l'}) b_{jt}(s_l) + \;\; \cdots \text{(three $\vec \varepsilon$'s * one $b$)} \\
    & \quad + \vec \varepsilon_{it}(s_l) \vec \varepsilon_{it'}(s_{l'}) b_{jt}(s_l) b_{jt'}(s_{l'}) + \;\; \cdots \text{(two $\vec \varepsilon$'s * two $b$'s)} \\
    & \quad + \vec \varepsilon_{it'}(s_{l'}) b_{it}(s_l) b_{jt}(s_l) b_{jt'}(s_{l'}) + \;\; \cdots \text{(one $\vec \varepsilon$ * three $b$'s)} \\
    & \quad + b_{it}(s_l) b_{it'}(s_{l'}) b_{jt}(s_l) b_{jt'}(s_{l'}),
\end{align}
it is not difficult to show that $|\hat{\mathcal{V}}_{ab, nT} - \tilde{\mathcal{V}}_{ab, nT}| \lesssim_p \sqrt{K}/\sqrt{nT} + K^{1/2 - \varsigma}$.

For the second term, following the analogous argument as in the proof of Proposition 2 \cite{lin2010gmm}, $|\tilde{{\mathcal{V}}}_{ab, nT} - \mathcal{V}_{ab, nT}| \lesssim_p 1/\sqrt{nT}$ holds.
Hence, we obtain the desired result.

\bigskip

(iii) To prove the result, it suffices to show that
\begin{align}
    \left| \frac{[\hat \sigma_{nT,\alpha}(s)]^2 - [\sigma_{nT,\alpha}(s)]^2}{[\sigma_{nT,\alpha}(s)]^2} \right| \overset{p}{\to} 0.
\end{align}
As shown in the proof of Theorem \ref{thm:normality}, $[\sigma_{nT,\alpha}(s)]^2$ is bounded below from $c||\phi^K(s)||^2$.
On the other hand, writing $R_{nT} \coloneqq \Omega_{nT} \bar J_{nT} \left( \bar J^\top_{nT} \Omega_{nT} \bar J_{nT} \right)^{-1}$ and its estimator counterpart as $\hat R_{nT}$, by the triangle inequality,
\begin{align}
    \left| [\hat \sigma_{nT,\alpha}(s)]^2 - [\sigma_{nT,\alpha}(s)]^2 \right|
    & = \left| \phi^K(s)^\top \mathbb{S}_\alpha \left[\hat \Sigma_{nT} - \Sigma_{nT}\right] \mathbb{S}_\alpha^\top \phi^K(s) \right| \\
    & \le \left| \phi^K(s)^\top \mathbb{S}_\alpha \left[\hat R_{nT}^\top \hat{\mathcal{V}}_{nT} \hat R_{nT} - \hat R_{nT}^\top \mathcal{V}_{nT} \hat R_{nT}\right] \mathbb{S}_\alpha^\top \phi^K(s) \right| \\
    & \quad + \left| \phi^K(s)^\top \mathbb{S}_\alpha \left[\hat R_{nT}^\top \mathcal{V}_{nT} \hat R_{nT} - R_{nT}^\top \mathcal{V}_{nT} R_{nT}\right] \mathbb{S}_\alpha^\top \phi^K(s) \right| \\
    & \le \left\|\hat{\mathcal{V}}_{nT} - \mathcal{V}_{nT}\right\| \cdot \left\|\hat R_{nT} \mathbb{S}_\alpha^\top \phi^K(s)\right\|^2 + \lambda_{\max}(\mathcal{V}_{nT}) \cdot \left\|\left\{\hat R_{nT} - R_{nT}\right\} \mathbb{S}_\alpha^\top \phi^K(s)\right\|^2 \\
    & \quad + 2 \left| \phi^K(s)^\top \mathbb{S}_\alpha R_{nT}^\top \mathcal{V}_{nT} \left\{\hat R_{nT} - R_{nT}\right\} \mathbb{S}_\alpha^\top \phi^K(s) \right| \\
    & \lesssim_p \left(K^{3/2}/\sqrt{nT} + K^{3/2 - \varsigma}\right) \left\|\phi^K(s)\right\|^2,
\end{align}
where the last inequality is from $||\hat R_{nT} - R_{nT}|| \lesssim_p K/\sqrt{nT} + K^{-\varsigma}$ by Lemma \ref{lem:matLLN2}(i) and (iv) and $||\hat{\mathcal{V}}_{nT} - \mathcal{V}_{nT}|| \lesssim_p K^{3/2}/\sqrt{nT} + K^{3/2 - \varsigma}$ by results (i) and (ii).
Thus, we have
\begin{align}
    \left| \frac{[\hat \sigma_{nT,\alpha}(s)]^2 - [\sigma_{nT,\alpha}(s)]^2}{[\sigma_{nT,\alpha}(s)]^2} \right| \lesssim_p K^{3/2}/\sqrt{nT} + K^{3/2 - \varsigma} \to 0.
\end{align}

\bigskip

(iv) -- (v) Analogous to the proof of (iii).
\end{proof}


\section{Details of the Monte Carlo experiments}\label{app:MC}

\subsection{Setup}
Throughout the experiments, we consider the following data-generating process (DGP):
\begin{align}
    Y_{it}(s) = \alpha_0(s) \int_0^1 \bar Y_{it}(u) \nu(u,s)\text{d}u + \gamma_0(s) Y_{i,t-1}(s) + X_{it}\beta_0(s) + f_{0i}(s) + \varepsilon_{it}(s),
\end{align}
where $X_{it} \sim \mathcal{N}(0,1)$, $\alpha_0(s) = \phi(s; 0.4, 0.6^2) + 0.1s - 0.4s^2$, $\phi(\cdot ; \mu, \sigma^2)$ denotes the standard normal density function with mean $\mu$ and variance $\sigma^2$, $\nu(u,s) = 0.75(1 - (u-s)^2)$ (i.e., the Epanechnikov kernel function), $\gamma_0(s) = 0.4 - 0.3 \Phi(3s - 2)$, $\Phi$ denotes the standard normal distribution function, and the individual fixed effects are given by $f_{0i}(s) = 1 + \cos(is)$.
The coefficient function is given by 
\begin{align}
    \beta_0(s) =r  (0.5 + s + 0.5\sin(\pi s)),
\end{align}
where $r \in \{0.4, 1\}$. 
We use $Q_{it} = (\bar X_{it}, \bar{\bar X}_{it}, X_{i,t-1})$ as the IVs for $\left(\int_0^1 \bar Y_{it}(u) \nu(u,s)\text{d}u, Y_{i,t-1}(s) \right)$, and, thus, the magnitude of $r$ determines the strength of these IVs. 
For the error term, we generate $\varepsilon_{it}(s) = \sqrt{1 + \text{deg}_i}(e_{1,it},e_{2,it},e_{3,it})^\top (1, s, s^2)$, where $\text{deg}_i$ denotes the number of units connected to $i$ (i.e., $i$'s degree), and $e_{j,it} \sim \mathcal{N}(0,0.3^2)$ for $j = 1,2,3$. 
The weight matrix $W_n$ is a row-normalized adjacency matrix, which is constructed by randomly placing $n$ units on a $[\sqrt{2n}] \times [\sqrt{2n}]$ lattice, where $[a]$ denotes the nearest integer to $a$. Any two units are connected if the Euclidean distance between them is exactly one.
The sample size $n$ is chosen from $n \in \{50, 100\}$, and $T$ is from $T \in \{5, 10\}$.

The outcome functions are generated recursively using the Neumann series approximation: $Y_t \approx Y_t^{(S)} \coloneqq \sum_{\ell = 0}^S \mathcal{A}_1^\ell [\gamma_0 Y_{t-1} + X_t \beta_0 + F_0 + \mathcal{E}_t]$ for $t \in 0 \cup [T]$ with $Y_{-1} = \bm 0$, where $S$ is increased iteratively until $\max_{i \in [n]} |Y_{it}^{(S)}(s) - Y_{it}^{(S-1)}(s)| < 0.001$ is satisfied for all $t$ and $s$.
Throughout the simulations, integrals over $[0,1]$ are approximated by finite summations over 99 equally-spaced grid points.

For the choice of basis functions, we use the cubic B-splines orthonormalized via the Gram--Schmidt procedure.
The number of inner knots for the B-splines is selected using cross-validation.
Specifically, we use the first $T_{\text{train}}$ periods as the training data and the remaining $T - T_{\text{train}}$ periods as the validation data.
When $T = 5$, we set $T_{\text{train}} = 3$, and when $T = 10$, we set $T_{\text{train}} = 6$.
Once the parameters are estimated, we compute the predicted outcome difference as
\begin{align}
    \hat Y_{t + 1}^{(S)} - \hat Y_t^{(S)}
    \coloneqq \sum_{\ell = 0}^S \hat{\mathcal{A}}_{nT_{\text{train}}}^\ell \left\{ \hat \gamma_{nT_{\text{train}}} (Y_{t} - Y_{t-1}) + (X_{t+1} - X_t) \hat \beta_{nT_{\text{train}}} \right\},
\end{align}
where $\hat{\mathcal{A}}_{nT_{\text{train}}}$ is obtained by replacing $\alpha_0$ with $\hat \alpha_{nT_{\text{train}}}$.
We then compute the average mean squared prediction error (AMSPE) as
\begin{align}
    \text{AMSPE}
    \coloneqq
    \frac{1}{99} \sum_{l = 1}^{99} \left( \frac{1}{n (T - T_{\text{train}} - 1)} \sum_{i = 1}^n \sum_{t = T_{\text{train}} + 1}^{T - 1} \left\{ [\hat Y_{i,t+1}(s_l) - \hat Y_{it}(s_l)] - [Y_{i,t+1}(s_l) - Y_{it}(s_l)] \right\}^2 \right),
\end{align}
where $\hat Y_{i,t+1}(s_l) - \hat Y_{it}(s_l)$ denotes the $i$-th element of $\hat Y_{t + 1}^{(S)} - \hat Y_t^{(S)}$ evaluated at $s_l$.
In view of Remark \ref{rem:K}, we set the number of inner knots to $\lfloor c_K (nT)^{1/5} \rfloor$, where $c_K$ is a constant chosen to minimize the AMSPE. In this numerical study, considering computational costs, we conduct a grid search over
$c_K \in \{0.4, 0.8, 1.2, 1.6\}$.
Moreover, instead of performing the cross-validation in every simulation run, we compute the optimal $c_K$ based on 10 pilot simulations under each design.
We then average the selected values of $c_K$ across these pilot runs and use the resulting value throughout the entire simulation analysis.
As a result, we find that $c_K \approx 0.93$ is approximately optimal in our setting.

The number of grid points used to evaluate the moment function is chosen from $L = c_L K$, where $c_L \in \{2, 3, 4\}$, with the points $0 < s_1 < \dots < s_L < 1$ evenly spaced over $[0,1]$.
For the quadratic moments, we use two weight matrices ($M =2$): $P_{1,1} = W_n$ and $P_{2,1} = W_n^\top W_n - \text{diag}(W_n^\top W_n)$. 
We then compare the performance of three different estimators: \textbf{GMM 1}: the integrated-GMM estimator using the weight matrix given in  \eqref{eq:2slsweight}, \textbf{GMM 2}: the integrated-GMM estimator using the identity weight matrix, and (integrated) \textbf{2SLS}: GMM 1 estimator without utilizing the quadratic moment conditions.

\subsection{Results}
For each setup, we generate the dataset 500 times. 
The performance of the estimators is evaluated based on the average bias (BIAS) and the average root-mean-squared error (RMSE).
Specifically, the BIAS and RMSE of estimating $\alpha_0$ are defined as  
\begin{align}
    \text{BIAS}(\alpha_0) 
    & \coloneqq  \frac{1}{500}\sum_{b = 1}^{500} \left[ \frac{1}{99}\sum_{l = 1}^{99} \left( \hat \alpha_{nT}^{(b)} (s_l) - \alpha_0(s_l) \right) \right] \\
    \text{RMSE}(\alpha_0) 
    & \coloneqq \frac{1}{500}\sum_{b = 1}^{500} \sqrt{ \frac{1}{99}\sum_{l = 1}^{99} \left( \hat \alpha_{nT}^{(b)} (s_l) - \alpha_0(s_l) \right)^2 }
\end{align}
respectively.
Here, $\hat \alpha_{nT}^{(b)}$ denotes the estimator of $\alpha_0$ obtained from the $b$-th replicated dataset. 
The BIAS and RMSE for $\gamma_0$ and $\beta_0$ are defined analogously.

In computing the integrated-GMM estimators, we use the \texttt{nloptr} function in \texttt{R}, with the 2SLS estimator used to obtain the initial parameter value for optimization.
The simulation results for the estimation of $\alpha_0$, $\gamma_0$, and $\beta_0$ are summarized in Tables \ref{MC:alpha}, \ref{MC:gamma}, and \ref{MC:beta}, respectively.

\begin{table}[!ht]
\centering
\caption{Simulation result: $\alpha_0$}\label{MC:alpha}
\small
\begin{tabular}{llll|cccccc}
\hline\hline
 & & & & \multicolumn{2}{c}{GMM 1} & \multicolumn{2}{c}{GMM 2} & \multicolumn{2}{c}{2SLS} \\
$n$ & $T$ & $c_L$ & $r$ & BIAS & RMSE & BIAS & RMSE & BIAS & RMSE \\
\hline
50 & 5 & 2 & 0.4 & -0.0204 & 0.1912 & -0.0233 & 0.2030 & -0.0143 & 0.2022 \\
 &  &  & 1 & -0.0088 & 0.0767 & -0.0161 & 0.1013 & -0.0072 & 0.0794 \\
 &  & 3 & 0.4 & -0.0210 & 0.1917 & -0.0236 & 0.2020 & -0.0143 & 0.2021 \\
 &  &  & 1 & -0.0092 & 0.0768 & -0.0156 & 0.0967 & -0.0072 & 0.0794 \\
 &  & 4 & 0.4 & -0.0208 & 0.1924 & -0.0242 & 0.2021 & -0.0143 & 0.2020 \\
 &  &  & 1 & -0.0094 & 0.0769 & -0.0153 & 0.0967 & -0.0072 & 0.0794 \\
 & 10 & 2 & 0.4 & 0.0049 & 0.1279 & 0.0023 & 0.1350 & 0.0053 & 0.1294 \\
 &  &  & 1 & -0.0008 & 0.0499 & -0.0011 & 0.0624 & 0.0011 & 0.0515 \\
 &  & 3 & 0.4 & 0.0049 & 0.1279 & 0.0027 & 0.1338 & 0.0053 & 0.1294 \\
 &  &  & 1 & -0.0005 & 0.0501 & -0.0010 & 0.0619 & 0.0011 & 0.0515 \\
 &  & 4 & 0.4 & 0.0049 & 0.1279 & 0.0028 & 0.1335 & 0.0053 & 0.1294 \\
 &  &  & 1 & -0.0001 & 0.0504 & -0.0009 & 0.0613 & 0.0011 & 0.0515 \\
100 & 5 & 2 & 0.4 & 0.0147 & 0.1449 & 0.0125 & 0.1503 & 0.0168 & 0.1460 \\
 &  &  & 1 & 0.0030 & 0.0561 & 0.0023 & 0.0649 & 0.0055 & 0.0577 \\
 &  & 3 & 0.4 & 0.0152 & 0.1449 & 0.0132 & 0.1491 & 0.0168 & 0.1459 \\
 &  &  & 1 & 0.0033 & 0.0563 & 0.0022 & 0.0647 & 0.0055 & 0.0576 \\
 &  & 4 & 0.4 & 0.0152 & 0.1449 & 0.0134 & 0.1488 & 0.0168 & 0.1459 \\
 &  &  & 1 & 0.0033 & 0.0563 & 0.0022 & 0.0641 & 0.0055 & 0.0576 \\
 & 10 & 2 & 0.4 & -0.0009 & 0.0904 & -0.0009 & 0.0916 & -0.0005 & 0.0906 \\
 &  &  & 1 & -0.0014 & 0.0358 & -0.0007 & 0.0385 & -0.0007 & 0.0363 \\
 &  & 3 & 0.4 & -0.0009 & 0.0904 & -0.0009 & 0.0914 & -0.0005 & 0.0906 \\
 &  &  & 1 & -0.0014 & 0.0360 & -0.0009 & 0.0387 & -0.0007 & 0.0363 \\
 &  & 4 & 0.4 & -0.0009 & 0.0904 & -0.0010 & 0.0915 & -0.0005 & 0.0906 \\
 &  &  & 1 & -0.0013 & 0.0361 & -0.0007 & 0.0378 & -0.0007 & 0.0363 \\
\hline
\end{tabular}

\small
\flushleft
GMM 1: integrated-GMM estimator with the weight matrix in \eqref{eq:2slsweight}.
GMM 2: integrated-GMM estimator with the identity weight matrix.
\normalsize
\end{table}

\begin{table}[!ht]
\centering
\caption{Simulation result: $\gamma_0$}\label{MC:gamma}
\small
\begin{tabular}{llll|cccccc}
\hline\hline
 & & & & \multicolumn{2}{c}{GMM 1} & \multicolumn{2}{c}{GMM 2} & \multicolumn{2}{c}{2SLS} \\
$n$ & $T$ & $c_L$ & $r$ & BIAS & RMSE & BIAS & RMSE & BIAS & RMSE \\
\hline
50 & 5 & 2 & 0.4 & 0.0031 & 0.1140 & 0.0195 & 0.1268 & -0.0048 & 0.1161 \\
 &  &  & 1 & 0.0004 & 0.0441 & 0.0111 & 0.0575 & -0.0022 & 0.0446 \\
 &  & 3 & 0.4 & 0.0023 & 0.1142 & 0.0164 & 0.1267 & -0.0048 & 0.1159 \\
 &  &  & 1 & 0.0002 & 0.0440 & 0.0105 & 0.0563 & -0.0022 & 0.0446 \\
 &  & 4 & 0.4 & 0.0020 & 0.1144 & 0.0159 & 0.1275 & -0.0048 & 0.1158 \\
 &  &  & 1 & 0.0001 & 0.0439 & 0.0105 & 0.0564 & -0.0022 & 0.0445 \\
 & 10 & 2 & 0.4 & 0.0015 & 0.0750 & 0.0032 & 0.0800 & 0.0012 & 0.0749 \\
 &  &  & 1 & 0.0012 & 0.0295 & 0.0033 & 0.0336 & 0.0004 & 0.0295 \\
 &  & 3 & 0.4 & 0.0015 & 0.0749 & 0.0037 & 0.0800 & 0.0012 & 0.0749 \\
 &  &  & 1 & 0.0011 & 0.0295 & 0.0033 & 0.0337 & 0.0004 & 0.0295 \\
 &  & 4 & 0.4 & 0.0016 & 0.0749 & 0.0038 & 0.0803 & 0.0012 & 0.0749 \\
 &  &  & 1 & 0.0010 & 0.0295 & 0.0032 & 0.0338 & 0.0004 & 0.0295 \\
100 & 5 & 2 & 0.4 & -0.0080 & 0.0840 & -0.0035 & 0.0887 & -0.0088 & 0.0844 \\
 &  &  & 1 & -0.0025 & 0.0331 & 0.0018 & 0.0383 & -0.0035 & 0.0332 \\
 &  & 3 & 0.4 & -0.0082 & 0.0842 & -0.0041 & 0.0884 & -0.0088 & 0.0843 \\
 &  &  & 1 & -0.0026 & 0.0330 & 0.0015 & 0.0386 & -0.0035 & 0.0332 \\
 &  & 4 & 0.4 & -0.0082 & 0.0842 & -0.0042 & 0.0883 & -0.0088 & 0.0842 \\
 &  &  & 1 & -0.0026 & 0.0330 & 0.0014 & 0.0389 & -0.0035 & 0.0332 \\
 & 10 & 2 & 0.4 & 0.0058 & 0.0522 & 0.0070 & 0.0540 & 0.0056 & 0.0523 \\
 &  &  & 1 & 0.0023 & 0.0208 & 0.0037 & 0.0222 & 0.0021 & 0.0209 \\
 &  & 3 & 0.4 & 0.0057 & 0.0522 & 0.0070 & 0.0539 & 0.0056 & 0.0522 \\
 &  &  & 1 & 0.0022 & 0.0208 & 0.0035 & 0.0220 & 0.0021 & 0.0208 \\
 &  & 4 & 0.4 & 0.0057 & 0.0522 & 0.0066 & 0.0539 & 0.0056 & 0.0522 \\
 &  &  & 1 & 0.0022 & 0.0208 & 0.0032 & 0.0216 & 0.0021 & 0.0208 \\
\hline
\end{tabular}

\small
\flushleft
GMM 1: integrated-GMM estimator with the weight matrix in \eqref{eq:2slsweight}.
GMM 2: integrated-GMM estimator with the identity weight matrix.
\normalsize
\end{table}

\begin{table}[!ht]
\centering
\caption{Simulation result: $\beta_0$}\label{MC:beta}
\small
\begin{tabular}{llll|cccccc}
\hline\hline
 & & & & \multicolumn{2}{c}{GMM 1} & \multicolumn{2}{c}{GMM 2} & \multicolumn{2}{c}{2SLS} \\
$n$ & $T$ & $c_L$ & $r$ & BIAS & RMSE & BIAS & RMSE & BIAS & RMSE \\
\hline
50 & 5 & 2 & 0.4 & 0.0057 & 0.0479 & 0.0131 & 0.0550 & 0.0039 & 0.0485 \\
 &  &  & 1 & 0.0045 & 0.0477 & 0.0180 & 0.0619 & 0.0021 & 0.0479 \\
 &  & 3 & 0.4 & 0.0055 & 0.0479 & 0.0122 & 0.0542 & 0.0039 & 0.0485 \\
 &  &  & 1 & 0.0044 & 0.0476 & 0.0172 & 0.0602 & 0.0021 & 0.0479 \\
 &  & 4 & 0.4 & 0.0055 & 0.0479 & 0.0121 & 0.0543 & 0.0039 & 0.0485 \\
 &  &  & 1 & 0.0043 & 0.0476 & 0.0170 & 0.0600 & 0.0021 & 0.0479 \\
 & 10 & 2 & 0.4 & 0.0009 & 0.0304 & 0.0019 & 0.0317 & 0.0008 & 0.0303 \\
 &  &  & 1 & 0.0008 & 0.0304 & 0.0030 & 0.0342 & -0.0001 & 0.0300 \\
 &  & 3 & 0.4 & 0.0009 & 0.0304 & 0.0019 & 0.0317 & 0.0008 & 0.0303 \\
 &  &  & 1 & 0.0007 & 0.0303 & 0.0029 & 0.0342 & -0.0001 & 0.0300 \\
 &  & 4 & 0.4 & 0.0009 & 0.0304 & 0.0019 & 0.0316 & 0.0008 & 0.0303 \\
 &  &  & 1 & 0.0006 & 0.0303 & 0.0028 & 0.0341 & -0.0001 & 0.0300 \\
100 & 5 & 2 & 0.4 & -0.0027 & 0.0350 & -0.0008 & 0.0361 & -0.0030 & 0.0352 \\
 &  &  & 1 & -0.0027 & 0.0347 & 0.0017 & 0.0391 & -0.0038 & 0.0348 \\
 &  & 3 & 0.4 & -0.0028 & 0.0351 & -0.0011 & 0.0360 & -0.0030 & 0.0352 \\
 &  &  & 1 & -0.0028 & 0.0347 & 0.0014 & 0.0388 & -0.0038 & 0.0348 \\
 &  & 4 & 0.4 & -0.0028 & 0.0351 & -0.0011 & 0.0359 & -0.0030 & 0.0352 \\
 &  &  & 1 & -0.0028 & 0.0347 & 0.0013 & 0.0387 & -0.0038 & 0.0348 \\
 & 10 & 2 & 0.4 & 0.0031 & 0.0221 & 0.0037 & 0.0226 & 0.0031 & 0.0221 \\
 &  &  & 1 & 0.0028 & 0.0220 & 0.0041 & 0.0234 & 0.0025 & 0.0220 \\
 &  & 3 & 0.4 & 0.0031 & 0.0221 & 0.0036 & 0.0226 & 0.0031 & 0.0221 \\
 &  &  & 1 & 0.0028 & 0.0220 & 0.0039 & 0.0232 & 0.0026 & 0.0220 \\
 &  & 4 & 0.4 & 0.0031 & 0.0221 & 0.0034 & 0.0225 & 0.0031 & 0.0221 \\
 &  &  & 1 & 0.0028 & 0.0220 & 0.0037 & 0.0230 & 0.0026 & 0.0220 \\
\hline
\end{tabular}

\small
\flushleft
GMM 1: integrated-GMM estimator with the weight matrix in \eqref{eq:2slsweight}.
GMM 2: integrated-GMM estimator with the identity weight matrix.
\normalsize
\end{table}


\clearpage

\section{Supplementary Material for the Empirical Analysis}\label{app:emp}

\begin{figure}[ht!]
    \centering
    \includegraphics[width = 12cm]{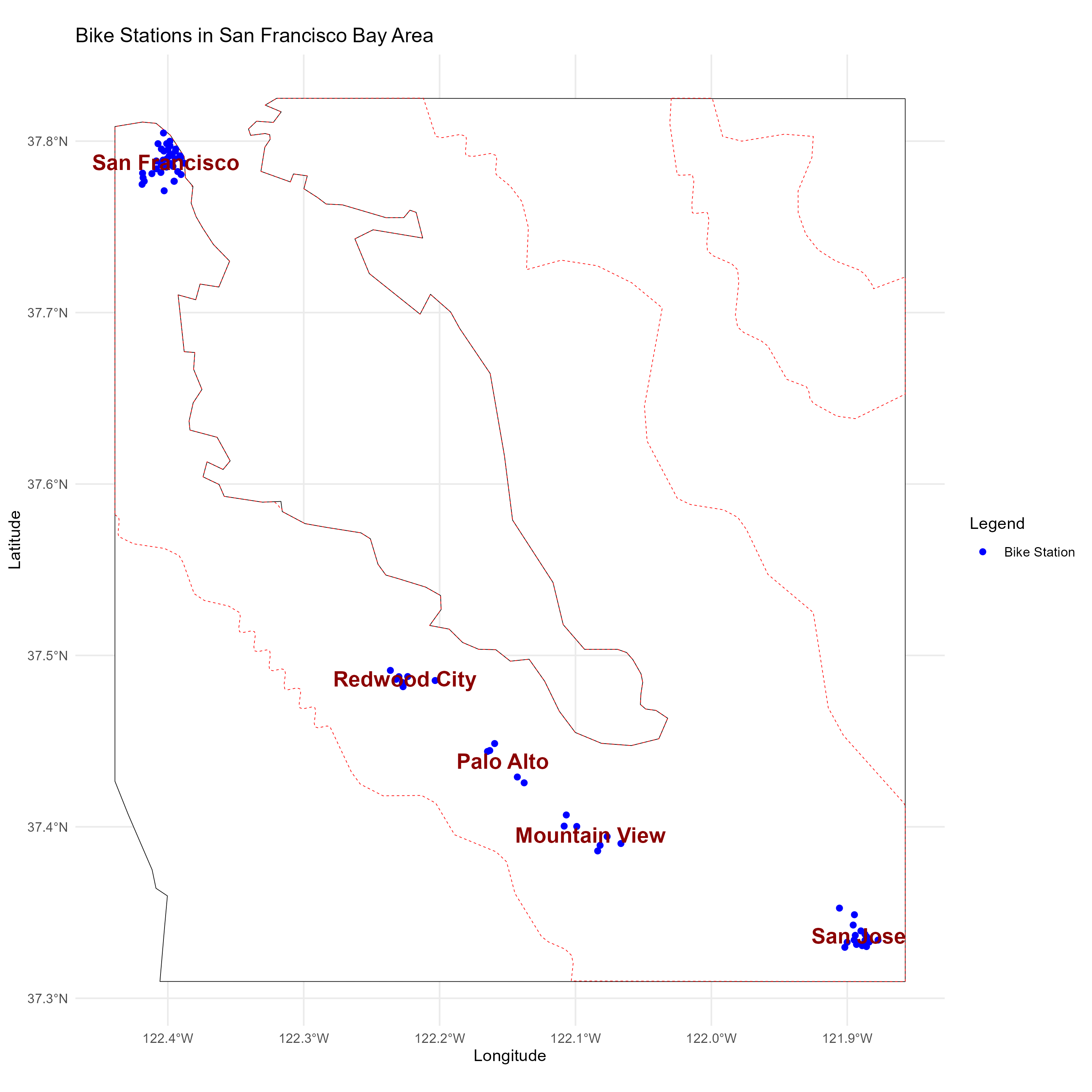}
    \caption{Locations of bike stations}
    \label{fig:map}
\end{figure}

\begin{figure}[ht!]
    \centering
    \includegraphics[width = 15cm]{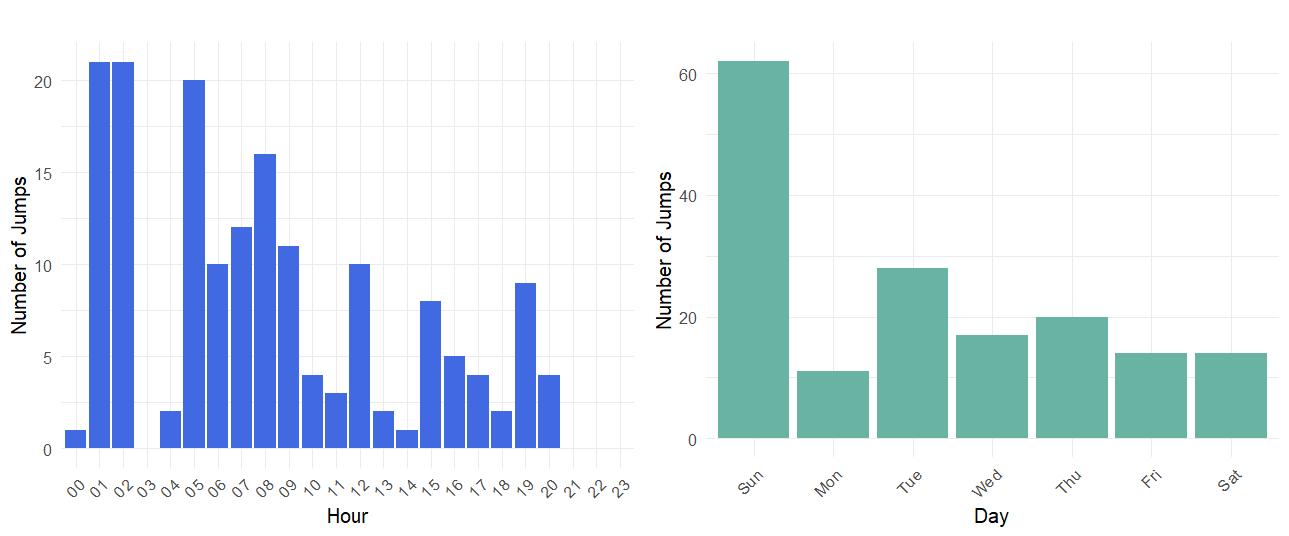}
    \caption{Distribution of potential bike relocation events}
    \label{fig:reloc}
\end{figure}

\begin{figure}
    \centering
    \begin{subfigure}{0.45\textwidth}
        \includegraphics[width=\textwidth]{alpha.png}
        \caption{Spatial interaction parameter}
    \end{subfigure}
    \hfill
    \begin{subfigure}{0.45\textwidth}
        \includegraphics[width=\textwidth]{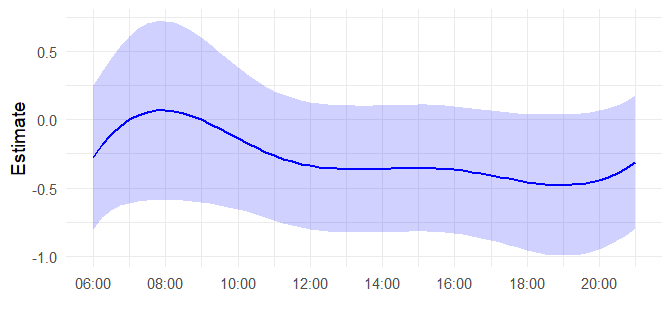}
        \caption{Dynamic interaction parameter}
    \end{subfigure}
    
    \medskip

    \begin{subfigure}{0.45\textwidth}
        \includegraphics[width=\textwidth]{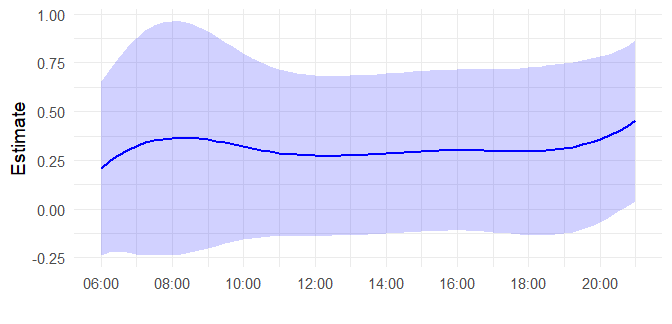}
        \caption{$X^1_{it}$: ratio of round trips}
    \end{subfigure}
    \hfill
    \begin{subfigure}{0.45\textwidth}
        \includegraphics[width=\textwidth]{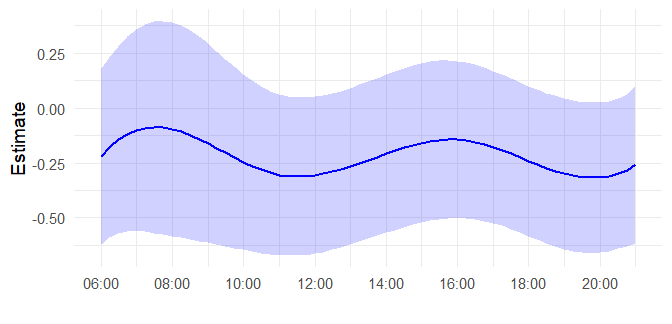}
        \caption{$X^2_{it}$: ratio of departing subscribers}
    \end{subfigure}
    
    \medskip

    \begin{subfigure}{0.45\textwidth}
        \includegraphics[width=\textwidth]{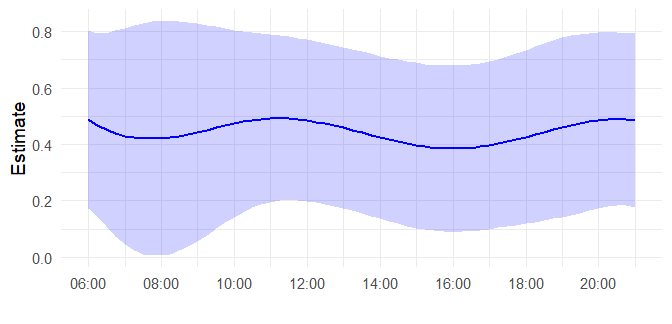}
        \caption{$X^3_{it}$: ratio of arriving subscribers}
    \end{subfigure}
    \hfill
    \begin{subfigure}{0.45\textwidth}
        \includegraphics[width=\textwidth]{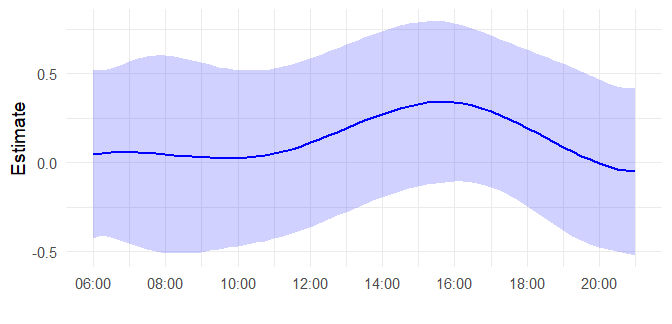}
        \caption{$X^4_{it}$: rainy day}
    \end{subfigure}
    
    \medskip
    
    \begin{subfigure}{0.45\textwidth}
        \includegraphics[width=\textwidth]{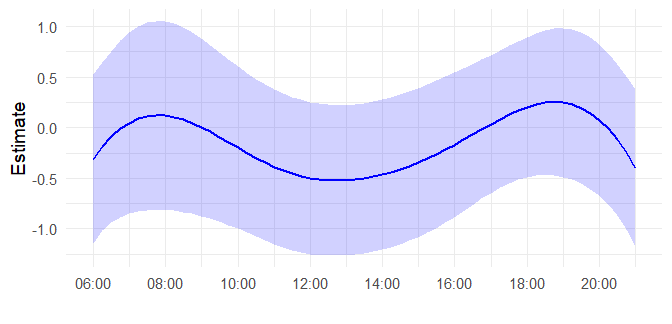}
        \caption{$X^5_{it}$: holiday}
    \end{subfigure}
    \caption{Estimated coefficient functions}
    \label{fig:beta}
\end{figure}

\begin{table}[!ht]
\centering
\caption{Estimation result: non-functional dynamic SAR model with fixed effects}\label{tab:panelNAR}
\small
\begin{tabular}{l|rrr}
\hline\hline
 & Estimate & Standard error & $t$-value \\
\hline
Spatial interaction parameter & 0.1849 & 0.4857 & 0.3807 \\
Dynamic interaction parameter & -0.2645 & 0.2426 & -1.0905 \\
Ratio of round trips & 0.3435 & 0.2295 & 1.4969 \\
Ratio of departing subscribers & -0.1961 & 0.1923 & -1.0199 \\
Ratio of arriving subscribers & 0.4247 & 0.1586 & 2.6775 \\
Rainy day & 0.0556 & 0.2289 & 0.2431 \\
Holiday share & -0.0804 & 0.3546 & -0.2268 \\
Feb & 0.1670 & 0.2368 & 0.7051 \\
Mar & 0.2323 & 0.3123 & 0.7438 \\
Apr & 0.2951 & 0.3775 & 0.7817 \\
May & 0.1398 & 0.3788 & 0.3692 \\
Jun & 0.0417 & 0.3778 & 0.1103 \\
Jul & 0.0067 & 0.3840 & 0.0175 \\
Aug & -0.0128 & 0.3909 & -0.0327 \\
Sep & 0.1440 & 0.4479 & 0.3215 \\
Oct & -0.0745 & 0.3779 & -0.1971 \\
Nov & 0.0678 & 0.3601 & 0.1884 \\
Dec & 0.0973 & 0.2573 & 0.3780 \\
\hline
\end{tabular}

\small
\flushleft
Estimation is performed by 2SLS without quadratic moment conditions.
\normalsize
\end{table}

\clearpage

\section{Proof of Theorem \ref{thm:twoway}}\label{app:twoway}

This appendix outlines the proof of asymptotic normality of the integrated-GMM estimator for the model in \eqref{eq:model_te}.
For simplicity of presentation and ease of comparison, we reuse the same notation as in the main body.
As discussed in Subsection \ref{sub:timeeffect}, our estimator can be defined as follows:
\begin{align}
    \hat \theta_{nT} = (\hat \theta_{nT, \alpha}^\top, \hat \theta_{nT, \gamma}^\top, \hat \theta_{nT, 1}^\top, \ldots, \hat \theta_{nT, d_x}^\top)^\top 
    & \coloneqq \argmin_{\theta \in \Theta_K} \mathcal{Q}_{nT}(\theta), 
\end{align}
where $\mathcal{Q}_{nT}(\theta) \coloneqq \bar g_{nT}(\theta)^\top \Omega_{nT} \bar g_{nT}(\theta)$,
\begin{align}
    g_{nT}(s; \theta) \coloneqq
    \frac{1}{N}\left(\begin{array}{c}
        \bm{Z}(s)^\top \bm D^\top \bm{\calR}^\top \bm{\calR} \bm D \bm{E}(s; \theta) \\
        \bm{E}(s; \theta)^\top \bm D^\top \bm{\calR}^\top P_1 \bm{\calR} \bm D \bm{E}(s; \theta) \\
        \vdots \\
        \bm{E}(s; \theta)^\top \bm D^\top \bm{\calR}^\top P_M \bm{\calR} \bm D \bm{E}(s; \theta)
    \end{array}
    \right),
\end{align}
and $\bar{g}_{nT}(\theta) \coloneqq L^{-1} \sum_{l=1}^L g_{nT}(s_l; \theta)$.
The estimators of $\alpha_0$, $\gamma_0$, and $\beta_0$ can be defined as $\hat \alpha_{nT}(s) \coloneqq \phi^K(s)^\top \hat \theta_{nT, \alpha}$, $\hat \gamma_{nT}(s) \coloneqq \phi^K(s)^\top \hat \theta_{nT, \gamma}$, and $\hat \beta_{nT,j}(s) \coloneqq \phi^K(s)^\top \hat \theta_{nT, j}$, $j \in [d_x]$, respectively.

Then, under the assumptions in Theorem \ref{thm:normality} and Assumption \ref{as:twoway}, we first obtain that, for any given $s \in [0,1]$,
\begin{align}\label{eq:normal_twoway}
    \text{(i)} \;\; & \frac{\sqrt{N}\left( \hat \alpha_{nT}(s) - \alpha_0(s) \right)}{\sigma_{nT,\alpha}(s)} \overset{d}{\to} \mathcal{N}(0,1), \;\; 
    \text{(ii)} \;\; \frac{\sqrt{N}\left( \hat \gamma_{nT}(s) - \gamma_0(s) \right)}{\sigma_{nT,\gamma}(s)} \overset{d}{\to} \mathcal{N}(0,1) \\
    \text{(iii)} \;\; & \frac{\sqrt{N}\left( \hat \beta_{nT, j}(s) - \beta_{0j}(s) \right)}{\sigma_{nT,j}(s)}  \overset{d}{\to} \mathcal{N}(0,1),
\end{align}
where $[\sigma_{nT,\alpha}(s)]^2 \coloneqq \phi^K(s)^\top \mathbb{S}_\alpha \Sigma_{nT} \mathbb{S}_\alpha^\top \phi^K(s)$, $[\sigma_{nT,\gamma}(s)]^2 \coloneqq \phi^K(s)^\top \mathbb{S}_\gamma \Sigma_{nT} \mathbb{S}_\gamma^\top \phi^K(s)$, and $[\sigma_{nT,j}(s)]^2 \coloneqq \phi^K(s)^\top \mathbb{S}_j \Sigma_{nT} \mathbb{S}_j^\top \phi^K(s)$, 
\begin{align}
    \Sigma_{nT} 
    & \coloneqq \left( \bar J^\top_{nT} \Omega_{nT} \bar J_{nT} \right)^{-1} \bar J^\top_{nT} \Omega_{nT} \mathcal{V}_{nT} \Omega_{nT} \bar J_{nT} \left( \bar J^\top_{nT} \Omega_{nT} \bar J_{nT} \right)^{-1}, \\
    \mathcal{V}_{nT} 
    & \coloneqq 
    \left( \begin{array}{cccc}
    \mathcal{V}_{z, nT} & \bm{0}_{(d_q + d_x)K \times 1} & \cdots & \bm{0}_{(d_q + d_x)K \times 1} \\
     \bm{0}_{1 \times (d_q + d_x)K} & \mathcal{V}_{11, nT} & \cdots & \mathcal{V}_{1M, nT} \\
     \vdots &  \vdots & \ddots & \vdots \\
    \bm{0}_{1 \times (d_q + d_x)K} & \mathcal{V}_{M1, nT} & \cdots & \mathcal{V}_{MM, nT}.
    \end{array} \right), \\
    \mathcal{V}_{z, nT}
    & \coloneqq \frac{N}{(NL)^2} \sum_{l = 1}^L \sum_{l' = 1}^L  \bm{Z}(s_l)^\top \bm{D}^\top \bm{ \calR}^\top \bm{ \calR} \bm{D} \bbE[\bm{\mathcal{E}}(s_l) \bm{\mathcal{E}}(s_{l'})^\top]  \bm{D}^\top \bm{\calR}^\top \bm{\calR} \bm{D} \bm{Z}(s_{l'}) \\
    & = \frac{1}{L^2 N} \sum_{l = 1}^L \sum_{l' = 1}^L \sum_{\text{i} = 1}^{\bm n} z_{\text{i}}^\star(s_l) z_{\text{i}}^\star(s_{l'})^\top \Gamma_{\text{i}}(s_l, s_{l'}), \\
    \mathcal{V}_{ab, nT}
    & \coloneqq \frac{N}{(NL)^2} \sum_{l=1}^L \sum_{l'=1}^L \bbE[ \bm{\mathcal{E}}(s_l)^\top \bm{D}^\top \bm{\calR}^\top P_a \bm{\calR} \bm{D} \bm{\mathcal{E}}(s_l) \bm{\mathcal{E}}(s_{l'})^\top \bm{D}^\top \bm{\calR}^\top P_b \bm{\calR} \bm{D} \bm{\mathcal{E}}(s_{l'}) ] \\
    & = \frac{2}{N} \sum_{1 \le \text{i}, \text{k} \le \bm n} \bar p_{a, \text{i}, \text{k}} \bar p_{b, \text{i}, \text{k}} \frac{1}{L^2} \sum_{l = 1}^L \sum_{l' = 1}^L \Gamma_{\text{i}}(s_l, s_{l'}) \Gamma_{\text{k}}(s_l, s_{l'}) ,
\end{align}
$z_\text{i}^\star(s)$ denotes the $\text{i}$-th column of $\bm{Z}(s)^\top \bm{D}^\top \bm{ \calR}^\top \bm{ \calR} \bm{D}$, and $\bar p_{m, \text{i}, \text{j}}$ is the $(\text{i}, \text{j})$-th element of $\bm{D}^\top \bm{\calR}^\top P_m \bm{\calR} \bm{D}$.
The definition of $\bar J_{nT}$ should be clear from the context.

Here, Assumption \ref{as:matrix1} is to be understood with $\Pi_{nT}$ re-defined as 
\begin{align}
    \Pi_{nT} \coloneqq (NL)^{-1} \sum_{l = 1}^L \bm{Z}(s_l)^\top \bm{D}^\top \bm{ \calR}^\top \bm{ \calR} \bm{D} \bbE[\bm{H}(s_l)],
\end{align}
which serves as the main identification condition in this model.
Similarly, Assumption \ref{as:matrix2}(ii) and (iii) are to be understood in terms of $\bar J_{nT}$ and $\mathcal{V}_{nT}$ re-defined here.

\bigskip

The proof of \eqref{eq:normal_twoway} is almost analogous to that of Theorem \ref{thm:normality}.
For example, by treating the matrix $\bm{\calR}^\top P_m \bm{\calR}$ in the same way as the matrix $P_m$ in the original model, the stochastic order evaluations regarding the quadratic moments can be reused without any changes.
That said, several lemmas require additional analysis or adjustment.
In what follows, rather than reproducing the entire proof of \eqref{eq:normal_twoway}, we focus only on the parts that require revision.

\paragraph{Lemma \ref{lem:cov}'}:
Suppose that Assumption \ref{as:error}(i) holds.
Let $\{\xi_{it}: (i,t) \in \mathcal{D}_{nT}^{\mathrm{ST}}\}$ be a uniformly and geometrically $L^2$-NED random field on $\{\varepsilon_{it}: (i,t) \in \mathcal{D}_{nT}^{\mathrm{ST}}\}$.
Denote $\bar \xi_{it} \coloneqq \sum_{k = 1}^n w_{i,k}\xi_{kt}$, $\ddot \xi_{it} \coloneqq \xi_{i,t+1}-\bar \xi_{i,t+1}-(\xi_{it}-\bar \xi_{it})$, and $C_\xi \coloneqq \sup_{i,t}\left\|\xi_{it}\right\|_2$.
Then, for all $i,j \in \mathcal{D}_n$ and $t,u \in [T-1]$,
\begin{align}
    \left|\text{Cov}\left(\ddot \xi_{it},\ddot \xi_{ju}\right)\right| \lesssim C_\xi^2 \rho\left(\Delta_{\mathrm{ST}}\left((i,t),(j,u)\right)/3\right)
\end{align}
with some geometric NED coefficient $\rho$.

\begin{proof}
    Decompose $\ddot \xi_{it} = \ddot \xi_{1,it}^{(\delta)}+\ddot \xi_{2,it}^{(\delta)}$, where
    \begin{align}
        \ddot \xi_{1,it}^{(\delta)} \coloneqq \bbE\left[\ddot \xi_{it}\mid\mathcal{F}_{nT,it}^{+}(\delta)\right], \;\; \text{and} \;\; \ddot \xi_{2,it}^{(\delta)} \coloneqq \ddot \xi_{it}-\bbE\left[\ddot \xi_{it}\mid\mathcal{F}_{nT,it}^{+}(\delta)\right],
    \end{align}
    where $\mathcal{F}_{nT,it}^{+}(\delta) \coloneqq \mathcal{F}_{nT,it}^{\mathrm{ST}}(\delta)\lor\mathcal{F}_{nT,i,t+1}^{\mathrm{ST}}(\delta)$.

    For each pair $\ddot \xi_{it}$ and $\ddot \xi_{ju}$, denote $\delta_{it,ju} \coloneqq \Delta_{\mathrm{ST}}((i,t),(j,u))/3$.
    If $\Delta_{\mathrm{ST}}((i,t),(j,u)) \le 3\bar \Delta_{\mathrm{ST}}$, the desired result follows immediately from the Cauchy--Schwarz inequality and $\left\|\ddot \xi_{it}\right\|_2 \lesssim C_\xi$.
    Hence, suppose that $\Delta_{\mathrm{ST}}((i,t),(j,u))>3\bar \Delta_{\mathrm{ST}}$.

    The minimum distance between a point in $\{(i,t),(i,t+1)\}$ and a point in $\{(j,u),(j,u+1)\}$ is at least $\Delta_{\mathrm{ST}}((i,t),(j,u))-1$.
    Since $\bar \Delta_{\mathrm{ST}}\ge 1$,
    \begin{align}
        2\delta_{it,ju} = \frac{2}{3}\Delta_{\mathrm{ST}}\left((i,t),(j,u)\right)<\Delta_{\mathrm{ST}}\left((i,t),(j,u)\right)-1.
    \end{align}
    Thus, $\mathcal{F}_{nT,it}^{+}(\delta_{it,ju})$ and $\mathcal{F}_{nT,ju}^{+}(\delta_{it,ju})$ are generated by disjoint collections of errors and are therefore independent under Assumption \ref{as:error}(i).

    It follows that
    \begin{align}
        \left|\text{Cov}\left(\ddot \xi_{it},\ddot \xi_{ju}\right)\right|
        & = \left|\text{Cov}\left(\ddot \xi_{1,it}^{(\delta_{it,ju})}+\ddot \xi_{2,it}^{(\delta_{it,ju})},\ddot \xi_{1,ju}^{(\delta_{it,ju})}+\ddot \xi_{2,ju}^{(\delta_{it,ju})}\right)\right| \\
        & \le \left|\text{Cov}\left(\ddot \xi_{1,it}^{(\delta_{it,ju})},\ddot \xi_{1,ju}^{(\delta_{it,ju})}\right)\right|+\left|\text{Cov}\left(\ddot \xi_{1,it}^{(\delta_{it,ju})},\ddot \xi_{2,ju}^{(\delta_{it,ju})}\right)\right| \\
        & \quad+\left|\text{Cov}\left(\ddot \xi_{2,it}^{(\delta_{it,ju})},\ddot \xi_{1,ju}^{(\delta_{it,ju})}\right)\right|+\left|\text{Cov}\left(\ddot \xi_{2,it}^{(\delta_{it,ju})},\ddot \xi_{2,ju}^{(\delta_{it,ju})}\right)\right|.
    \end{align}
    The first term on the right-hand side is zero by Assumption \ref{as:error}(i).
    By Jensen's and triangle inequalities,
    \begin{align}
        \left\|\ddot \xi_{1,it}^{(\delta_{it,ju})}\right\|_2
        & \le \left\|\ddot \xi_{it}\right\|_2 \\
        & \le \left\|\xi_{i,t+1}\right\|_2+\sum_{k = 1}^n |w_{i,k}|\left\|\xi_{k,t+1}\right\|_2+\left\|\xi_{it}\right\|_2+\sum_{k = 1}^n |w_{i,k}|\left\|\xi_{kt}\right\|_2 \lesssim C_\xi.
    \end{align}
    In addition, since $\{\xi_{it}\}$ is uniformly and geometrically $L^2$-NED,
    \begin{align}
        \left\|\ddot \xi_{2,it}^{(\delta_{it,ju})}\right\|_2
        & = \left\|\ddot \xi_{it}-\bbE\left[\ddot \xi_{it}\mid\mathcal{F}_{nT,it}^{+}(\delta_{it,ju})\right]\right\|_2 \\
        & \le \left\|\xi_{i,t+1}-\bbE\left[\xi_{i,t+1}\mid\mathcal{F}_{nT,i,t+1}^{\mathrm{ST}}(\delta_{it,ju})\right]\right\|_2+\left\|\xi_{it}-\bbE\left[\xi_{it}\mid\mathcal{F}_{nT,it}^{\mathrm{ST}}(\delta_{it,ju})\right]\right\|_2 \\
        & \quad+\sum_{k = 1}^n |w_{i,k}|\left\|\xi_{k,t+1}-\bbE\left[\xi_{k,t+1}\mid\mathcal{F}_{nT,k,t+1}^{\mathrm{ST}}(\delta_{it,ju}-\bar \Delta_{\mathrm{ST}})\right]\right\|_2 \\
        & \quad+\sum_{k = 1}^n |w_{i,k}|\left\|\xi_{kt}-\bbE\left[\xi_{kt}\mid\mathcal{F}_{nT,kt}^{\mathrm{ST}}(\delta_{it,ju}-\bar \Delta_{\mathrm{ST}})\right]\right\|_2 \\
        & \lesssim C_\xi\rho(\delta_{it,ju}).
    \end{align}
     Then, the rest of the proof is analogous to Lemma \ref{lem:cov}.
\end{proof}

\paragraph{Lemma \ref{lem:matLLN}'}: 
\small\begin{itemize}
        \item[(i')] $\left\| \sum_{l = 1}^L \bm{Z}(s_l)^\top \bm{D}^\top \bm{ \calR}^\top \bm{ \calR} \bm{D} (\bm{H}(s_l) - \bbE[\bm{H}(s_l)]) (\theta_0 - \theta) / (NL) \right\| \lesssim_p \sqrt{K}/\sqrt{nT}$
        \item[(iv')] $\left\| \sum_{l = 1}^L \bm{Z}(s_l)^\top \bm{D}^\top \bm{ \calR}^\top \bm{\calR} \bm{D} \bm{\mathcal{E}}(s_l) / (NL) \right\| \lesssim_p 1/\sqrt{nT}$
        \item[(v')] $\left| \sum_{l = 1}^L \left\{ \bm{E}(s_l; \theta)^\top \bm{D}^\top  \bm{\calR}^\top P_m \bm{\calR} \bm{D} \bm{E}(s_l; \theta) - \bbE[\bm{E}(s_l; \theta)^\top \bm{D}^\top \bm{\calR}^\top P_m \bm{\calR} \bm{D} \bm{E}(s_l; \theta)] \right\} / (NL) \right| \lesssim_p 1/\sqrt{nT}$.
\end{itemize}\normalsize

\begin{proof}

(i') Observe that for each $s_l$,
\begin{align}
    \left\| \bm{Z}(s_l)^\top \bm{D}^\top \bm{\calR}^\top \bm{\calR} \bm{D} (\bm{H}(s_l) - \bbE[\bm{H}(s_l)]) (\theta_0 - \theta) \right\|
    & \le \left\| \sum_{i = 1}^n \sum_{t = 1}^{T - 1} \ddot B_{it} \otimes \phi^K(s_l) \left[\ddot a_{1,it}(s_l) - \bbE[\ddot a_{1,it}(s_l)]\right] \alpha(s_l; \theta_0 - \theta) \right\| \\
    & \quad + \left\| \sum_{i = 1}^n \sum_{t = 1}^{T - 1} \ddot B_{it} \otimes \phi^K(s_l) \left[\ddot a_{2,it}(s_l) - \bbE[\ddot a_{2,it}(s_l)]\right] \gamma(s_l; \theta_0 - \theta) \right\|.
\end{align}
As a typical element, the variance of the first term involving $a_{1,it}$ is bounded as
\small\begin{align}
    \bbE\left( \sum_{i = 1}^n \sum_{t = 1}^{T - 1} \ddot Q^1_{it} \phi_1(s_l) \left(\ddot a_{1,it} - \bbE[\ddot a_{1,it}]\right) \alpha(s_l; \theta_0 - \theta) \right)^2 \lesssim \sum_{t = 1}^{T - 1} \sum_{t' = 1}^{T - 1} \sum_{i = 1}^n \sum_{i' = 1}^n \left| \text{Cov}\left(\ddot a_{1,it}, \ddot a_{1,i't'}\right) \right| \lesssim nT,
\end{align}\normalsize
where the last inequality follows from Lemmas \ref{lem:NED2} and \ref{lem:cov}' by the same argument as in the proof of Lemma \ref{lem:matLLN}(i).
The same argument applies to the terms involving $a_{2,it}$, and the desired result follows from Markov's and triangle inequalities.

\bigskip

(iv') By the triangle inequality,
\begin{align}
    \left\| \sum_{l = 1}^L \bm{Z}(s_l)^\top \bm{D}^\top \bm{ \calR}^\top \bm{ \calR} \bm{D} \bm{\mathcal{E}}(s_l) / (NL) \right\| 
    & = \left\| \sum_{l = 1}^L \sum_{t = 1}^{T-1} \sum_{i = 1}^n \ddot B_{it} \otimes  \phi^K(s_l) \ddot \varepsilon_{it}(s_l) / (NL) \right\| \\
    &  \le \left\| \sum_{l = 1}^L \sum_{t = 1}^{T - 1} \sum_{i = 1}^n \ddot B_{it} \otimes \varepsilon_{i,t+1}(s_l) \phi^K(s_l) / (NL) \right\| \\
    & \quad + \left\| \sum_{l = 1}^L \sum_{t = 1}^{T - 1} \sum_{i = 1}^n \ddot B_{it} \otimes \varepsilon_{it}(s_l) \phi^K(s_l) / (NL) \right\| \\
    &  \quad + \left\| \sum_{l = 1}^L \sum_{t = 1}^{T - 1} \sum_{i = 1}^n \sum_{j = 1}^n \ddot B_{it} w_{i,j} \otimes \varepsilon_{j,t+1}(s_l) \phi^K(s_l) / (NL) \right\| \\
    & \quad + \left\| \sum_{l = 1}^L \sum_{t = 1}^{T - 1} \sum_{i = 1}^n \sum_{j = 1}^n \ddot B_{it} w_{i,j} \otimes \varepsilon_{jt}(s_l) \phi^K(s_l) / (NL) \right\|.
\end{align}
The same argument as in the proof of Lemma \ref{lem:matLLN}(iv) shows that the first two terms on the right hand side are stochastically of order $1/\sqrt{nT}$.
For the third and fourth terms, observe that
\small\begin{align}
    & \bbE  \left\| \sum_{l = 1}^L \sum_{t = 1}^{T - 1} \sum_{i = 1}^n \sum_{j = 1}^n \ddot B_{it} w_{i,j} \otimes \varepsilon_{jt}(s_l) \phi^K(s_l) / (NL) \right\|^2 \\
    &  = \frac{1}{N^2} \sum_{t = 1}^{T-1} \sum_{i = 1}^n \sum_{i' = 1}^n \sum_{j = 1}^n  \text{trace}\left\{ (\ddot B_{it} w_{i,j}) (\ddot B_{i't}w_{i',j})^\top \otimes \frac{1}{L^2} \sum_{l = 1}^L \sum_{l' = 1}^L \Gamma_{jt}(s_l, s_{l'}) \phi^K(s_l)\phi^K(s_{l'})^\top \right\} \\
    & =  \frac{1}{N^2} \sum_{t = 1}^{T -1} \sum_{j = 1}^n \text{trace}\left\{ \left(\sum_{i = 1}^n \ddot B_{it} w_{i,j}\right) \left(\sum_{i' = 1}^n \ddot B_{i't}w_{i',j}\right)^\top \right\}\text{trace}\left\{ \frac{1}{L^2} \sum_{l = 1}^L \sum_{l' = 1}^L \Gamma_{jt}(s_l, s_{l'}) \phi^K(s_l)\phi^K(s_{l'})^\top\right\} \\
    & \lesssim 1/(nT).
\end{align}\normalsize
Then, the result follows from Markov's inequality.

\bigskip

(v') Let $\check P_{m,1} \coloneqq  \calR_n^\top P_{m,1}  \calR_n$ and $\check P_m \coloneqq I_{T - 1} \otimes \check P_{m,1}$.
Then, each $(i,j)$-th element of $\check P_{m,1}$ can be written as
\begin{align}
    \check p_{m,i,j} = \sum_{a = 1}^n \sum_{b = 1}^n r_{a,i} p_{m,a,b} r_{b,j}.
\end{align}
By Assumptions \ref{as:sample_space}(ii) and \ref{as:weights}(i), $\check p_{m,i,j} \neq 0$ requires the existence of at least one pair $(a,b)$ satisfying $\Delta(i,a) \le \bar \Delta$, $\Delta(a,b) \le \bar \Delta_m$, and $\Delta(j, b) \le \bar \Delta$, simultaneously.
In other words, no such pairs exist if $\Delta(i,j) > 2\bar \Delta + \bar \Delta_m$.
Hence, $\check P_{m,1} = (\check p_{m,i,j})$ satisfies the same threshold condition as in Assumption \ref{as:weights}(i), and thus (v') can be proved analogously to Lemma \ref{lem:matLLN}(v).
\end{proof}

Now, the rest of the proof of \eqref{eq:normal_twoway} proceeds in the same way as that of Theorem \ref{thm:normality}.

\bigskip

Consistent variance estimation can be performed as in Proposition \ref{prop:var}.
Specifically, with an abuse of notation, define
\begin{align}
    \hat{\mathcal{V}}_{z, nT}
    & \coloneqq \frac{1}{L^2 N} \sum_{l = 1}^L \sum_{l' = 1}^L \sum_{t = 1}^{T - 1} \sum_{t': \: |t' - t| \le 1} \sum_{i = 1}^n \check z_{it}(s_l) \check z_{it'}(s_{l'})^\top \vec{\hat e}_{it}(s_l) \vec{\hat e}_{it'}(s_{l'}) \\
    \hat{\mathcal{V}}_{ab, nT}
    & \coloneqq \frac{2}{L^2 N} \sum_{l = 1}^L \sum_{l' = 1}^L \sum_{t = 1}^{T - 1} \sum_{t': \: |t' - t| \le 1} \sum_{1 \le i, j \le n} \check p_{a, i, j} \check p_{b, i, j} \vec{\hat e}_{it}(s_l) \vec{\hat e}_{it'}(s_{l'}) \vec{\hat e}_{jt}(s_l) \vec{\hat e}_{jt'}(s_{l'}),
\end{align}
where $\check z_{it}(s)$ denotes the transpose of the corresponding $(i,t)$-th row of $\bm{\calR}^\top \bm{\calR}\bm{D}\bm{Z}(s)$.
Similarly, update the definitions of $\hat \sigma_{nT,\alpha}(s)$, $\hat \sigma_{nT,\gamma}(s)$, and $\hat \sigma_{nT,j}(s)$ using the re-defined $\hat{\mathcal{V}}_{z, nT}$ and $\hat{\mathcal{V}}_{ab, nT}$.
Then, following the same argument as in Proposition \ref{prop:var}, we can show the consistency of $\hat{\mathcal{V}}_{z, nT}$ and $\hat{\mathcal{V}}_{ab, nT}$.
Finally, Theorem \ref{thm:twoway} follows from Slutsky's theorem.

\clearpage 
\small
\bibliography{references.bib}
\end{document}